\newtheorem{theorem}{Theorem}
\newtheorem{lemma}{Lemma}
\newtheorem{corollary}{Corollary}
\DeclareMathOperator*{\argmax}{arg\,max}
\def\P{{\mathbb P}}  
\def\E{{\mathbb E}}  
\def\A{A_}
\def\D{D_}
\def\d{{age limit}}
\def\prob{{$\alpha$-UBMP}}
\def\Chernoff{{Chernoff-UBMP}}
\def\nR{{\hat{n}_\text{R}}}
\newcommand{\jpcolor}[1]{{\color{black} #1}}
\newcommand{\jpcol}[1]{{\color{black} #1}}
\begin{document}
	
	\title{Statistical Guarantee Optimization for AoI in Single-Hop and Two-Hop FCFS Systems with Periodic Arrivals}
	\author{Jaya~Prakash~Champati,~\IEEEmembership{Member,~IEEE,}
		Hussein~Al-Zubaidy,~\IEEEmembership{Senior~Member,~IEEE,}
		and~James~Gross,~\IEEEmembership{Senior~Member,~IEEE}
		\thanks{J.P. Champati, H. Al-Zubaidy, and J. Gross are with the School of Electrical Engineering and Computer Science, KTH Royal Institute of Technology, Stockholm,
			Sweden (e-mail: \{jpra,hzubaidy,jamesgr\}@kth.se).
			This research was supported in part by the Swedish Research Council (VR) under grant 2016-04404. This work is recently published in IEEE Transactions on Communications.}
	}
	
	\makeatletter
	\def\ps@IEEEtitlepagestyle{
		\def\@oddfoot{\mycopyrightnotice}
		\def\@evenfoot{}
	}
	\def\mycopyrightnotice{
		{\footnotesize
			\begin{minipage}{\textwidth}
				\centering
				0090-6778~\copyright 2020 IEEE. Personal use is permitted, but republication/redistribution requires IEEE permission.\\
				See https://www.ieee.org/publications/rights/index.html for more information.
			\end{minipage}
		}
	}
	\maketitle
	
	\begin{abstract}
		Age of Information (AoI) has proven to be a useful metric in networked systems where timely information updates are of importance. 
		In the literature, minimizing ``average age" has received considerable attention. 
		However, various applications pose stricter age requirements on the updates which demand knowledge of the AoI distribution.  Furthermore, the analysis of AoI distribution in a multi-hop setting, which is important for the study of Wireless Networked Control Systems (WNCS), has not been addressed before.   
		Toward this end, we study the distribution of AoI in a WNCS with two hops and devise a problem of minimizing the tail of the AoI distribution with respect to the frequency of generating information updates, i.e., the sampling rate of monitoring a process, under first-come-first-serve (FCFS) queuing discipline. 
		We argue that computing an exact expression for the AoI distribution may not always be feasible; therefore, we opt for computing upper bounds on the tail of the AoI distribution. Using these upper bounds, we formulate Upper Bound Minimization Problems (UBMP), namely, Chernoff-UBMP and $\alpha$-relaxed Upper Bound Minimization Problem (\prob), where $\alpha > 1$ is an approximation factor, and solve them to obtain ``good'' heuristic rate solutions for minimizing the tail. 
		We demonstrate the efficacy of our approach by solving the proposed UBMPs for three service distributions: geometric, exponential, and Erlang. 
		Simulation results show that the rate solutions obtained are near optimal for minimizing the tail of the AoI distribution for the considered distributions.    
	\end{abstract}
	
	\begin{IEEEkeywords}
		Age of Information; tail distribution; deterministic arrivals; rate optimization; stochastic network calculus; multi-hop networking  
	\end{IEEEkeywords}

	\section{Introduction}
	In the recent past, there has been an ever increasing interest in studying Wireless Networked Control Systems (WNCS) that support time-critical-control applications which include, among many others, autonomous vehicle systems, automation of manufacturing processes, smart grids, Internet-of-Things (IoT), sensor networks and augmented reality.
	A basic building block in WNCS is depicted in Figure~\ref{fig:NCS}. A sensor samples a plant/process of interest and transmits the status updates or packets over a wireless channel (link $1$) to a controller. The controller computes a control input using the received status update and transmits it to an actuator, using another communication channel (link $2$). 
	A status update that is received at the controller after a certain duration of its generation time may become stale, and the control decision taken based on this stale sample may result in untimely actuation affecting the performance of a time-critical-control application in a WNCS. Similarly, the same effect could result from a control decision (based on a fresh status update) reaching the actuator after a delay deadline. 
	In this respect, the traditional goal of maximizing throughput becomes less relevant as freshness of the status updates not only depends on queuing and transmission delays in the network, but also on the frequency of generating updates at the source. 
	
	\begin{figure}
		\centering
		\includegraphics[width = 3.2in]{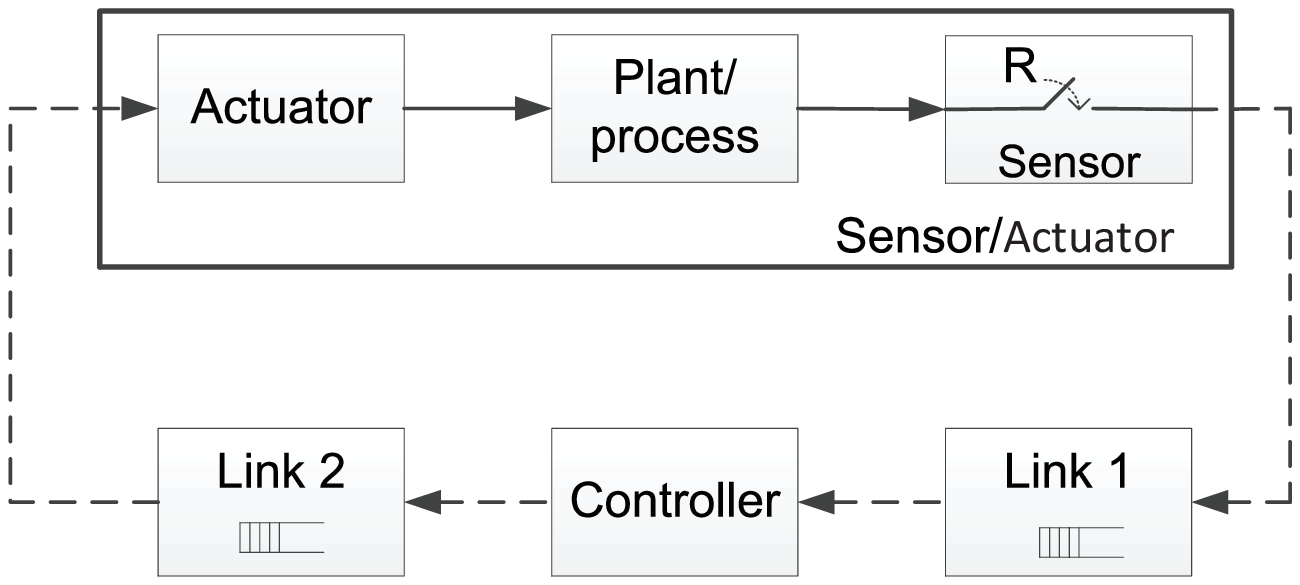}
		\caption{A networked control system with a remote controller.}
		\label{fig:NCS}
		\vspace{-.5cm}
	\end{figure}
	
	Age of Information (AoI), proposed in~\cite{kaul_2011a}, has emerged as a relevant performance metric in quantifying the freshness of the status updates at a destination. It is defined as the time elapsed since the generation of the latest status update received at the destination. AoI accounts for the frequency of generation of updates by the source, since it linearly increases with time until a status update with latest generation time is received at the destination. Whenever such an update is received, AoI resets to the system delay of the update indicating its age. 
	\jpcolor{Motivated by the fact that having access to fresher status updates improves the control performance in WNCS, we model the control network by a two-hop FCFS queuing system  and formulate a problem of computing optimal sampling rate that minimizes AoI in this system\footnote{A preliminary version of this work only considering single-hop scenario appeared in~\cite{Champati_DG1_2018}.}. Several research works in the recent past addressed the problem of optimizing sampling rate in different queuing systems under various settings. However, as we explain in Section~\ref{sec:related}, these works either consider a single-hop system or memoryless arrivals or some form of ``average age'' function. In contrast, we consider two novel aspects that are relevant to time-critical-control applications. First, we consider periodic arrival process by assuming that the process of interest is sampled at a constant rate $R$. This is motivated by the fact that sensors in practice are typically configured to generate samples periodically. Second, since optimizing average statistics of AoI may not meet stringent QoS requirements, for instance, in a safety-critical system~\cite{SafetyRequirements}, we consider optimizing \textit{AoI violation probability}, i.e., the probability that AoI at the actuator violates a given \textit{age limit} $d$. This metric represents, for example, a reliability measure constraint that is required at the actuator to insure that the state of the plant  remains within a predetermined safety boundary. Furthermore, in a WNCS, an absolute guarantee (i.e., reliability of 1) may not be possible due to variability of the wireless channel and only probabilistic guarantees can be provided. This motivated us to use the distribution of AoI as a metric rather than other frequently used metrics in the literature, e.g., peak AoI and average AoI. 
		
		We consider a heterogeneous network, i.e., server at the first queue and server at the second queue may have different service-time distributions. The queues operate using First-Come-First-Serve (FCFS) scheduling discipline. \jpcolor{We note that, in the AoI literature, different scheduling disciplines are considered: for example~\cite{kaul_2012b,Huang_2015,Talak_2018b} considered FCFS,~\cite{kaul_2012a,Bedewy_2017a,Yates2018} considered LCFS, and~\cite{Costa_2016,Champati_GG1_2019} considered packet management schemes such as using a unit capacity queue with packet replacement. 
			{\color{black}
				Our motivation for considering FCFS discipline in this work is the following. First, analysis and optimization of AoI violation probability under FCFS is an open problem. Second, it is not only an interesting problem from academic (queuing-theoretic) point of view, due to FCFS being more intuitive and hence such analysis being more comprehensible, but also important from practical point of view as most queues in practice operate under FCFS.
				Third,  key insights, e.g.,  as the sampling rate $R$ increases, in contrast to \textit{delay}, AoI first decreases and then increases~\cite{kaul_2012b}, that are established under FCFS discipline may be extended to other disciplines as well.
				Lastly, the analysis of many important queuing disciplines can be based on, and sometimes directly derived from, that of FCFS, i.e., by introducing a queue reordering stage based on arrival instance, priority, or some fairness parameter before serving the head of the queue. We believe that our analysis can potentially be extended to such queuing disciplines in future works.   
			}
		}  
		
		

		Assuming that the processing time at the controller is negligible, we aim to compute $R$ that minimizes the AoI \textit{violation probability} at the egress of the second queue. As we will see in a while, an exact expression for the AoI violation probability in a two-hop network with periodic arrivals and general service-time distributions is intractable.} Therefore, we resort to working with tractable upper bounds which facilitate the computation of ``good'' heuristic solutions. In particular, we first compute the upper bounds for the single-hop case, i.e., the D/G/1 queue, due to its relevance in applications where both controller and actuator are collocated. We formulate the Upper Bound Minimization Problems (UBMP) and \jpcolor{use them to compute heuristic rate solutions for AoI violation probability minimization}. We then extend the results for two-hop and N-hop tandem queuing systems using max-plus convolution for the service processes.


	The main contributions of this work are summarized below:
	\begin{itemize}
		\item \jpcolor{We characterize the probability that AoI violates a given \d~$d$ for a single source single destination multi-hop network under FCFS, and assuming periodic source where packets are generated at a constant rate $R$.}
		\item We formulate the AoI violation probability minimization problem $\mathcal{P}$, and show that it is equivalent to minimizing the violation probability of the departure instant of a \jpcolor{tagged packet (defined in Section V)} over the rate region $[\frac{1}{d},\mu)$, where $\mu$ is the service capacity of the network. 
		\item Using the above characterization, we first propose a UBMP for the single-hop scenario, i.e., the D/G/1 queue. Noting that the objective function in the UBMP can be intractable, we propose a Chernoff-UBMP, that has a closed-form objective, and an $\alpha$-relaxed UBMP the solution of which has $\alpha > 1$ approximation ratio \jpcolor{(worst-case ratio)} with respect to \jpcolor{the objective function of the} UBMP.
		\item We extend the derived results and formulations for the two-hop queuing system and $N$-hop tandem queuing system, and present example computation of the expressions for the case of two-hop for geometric, exponential, and Erlang service-time distributions.
		\item We demonstrate the efficacy of the heuristic solutions provided by Chernoff-UBMP and $\alpha$-relaxed UBMP using simulation for different service-time distributions. \jpcol{Finally, we present simulation results comparing the performance of FCFS with queue management policies that use unit buffer and packet replacement.}
	\end{itemize}
	
	The rest of the paper is organized as follows. In Section~\ref{sec:related}, we present the related work. In Section~\ref{sec:model}, we present the problem formulation. Analysis of the AoI violation probability is presented in Section~\ref{sec:vioprob}. The UBMP formulations for single-hop, two-hop and N-hop scenarios are presented in Sections~\ref{sec:singleHop} and~\ref{sec:twohop}, respectively. We present the computation of the upper bounds for different service-time distributions in Section~\ref{sec:exampleDis}. Numerical results are presented in Section~\ref{sec:numerical} and we finally conclude in Section~\ref{sec:conclusion}.

	\section{Related Work}\label{sec:related}
	Several works in the AoI literature have focused on analyzing and providing expressions for average AoI statistics in different queuing systems, e.g., see~\cite{kaul_2012a,Chen2016,Najm_2016,Najm_2017,Soysal2019}. \jpcolor{The authors in~\cite{Costa_2016} studied the M/M/1/1 and M/M/1/2*\footnote{A unit capacity queue that holds the latest update.} systems, and computed the average AoI and the distribution of the peak AoI. In contrast, the authors in~\cite{Yoshiaki2018,Champati_GG1_2019} provided expressions for the distribution of AoI. However, for the case of periodic arrivals, closed-form expressions are provided only for single-hop scenario and for exponential service times in~\cite{Yoshiaki2018}, and for the case of no queue in~\cite{Champati_GG1_2019}. Next, we summarize works that consider optimizing AoI under different system settings. An interested reader may also refer to~\cite{KostaMonograph_2017} and \cite{SunMonograph2020} for a comprehensive survey of recent work in this area.
		
		
		In~\cite{kaul_2012b}, the authors have addressed the problem of computing the optimal arrival rate to minimize the \textit{time-average age} for M/M/1, M/D/1 and D/M/1 queuing systems. This problem was addressed for M/M/1 with multiple sources in~\cite{yates_2012a}. Several research works that followed considered different design choices including the arrival rate~\cite{Huang_2015}, inter-arrival time distribution for a given arrival rate and/or service-time distribution for a given service rate~\cite{Talak_2018a,Soysal2019,Talak2019a,Talak2019b}, under different scheduling disciplines and optimized average AoI or average peak AoI in a single-source-single-server system. \jpcol{In~\cite{Bedewy2019}, preemptive Last Generated First Served (LGFS) policy was shown to minimize the age process in a multi-server single-hop system with exponential service times.}  
		An alternative approach to the above works, the generate-at-will source model was studied in~\cite{yates_2015a,Sun_2017,Champati_2020}, where generation of a status update can be completely controlled. While the authors in~\cite{yates_2015a} solved for optimal-waiting times between generation times to minimize the average AoI, the authors in~\cite{Sun_2017} solved the problem for any non-decreasing function of AoI, and the authors in~\cite{Champati_2020} solved the problem of minimum achievable peak AoI in any single-source-single-server system. The authors in~\cite{Talak_2018b} studied average AoI and average peak AoI minimization for multiple source-destination links in a wireless network with interference constraints. They used the method of minimizing upper bounds as a means to show that optimal rate design and optimal link scheduling can be separated and provided performance guarantees for the proposed solutions.
		
		In addition to the above, the following literature considered multi-hop settings. \jpcol{For a line network with a single source and no queues, under Poisson arrivals and exponential service times, expressions were derived in~\cite{Yates2018} for moments, Moment Generating Function (MGF), and stationary distribution of AoI for preemptive last-come-first-served policy. In~\cite{Bedewy_2017a}, optimal queuing policies were investigated for a multi-hop network for any arrival sequence and service-time distributions. It was shown that, among non-preemptive policies, LGFS minimizes age processes, in stochastic ordering sense, at all the nodes.} The authors in~\cite{Talak2017,Farazi2019} studied average AoI and average peak AoI minimization in a multi-hop wireless network with interference constraints and with packet flows between multiple source-destination pairs assuming that transmission time of a packet equals a unit time slot. The authors in~\cite{Buyukates2018} studied average AoI for $L$-hop multicast network with a single source, $n$ nodes in the first hop, $n^2$ nodes in the second hop, and so on $n^L$ nodes in the the last hop, with each node having a shifted exponential service time.}
	
	Optimizing AoI was also extensively studied for the systems with energy-harvesting source, e.g., see~\cite{yates_2015a,Bacinoglu_2015a,Bacinoglu_2019}. In the context of a cloud gaming system the authors in~\cite{Yates2017} used the D/G/1 system model to study the effect of freshness on video frame rendering to the client. Specifically, they have analyzed the average age by considering the aspect of missing frames. In contrast to all the above works, with motivations from the sensor-controller-actuator system in WNCS we study the problem of AoI violation probability minimization in a two-hop queuing system with periodic arrivals.

	\section{System Model and Problem Statement}\label{sec:model}
	\label{model}
	Motivated by the sensor-controller-actuator communicating over wireless channels, we study a two-hop queuing system, shown in Figure~\ref{fig:model}, under FCFS scheduling. The source generates packets (status updates) at a constant rate $R$. 
	Thus, $R$ models the sampling rate of a process under observation. Let $T = \frac{1}{R}$ denote the inter-arrival time between any two packets. We index the nodes by $k \in \{1,2\}$, and the packets by $n \in \{0,1,2\ldots\}$. 
	Let $\A{{k}}(n,R)$ denote the arrival instant of packet $n$ and $\D{{k}}(n,R)$ the corresponding departure instant at node $k$. For notational simplicity, we use $A(n,R) = A_1(n,R)$ and $D(n,R) = \D{{2}}(n,R)$ to denote the arrivals and departures of the system, respectively. Also, we have $\A{{2}}(n,R) = \D{{1}}(n,R)$.
	The arrival time of packet $n$ to the system is given by $A(n,R) = \frac{n}{R}$. 
	The service time for packet $n$ at node $k$ is given by a random variable $X_{k}^n$. For $k \in \{1,2\}$, we assume $X_{k}^n$ are i.i.d., for all $n$, with mean service rate $\mu_{k} = \frac{1}{\E[X_{k}^1]} > 0$. 
	Also, we assume that $X_{1}^n$ and $X_{2}^n$ are independent, for all $n$, but may have non-identical distributions, i.e., the servers could be heterogeneous. We define $\mu \triangleq \min(\mu_{1},\mu_{2})$. Later, in Section~\ref{sec:multihop}, we show how the results can be extended to $N$-hop tandem queuing network.
	
	\begin{figure}
		\centering
		\includegraphics[width = 3.2in]{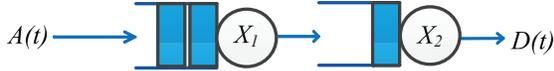}
		\caption{Model of the two-hop network.}
		\label{fig:model}
	\end{figure}
	
	At the destination, we are interested in maintaining timely state information of the process.
	We are thus interested in the AoI metric, denoted by $\Delta(t,R)$, which is defined as:
	\begin{align}\label{eq:AoI-Definition}
	\Delta(t,R) \triangleq t - \max_n\{A(n,R): D(n,R) \leq t\}. 
	\end{align}
	
	For a given \textit{age limit} requirement $d > 0$, in the following we study the distribution of AoI by characterizing its violation probability, i.e., $\P(\Delta(t,R) > d)$, both in the transient and the steady states of the system. 
	Given the age limit $d$, we are interested in solving the following problem $\mathcal{P}$:
	\begin{equation*}
	\begin{aligned}
	& \underset{R}{\text{min}}\: \lim_{t\rightarrow \infty} \P(\Delta(t,R) > d).
	\end{aligned}
	\end{equation*}
	Let $R^*(d)$ denote an optimal rate solution for $\mathcal{P}$. \jpcolor{In the sequel, we refer to $\lim_{t\rightarrow \infty} \P(\Delta(t,R) > d)$ as \textit{AoI violation probability}.}

	Henceforth, we drop $R$ from the notation when it is obvious from the context, for the sake of notation simplicity. For $k \in \{1,2\}$, the MGF of $X^n_{k}$ is given by $M_{k}(s) = \mathbb{E}[e^{sX^n_{k}}]$.

	We now state the Chernoff bound, which will be used extensively to formulate the upper bound minimization problems in the sequel. \jpcolor{Assuming that the moment generating function of a random variable $Y$ exists, the \textbf{Chernoff bound} for its distribution is given by
		\begin{align*}
		\P\{Y > y\} \leq \min_{s > 0}\, e^{-sy} \mathbb{E}[e^{sY}].
		\end{align*}}
	Note that the upper bounds derived using the Chernoff bound involves minimization over the parameter $s$. We shall see that, for the two-hop network, these bounds attain finite values only when there exists $s > 0$ such that $\max(M_{1}(s),M_{2}(s)) < e^{s/R}$. To this end, we formulate the minimization problems over the set $\mathcal{S} \subseteq \mathbb{R}^+$ which characterizes $s$ values for which $\max(M_{1}(s),M_{2}(s)) < e^{s/R}$, i.e.,
	\begin{align}\label{eq:calS}
	\mathcal{S} \triangleq \{s > 0: \max(M_{1}(s),M_{2}(s)) < e^{s/R}\}.
	\end{align}
	We assume that $\mathcal{S}$ is non-empty. In the following lemma we show that this assumption is in fact  a sufficient condition for the stability of the system.
	\begin{lemma}\label{lem:queueStability}
		If there exists $s > 0$ such that 
		\begin{align*}
		\max(M_{1}(s),M_{2}(s)) < e^{s/R},
		\end{align*}
		then the queues are stable.
	\end{lemma}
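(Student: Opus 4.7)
My plan is to show that the hypothesis forces $R < \mu_k$ for each $k \in \{1,2\}$, and then invoke standard stability results for single-server queues with stationary ergodic inputs.

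First, I would extract the mean-rate consequence of the MGF hypothesis via Jensen's inequality. Specifically, if there exists $s > 0$ with $M_k(s) < e^{s/R}$, then since $M_k(s) = \E[e^{sX_k^n}] \geq e^{s\,\E[X_k^n]}$ by Jensen's inequality applied to the convex function $x \mapsto e^{sx}$, we get $e^{s\,\E[X_k^n]} < e^{s/R}$, and monotonicity of the exponential together with $s > 0$ yields $\E[X_k^n] < 1/R$, i.e., $\mu_k > R$ for both $k = 1, 2$. Consequently, $R < \mu = \min(\mu_1,\mu_2)$.

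Next, I would translate this load condition into queue stability at each hop. The first hop is a D/G/1 queue with deterministic inter-arrival time $T = 1/R$ and i.i.d. service times of mean $1/\mu_1$; since the traffic intensity $\rho_1 = R/\mu_1 < 1$, Loynes' classical result for single-server queues with stationary ergodic inputs gives a finite stationary waiting-time distribution, so queue $1$ is stable. For the second hop, the arrival process is the departure process of queue $1$. By flow conservation in the stationary regime, this output process has the same long-run rate $R$ as the input, and it inherits stationarity and ergodicity from the (deterministic, hence trivially stationary ergodic) input and the i.i.d. service times. Since $R < \mu_2$, the traffic intensity at hop $2$ satisfies $\rho_2 < 1$, and Loynes' theorem again gives stability.

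The only delicate point is the stability of the second queue, since its input is not renewal and not independent of the service times at hop $2$ in general. The standard way around this is to note that independence of $\{X_2^n\}$ from $\{X_1^n\}$ (assumed in the model) and from the deterministic arrivals means the hop-$2$ service sequence is independent of the hop-$1$ departure process; combined with stationarity of the hop-$1$ output in steady state, Loynes' construction applies and $\rho_2 < 1$ suffices. I would simply cite this queueing-theoretic fact rather than reproving it, and conclude that both queues are stable under the stated condition.
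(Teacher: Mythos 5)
Your proof is correct and follows essentially the same route as the paper: the key step in both is applying Jensen's inequality to $M_k(s) < e^{s/R}$ (for the convex map $x \mapsto e^{sx}$ with $s>0$) to conclude $\mu_k > R$ for $k \in \{1,2\}$, hence $R < \min(\mu_1,\mu_2)$. The paper then simply \emph{recalls} that $\min(\mu_1,\mu_2) > R$ implies stability, whereas you additionally justify that step via Loynes' theorem, including the subtlety that the second queue's input is the non-renewal departure process of the first; this is a more careful elaboration but not a substantively different argument.
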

	\begin{proof}
		Recall that the queues are stable if $\min(\mu_{1},\mu_{2}) > R $. Consider the case $M_{1}(s) < e^{s/R}$, which implies
		\begin{align*}
		\mathbb{E}[e^{sX^n_{1}}] < e^{s/R} 
		\Rightarrow e^{s\mathbb{E}[X^1_{1}]} < e^{s/R}
		\Rightarrow \mu_{1} > R,
		\end{align*}
		for any $s > 0$. In the second step above we have used Jensen's inequality. Similarly, if $M_{2}(s) < e^{s/R}$, then $\mu_{2} > R$. Therefore, \jpcolor{if there exists $s > 0$ such that} $\max(M_{1}(s),M_{2}(s)) < e^{s/R}$, then  $\min(\mu_{1},\mu_{2}) > R $, and the lemma follows.
	\end{proof}
	We define 
	\begin{align}\label{eq:beta}
	\beta_{k}(s) \triangleq \frac{M_{k}(s)}{e^{s/R}},\, k \in \{1,2\}.
	\end{align}
	By definition, for all $s \in \mathcal{S}$, $\beta_{k}(s) < 1$. \jpcolor{The list of symbols used in the paper are summarized in Table~\ref{tabel1}. }
	
	\begin{table}[ht]
		\renewcommand{\arraystretch}{1.2}
		\caption{\jpcolor{List of Symbols}}
		\centering
		\begin{tabular}{|l|c|c|}
			\hline
			$k$ & Node/link index \\
			\hline
			$N$ & Number of nodes \\
			\hline
			$n$ & Packet index \\
			\hline
			$R$ & Sampling rate \\
			\hline
			$T$ & Inter-arrival time ($\frac{1}{R}$) \\
			\hline
			$A_k(n,R)$ & Arrival time of packet $n$ at node $k$\\
			\hline
			$D_k(n,R)$ & Departure time of packet $n$ at node $k$\\
			\hline
			$A(n,R)$ & Arrival time of packet $n$ in the system ($\frac{n}{R}$)\\
			\hline
			$D(n,R)$ & Departure time of packet $n$ from the system\\
			\hline
			$X^n_k$  & Service time of packet $n$ at node $k$\\
			\hline
			$\mu_k$  & Service rate at node $k$\\
			\hline
			$\Delta(t,R)$  & Age of information at time $t$\\
			\hline
			$d$  & Age limit\\
			\hline
			$M_k(\cdot)$  & MGF of service at node $k$\\
			\hline
			$\hat{n}_R$  & Index of the first arrival after time $t-d$\\
			\hline
		\end{tabular}
		\label{tabel1}
	\end{table}

	\section{AoI Violation Probability Analysis}\label{sec:vioprob}
	\jpcolor{In this section, we study the properties of the distribution of AoI -- the results derived are valid for any number of nodes in tandem between the source and the destination given that the packets are input to the network by the source at a constant rate $R$ and the network uses FCFS.} 
	
	We start by investigating  structural  characteristics of the stochastic behaviour of AoI. Toward this end, we use the max-plus representation of Reich's equation to model the evolution of the queues.
	For any realization of the service times at node $k$, the relation between $\D{{k}}(n,R)$, $\A{{k}}(n,R)$ and $\{X_{{k}}^n\}$,  is given by~\cite{JorgBook2017}:
	\begin{align}\label{eq:departureTime}
	\D{{k}}(n,R) = \max_{0 \leq v \leq n} \{\A{{k}}(n-v,R) + \sum_{i=0}^{v}X_{{k}}^{n-i}\}.
	\end{align}
	\jpcolor{We note that equation~\eqref{eq:departureTime} is a direct consequence of using recursion on a fundamental relation in queuing system: $\D{{k}}(n,R) = \max\{\D{{k}}(n-1,R), A_k(n)\} + X^n_k$, which states that the departure time of packet $n$ is given by either its service time plus departure time of previous packet $n-1$ or arrival time of packet $n$ plus its service time, whichever is greater.}
	
	Consider the definition in \eqref{eq:AoI-Definition}, for $\Delta(t,R)$ not to exceed the age limit $d$, the latest departure at $t$ must have arrived no earlier than $t-d$. Therefore, to study the distribution of  $\Delta(t,R)$, we tag the packet arriving on or immediately after $t-d$ and use it to characterize this process. 
	Given rate $R$, let $\nR$ denote the \jpcolor{index of the first arrival since time $t-d$}, given by
	\begin{align}\label{eq:nR}
	\nR \triangleq \lceil R(t-d) \rceil.
	\end{align}
	The tagged packet\footnote{$\nR$ is a function of $t-d$ as well. We omit $t-d$ from the notation here for ease of exposition. }  $\nR$  plays a key role in characterizing the violation probability as we will show next. 
	
	In the following lemma we present a key insight regarding the transient characterization of the AoI violation probability.
	\begin{lemma}
		\label{lem1}
		Given the input arrival rate $R$, age limit $d$, and $t < \infty$, if there exists $n$ such that $t-d \leq \frac{n}{R} < t$, then $\P\{\Delta(t,R) > d\} = \P\{D(\nR) > t\}$, otherwise, $\P\{\Delta(t,R) > d\} = 1$.
	\end{lemma}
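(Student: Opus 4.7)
My plan is to unpack the definition of AoI and rewrite the violation event $\{\Delta(t,R) > d\}$ as a statement about the index of the most recently departed packet, then use the FCFS order to match it with the event $\{D(\nR) > t\}$.

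First, I would let $n^\star = \max\{n : D(n,R) \leq t\}$ denote the index of the latest packet that has departed by time $t$ (if the set is empty I handle this as part of the degenerate case). By the definition in \eqref{eq:AoI-Definition}, $\Delta(t,R) = t - A(n^\star,R) = t - n^\star/R$, so
\begin{align*}
\Delta(t,R) > d \;\Longleftrightarrow\; n^\star/R < t-d \;\Longleftrightarrow\; n^\star < R(t-d).
\end{align*}
Since $n^\star$ is an integer, the last inequality is equivalent to $n^\star \leq \lceil R(t-d)\rceil - 1 = \nR - 1$, regardless of whether $R(t-d)$ is integer-valued or not.

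Next, assume the hypothesis that there is an integer $n$ with $t-d \leq n/R < t$. The smallest such $n$ is exactly $\nR = \lceil R(t-d)\rceil$, and by hypothesis $\nR/R < t$, so packet $\nR$ has already arrived by time $t$. Under FCFS, departures respect arrival order, so $D(0) \leq D(1) \leq \cdots$, which gives $n^\star \leq \nR - 1$ if and only if packet $\nR$ has not yet departed by $t$, i.e.\ $D(\nR) > t$. Chaining the two equivalences yields $\{\Delta(t,R)>d\} = \{D(\nR)>t\}$ as events, so the probabilities agree.

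Finally, I would dispose of the degenerate case. If no integer $n$ satisfies $t-d \leq n/R < t$, then every arrival index $n$ with $n/R < t$ actually has $n/R < t-d$; combined with the convention that $\Delta(t,R)$ is at least $t$ (equivalently, $>d$) when no packet has departed by $t$, we get $n^\star/R < t-d$ deterministically whenever $n^\star$ is defined, and otherwise $\Delta(t,R) > d$ by convention. In either subcase $\Delta(t,R) > d$ holds with probability $1$.

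I expect the only subtle points to be (i) making the ceiling-vs-integer equivalence $n^\star < R(t-d) \Leftrightarrow n^\star \leq \nR-1$ watertight in both the integer and non-integer cases, and (ii) treating the transient regime in which no packet has departed yet in a way consistent with the paper's AoI convention; the FCFS ordering step itself is a one-line consequence of \eqref{eq:departureTime} and will not be the bottleneck.
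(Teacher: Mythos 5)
Your proof is correct and follows essentially the same route as the paper's: both identify the latest departed packet $n^\star$, reduce $\{\Delta(t,R)>d\}$ to a condition on $A(n^\star)$, and use the FCFS monotonicity of departures to equate this with $\{D(\nR)>t\}$, treating the no-arrival-in-$[t-d,t)$ case separately. Your explicit ceiling/integer chain of equivalences is a slightly more streamlined packaging of the paper's two-way set-inclusion argument, and the empty-departure-set convention you flag is left implicit in the paper as well, so there is no gap relative to the published proof.
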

	\begin{proof}
		Let $n^*_\text{R}$ be the latest packet departure at $t$, i.e., $n^*_\text{R} = \argmax_{n} \{D(n,R) \leq t\}$. Thus, 
		$\Delta(t,R) = t - A(n^*_\text{R}).$
		
		\textbf{Case 1:} 
		If  an $n$ such that $t-d \leq \frac{n}{R} < t$  does not exist, i.e., there is no arrival during the time interval $[t-d,t)$, then  the arrival time of $n^*_\text{R}$ must be strictly less than $t-d$, i.e., $A(n^*_\text{R}) < t-d$. Therefore,
		\begin{align*}
		\P(\Delta(t,R) > d) = \P(t - A(n^*_\text{R}) > d) =  1.
		\end{align*}
		
		\begin{figure}
			\centering
			\includegraphics[width = 2.8in]{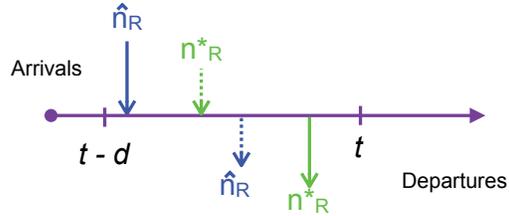}
			\caption{Time-line of events for Case 2 in Lemma \ref{lem1} proof. }
			\label{fig:Case2}
		\end{figure}
		
		\textbf{Case 2:} If there exists $n$ such that $t-d \leq \frac{n}{R} < t$, then $t-d \leq \frac{\nR}{R} < t$, since $\nR$ is the first arrival on or after time $t-d$, see Figure~\ref{fig:Case2}. In this case, we show that the event $\{\Delta(t,R) \leq d\}$ is equivalent to the event $\{D(\nR) \leq t\}$. Suppose that the event $\{\Delta(t,R) \leq d\}$ occurred, then $A(n^*_\text{R}) \geq t - d$. By definition of $\nR$, we should have $A(\nR) \leq A(n^*_\text{R})$ which implies $D(\nR) \leq D(n^*_\text{R}) \leq t$, due to FCFS assumption. 
		Therefore, 
		\begin{equation}\label{eq:Lemma1-case2}
		\{\Delta(t,R) \leq d\} \subseteq \{D(\nR) \leq t\}.
		\end{equation} 
		
		To prove equivalence of the two events, we show that the relation above also holds the other way around.
		Suppose that the event $\{D(\nR) \leq t\}$ occurred. Again, it should be true that $A(n^*_\text{R}) \geq A(\nR)$. Otherwise, $D(n^*_\text{R}) < D(\nR) \leq t$ which contradicts the definition of $n^*_\text{R}$ that it is the latest departure before $t$. Therefore,
		\begin{align*}
		\Delta(t,R) = t - A(n^*_\text{R}) \leq t - A(\nR) \leq t - (t-d) = d.  
		\end{align*} 
		This implies  that $\{D(\nR) \leq t\} \subseteq \{\Delta(t,R) \leq d\}$. Therefore, the equivalence holds and the result is proven.
	\end{proof}

	\jpcolor{The intuition behind the result in Lemma~\ref{lem1} is that AoI exceeds $d$ at time $t$ if none of the packets generated/arrived in the time interval $[t-d,t]$ have reached the destination at time $t$ or no packets are generated in this interval. Note that \textbf{Case 1} in the above proof essentially represents an under-sampling of the process under observation, i.e., at the current time $t$ the sampling rate $R$ is simply too low such that there is no packet generated in the time interval $[t-d,t]$.}  
	
	We next present the steady-state results for the two-hop system based on the result obtained in Lemma \ref{lem1}.
	\begin{theorem}\label{thm:steadystate}
		Given age limit $d$, the steady state distribution of AoI is characterized as follows: 
		\begin{enumerate}
			\item If $R \geq \frac{1}{d}$, then
			\begin{align}\label{eq:steadystateUB}
			\!\!\!\lim_{t \rightarrow \infty} \P \{\Delta(t,R) > d\} = \lim_{t \rightarrow \infty} \P\{D(\nR) > t\}.
			\end{align}
			\item Else if $R < \frac{1}{d}$, then
			\begin{align*}
			& \limsup_{t \rightarrow \infty} \P\{\Delta(t,R) > d\} = 1, \\
			& \liminf_{t \rightarrow \infty} \P\{\Delta(t,R) > d\} = \lim_{t \rightarrow \infty} \P\{D(\nR) > t\}.
			\end{align*}
		\end{enumerate}
	\end{theorem}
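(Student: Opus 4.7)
The plan is to apply Lemma~\ref{lem1} pointwise in $t$ and then take appropriate $\limsup$ or $\liminf$ as $t \to \infty$, with the split into cases~1 and~2 determined entirely by whether the window $[t-d,t)$ is forced to contain a periodic arrival. Since arrivals occur at $A(n,R) = n/R$, the window $[t-d,t)$ contains an arrival iff there exists an integer $n$ with $R(t-d) \le n < Rt$, i.e.\ iff the interval $[R(t-d), Rt)$ of length $Rd$ contains an integer.

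For case~1, I would argue that $R \ge 1/d$ is equivalent to $Rd \ge 1$, so the interval $[R(t-d), Rt)$ has length at least one and therefore necessarily contains an integer for every $t$. Hence the hypothesis of Case~2 in Lemma~\ref{lem1} is satisfied for every $t$, which yields $\P\{\Delta(t,R) > d\} = \P\{D(\hat{n}_R) > t\}$ identically in $t$. Taking $t \to \infty$ on both sides immediately produces~\eqref{eq:steadystateUB}.

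For case~2, I would exploit the $T$-periodic structure of the arrivals. Writing $T = 1/R$, the set of times $t$ at which $[t-d,t)$ contains an arrival is exactly the union over $n$ of the half-open intervals $(n/R,\,n/R + d]$, while the complement set consists of intervals of length $T-d > 0$ on which no arrival falls in $[t-d,t)$. Along a subsequence $t_k \to \infty$ drawn from the latter intervals, Lemma~\ref{lem1} Case~1 gives $\P\{\Delta(t_k,R) > d\} = 1$; since this is a subsequential limit equal to the maximum possible value of a probability, it provides both an upper bound on the $\limsup$ and the matching lower bound, so $\limsup_{t\to\infty} \P\{\Delta(t,R) > d\} = 1$. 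Along a subsequence $t_k \to \infty$ drawn from the former intervals, Lemma~\ref{lem1} Case~2 gives $\P\{\Delta(t_k,R) > d\} = \P\{D(\hat{n}_R) > t_k\}$; since no other subsequential values are smaller than this one (the only alternative being the value $1$), the $\liminf$ coincides with $\lim_{t\to\infty} \P\{D(\hat{n}_R) > t\}$.

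The main obstacle I anticipate is a clean justification of the final identification of the $\liminf$ with $\lim_{t\to\infty} \P\{D(\hat{n}_R) > t\}$: this implicitly requires that the limit on the right-hand side exists, which in turn rests on the stability hypothesis (Lemma~\ref{lem:queueStability}) and the stationarity of the service-time sequences, ensuring that the distribution of the random delay $D(\hat{n}_R) - t$ converges. Once that limit is granted, the argument reduces to the clean periodic decomposition above; the algebra is routine and the only real content is the case split from Lemma~\ref{lem1}.
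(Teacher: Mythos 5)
Your proposal is correct and follows essentially the same route as the paper: apply Lemma~\ref{lem1} pointwise in $t$, observe that for $R \geq \frac{1}{d}$ the window $[t-d,t)$ always contains an arrival (your ``interval of length $Rd\geq 1$ contains an integer'' argument is a slightly cleaner packaging of the paper's separate treatment of $R>\frac{1}{d}$ and $R=\frac{1}{d}$), and for $R<\frac{1}{d}$ use exactly the paper's periodic decomposition into $(\frac{n}{R},\frac{n}{R}+d]$ and its complement to extract the two subsequential values $1$ and $\P\{D(\nR)>t\}$. The caveat you raise about the existence of $\lim_{t\to\infty}\P\{D(\nR)>t\}$ is fair but is likewise left implicit in the paper's own proof.
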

	
	\begin{proof}
		For the two cases above consider the following:
		
		\textbf{Case 1 ($\mathbf{R \geq \frac{1}{d}}$):} Since the samples are generated at a constant rate, for $R \geq \frac{1}{d}$ we claim that there exist an $n$ such that $t-d \leq \frac{n}{R} < t$, for all $t$. We first prove this claim for $R > \frac{1}{d}$. We have
		\begin{align*}
		A(\nR) = \frac{\lceil R(t-d) \rceil}{R} \leq  \frac{R(t-d) + 1}{R} < t\, .
		\end{align*}
		Furthermore, since $t-d \leq A(\nR)$ for any $t$ by definition, the claim holds at least for $\nR$, for $R > \frac{1}{d}$. To prove the claim for $R = \frac{1}{d}$, we consider
		\begin{align*}
		t-d \leq\! \frac{n}{R} \! <\! t \, \Leftrightarrow \,  \frac{n}{R}\! < \! t  \leq d + \frac{n}{R} \, \Leftrightarrow \,  n \! < Rt \leq \! n + 1.
		\end{align*}
		Note that for any $R$ and $t$ there always exists an $n$ such that the last inequality above holds. Therefore, the claim is true and Case~1 follows  from Lemma~\ref{lem1} by letting $t$ go to infinity.
		
		\textbf{Case 2 ($\mathbf{R < \frac{1}{d}}$):} In this case, the existence of $n$ such that $t-d \leq \frac{n}{R} < t$ depends on $t$. \jpcolor{To see this, for a given $n$ consider the two time intervals $(\frac{n}{R},\frac{n}{R}+d]$ and $(\frac{n}{R}+d,\frac{n+1}{R})$. Note that the latter time interval is non-empty because $d < \frac{1}{R}$. Now, for time instants $t \in (\frac{n}{R},\frac{n}{R}+d]$ we have $t-d \leq \frac{n}{R} < t$, and therefore for such $t$ using Lemma~\ref{lem1} we have $\P\{\Delta(t,R) > d\} = \P\{D(\nR) > t\}$. On the other hand, for time instants $t \in (\frac{n}{R}+d,\frac{n+1}{R})$, there is no $n$ value such that $t-d \leq \frac{n}{R} < t$ is true, and therefore from Lemma~\ref{lem1} we have $\P\{\Delta(t,R) > d\} = 1$. This implies that as $t$ goes to infinity the violation probability either equals $\P\{D(\nR) > t\}$ or $1$ depending on the value of $t$.} Thus, we obtain the limit supremum and the limit infimum. 
	\end{proof}
	
	{\color{black} Intuitively, given $R$, the support of the steady state AoI distribution should be $[\frac{1}{R},\infty)$, because AoI cannot be less than $\frac{1}{R}$ when the samples are generated at rate $R$. Not only Theorem~\ref{thm:steadystate} asserts this intuitive reasoning, but also characterizes the limit infimum and limit supremum for the region $d < \frac{1}{R}$, where the AoI violation probability does not exist.}
	Therefore, to ensure the existence of the AoI violation probability 
	we consider the feasible rate region $[\frac{1}{d},\mu)$, where 
	$\mu = \min(\mu_1,\mu_2)$, 
	and $R < \mu$ ensures queue stability.
	In light of this, and using~\eqref{eq:steadystateUB} from Theorem~\ref{thm:steadystate}, we formulate an equivalent problem 
	$\mathcal{\tilde P}$ as follows:
	\begin{equation}\label{equivalentProb}
	\begin{aligned}
	& \underset{\frac{1}{d} \leq R < \mu}
	{\text{min}}\: \lim_{t\rightarrow \infty} \P(D(\nR) > t).
	\end{aligned}
	\end{equation}
	
	\textbf{\textit{Remark 1:}} \jpcolor{The results in Lemma~\ref{lem1} and Theorem~\ref{thm:steadystate} are valid for arbitrary single-source single-destination network with constant  arrival rate $R$ and using FCFS queuing discipline, i.e.,
		packets should be received at the destination in the same order as they are transmitted by the source.} For arbitrary network topology, one can formulate problem $\tilde{\mathcal{P}}$ given in~\eqref{equivalentProb} with the following constraints on R: 1) $R \geq \frac{1}{d}$, and 2) $R$ belongs to the rate region in which the network is stable.
	
	Next, we present our solution approach for solving $\tilde{\mathcal{P}}$ for a single-hop case and then show how the approach can be extended for the two-hop system in Section~\ref{sec:twohop}.

	\section{Single-Hop Scenario}\label{sec:singleHop}
	In this section we solve $\mathcal {\tilde P}$ by assuming that $X^n_{2} = 0$ for all $n$. This implies that $D(n) = \D{{1}}(n)$, $\mu_{2} = \infty$, and the system is equivalent to the D/GI/1 system. Our motivation for presenting the single-hop case is because of its importance in solving the two-hop case, and also due to its relevance to practical scenarios, where only estimation of the processes is required, or both controller and actuator are collocated.
	In order to find a solution for $\mathcal {\tilde P}$, we  must first evaluate the probability $\P\{D(\nR) > t\}$, where $D(n)$ is given by \eqref{eq:departureTime}.
	Note that $D(n)$ is random, since the service process $\{X_{1}^n,n \geq 0\}$ is random, and is given in terms of   the maximum of $n+1$ random variables. 
	Hence, obtaining an exact expression is tedious. 
	Therefore, we opt for a more tractable approach by using probabilistic inequalities to obtain bounds on the distribution of  $D(\nR)$. 
	Consequently, we propose the Upper Bound Minimization Problem (UBMP) and its more computationally tractable counterparts \prob~and Chernoff-UBMP to obtain near optimal heuristic solutions for $\mathcal{\tilde P}$.
	\subsection{A Bound for the Distribution of $D$}
	As mentioned earlier, the evaluation of the distribution function of $D(n)$ requires the computation of the distribution of the maximum of random variables. Fortunately, there are several approaches that have been used in the literature to estimate this probability. One such approach approximates the probability of the maximum by the maximum probability, i.e., $\P\{\max_i Y_i > y\} \approx \max \, \P\{ Y_i>y \}$. However,  this approximation is not always accurate and in some cases may result in very large deviation from the actual distribution. Hence, it cannot be used when reliability of the solution must be well defined as it is the case here. An alternative approach is to use extreme value theorem. However, the obtained extreme value distributions are not always tractable. A more promising approach is to use Boole's inequality, commonly known as the ``union bound,'' where the probability of a union of events is bounded by the sum of their probabilities. The bound obtained in our case is not only tractable, but also provides good heuristic solutions for $\mathcal {\tilde P}$. 
	In the following lemma, we present this upper bound for the distribution function $\lim_{t \rightarrow \infty} \P\{D(\nR) > t\}$.
	
	\begin{lemma}\label{lem:singleHopUB}
		Given $d$, we have
		\vspace{-.3cm}
		\begin{align*}
		\lim_{t\rightarrow \infty} \P(D(\nR) > t) \leq \sum_{v=0}^{\infty} \Phi(v,R),
		\vspace{-.5cm}
		\end{align*}
		where
		\vspace{-.3cm}
		\begin{align}\label{eq:Phi:singlehop}
		\Phi(v,R) \triangleq \P\left\{\sum_{i=0}^{v} X_{1}^i > d + \frac{v-1}{R}\right\}.
		\end{align}
	\end{lemma}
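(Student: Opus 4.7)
The plan is to combine the max-plus representation of $D(\nR)$ in~\eqref{eq:departureTime} with the union bound, and then use a simple bound on the ceiling function in the definition of $\nR$ to pass from $t$-dependent deviations to the deterministic deviations $d + (v-1)/R$ appearing in $\Phi(v,R)$. The stationarity (i.i.d.) of the service process will let us replace the reverse-indexed sum by a sum starting at index $0$.

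First, I would specialize~\eqref{eq:departureTime} to $k=1$, $n=\nR$ and substitute the arrival times $A_1(\nR-v,R) = (\nR-v)/R$ to get
\begin{align*}
D(\nR) = \max_{0\leq v \leq \nR}\left\{ \frac{\nR-v}{R} + \sum_{i=0}^{v} X_1^{\nR-i} \right\}.
\end{align*}
Then Boole's (union) inequality gives
\begin{align*}
\P\{D(\nR) > t\} \leq \sum_{v=0}^{\nR} \P\!\left\{ \sum_{i=0}^{v} X_1^{\nR-i} > t - \frac{\nR-v}{R} \right\}.
\end{align*}
Because the $X_1^n$ are i.i.d., the sum $\sum_{i=0}^{v} X_1^{\nR-i}$ is equal in distribution to $\sum_{i=0}^{v} X_1^{i}$, so each summand can be rewritten with the fixed-index service times used in $\Phi(v,R)$.

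Next I would replace the $t$-dependent threshold by $d + (v-1)/R$. From $\nR = \lceil R(t-d)\rceil$ we have $R(t-d) \leq \nR < R(t-d)+1$, hence
\begin{align*}
t - \frac{\nR-v}{R} > d + \frac{v-1}{R}.
\end{align*}
Since the event $\{S > a\}$ shrinks as $a$ increases, each probability in the union bound is dominated by $\Phi(v,R)$, yielding
\begin{align*}
\P\{D(\nR) > t\} \leq \sum_{v=0}^{\nR}\Phi(v,R).
\end{align*}
Finally, letting $t\to\infty$ sends $\nR\to\infty$; since $\Phi(v,R)\geq 0$, monotone convergence extends the upper limit to $\infty$ and gives the claimed bound.

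The only non-routine step is the second one: one must verify that the strict inequality coming from the ceiling is oriented so that the bound is loosened, not tightened, i.e.\ that replacing the $t$-dependent threshold by the larger quantity $d+(v-1)/R$ keeps the bound valid. No other obstacles arise, as no exchange of limit and sum is needed beyond the trivial monotone extension of a sum of non-negative terms to infinity.
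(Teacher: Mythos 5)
Your proof is correct and follows essentially the same route as the paper's: Reich's max-plus equation, the union bound, the estimate $\nR \leq R(t-d)+1$ to eliminate the $t$-dependence of the threshold, re-indexing of the service times via the i.i.d.\ assumption, and letting $t \to \infty$ so that $\nR \to \infty$. One cosmetic remark: in your closing paragraph you describe $d+\frac{v-1}{R}$ as the \emph{larger} quantity, whereas your own displayed inequality $t-\frac{\nR-v}{R} > d+\frac{v-1}{R}$ correctly shows it is the smaller one (which is precisely why the substitution only loosens the bound), so the argument itself is unaffected.
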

	\begin{proof}
		Using~\eqref{eq:departureTime}, we have 
		\allowdisplaybreaks {\begin{align*}
			&\P\{D(\nR)\hspace{-.1cm} > \hspace{-.1cm}t\}\! = \!\P \hspace{-.1cm}\left\{\!\max_{0 \leq v \leq \nR} \hspace{-.1cm} \left (\!\!A(\nR-v)\hspace{-.1cm} + \hspace{-.1cm}\sum_{i=0}^{v}X_{1}^{\nR-i}\right) \hspace{-.1cm} > \hspace{-.1cm} t \right\} \\
			&\overset{(a)}{\leq} \sum_{v=0}^{\nR} \P\left\{\sum_{i=0}^{v} X_{1}^{\nR-i} > t - \frac{\nR-v}{R}\right\}\\
			& \overset{(b)}{\leq} \sum_{v=0}^{\nR} \P\left\{\sum_{i=0}^{v} X_{1}^{\nR-i} > t - \frac{R(t-d) + 1-v}{R}\right\} \\
			&= \sum_{v=0}^{\nR} \,\,\underbrace{  \P\left\{\sum_{i=0}^{v} X_{1}^i > d + \frac{v-1}{R}\right\} }_{ \triangleq \, \Phi(v,R)}.
			\end{align*}}
		\noindent In step (a) we have applied the union bound, and used $\nR = \lceil R(t - d) \rceil \leq R(t-d) + 1$ in step (b). The result follows by noting that $\nR$ goes to infinity as $t$ goes to infinity.
	\end{proof}
	
	
	\subsection{UBMP Formulations}\label{subsec:singleHop:alphaUB}
	Using~\eqref{equivalentProb}, Lemma~\ref{lem:singleHopUB}, and $\mu_{2} = \infty$, we obtain the following UBMP problem.
	\begin{equation}\label{UBMP:singleHop}
	\begin{aligned}
	& \underset{\frac{1}{d} \leq R < \mu_1}{\text{min}} \quad \sum_{v=0}^{\infty} \Phi(v,R).
	\end{aligned}
	\end{equation}
	It is worth noting that the function $\Phi(0,R)$ is non-increasing in $R$ while the functions $\{\Phi(v,R):v > 1\}$ are non-decreasing in $R$. 
	
	A shortcoming of UBMP is that its objective function is intractable, in general, as it involves computation of a sum of infinite terms and each term requires computation of the distribution of sum of service times.
	To this end, we formulate Chernoff-UBMP obtained by using Chernoff bound for $\Phi(v,R)$ in Lemma~\ref{lem:singleHopUB}.
	
	\subsubsection{Chernoff-UBMP}
	Since $X^n_{1}$ are i.i.d, the Chernoff bound for $\Phi(v,R)$, defined in~\eqref{eq:Phi:singlehop}, is given by
	{\allowdisplaybreaks \begin{align}\label{eq:ChernoffPhi}
		\Phi(v,R) &\leq \min_{s \in \mathcal{S}}\; e^{-s(d+\frac{v-1}{R})} \mathbb{E}[e^{s \sum_{i=0}^{v} X_{1}^i}] \nonumber\\
		&= \min_{s \in \mathcal{S}}\; e^{-s(d+\frac{v-1}{R})} M^{v+1}_{1}(s) \nonumber \\
		&= \min_{s \in \mathcal{S}}\; e^{-s(d-\frac{1}{R})} M_{1}(s)  \beta^v_1(s),
		\end{align}
		where \jpcol{$\beta_1(s) = \frac{M_1(s)}{e^{s/R}}$ (defined in~\eqref{eq:beta}),} \jpcolor{and $\mathcal{S}$ is defined in~\eqref{eq:calS}}. Recall that, $\beta_1(s) < 1$ for all $s \in \mathcal{S}$. Therefore, using~\eqref{eq:ChernoffPhi} in the result of Lemma~\ref{lem:singleHopUB}, we obtain 
		\begin{align}\label{eq:ChernoffSinglehop}
		&\sum_{v=0}^{\infty} \Phi(v,R) \leq \sum_{v=0}^{\infty} \min_{s \in \mathcal{S}}\; e^{-s(d-\frac{1}{R})} M_{1}(s)\beta^v_1(s) \nonumber\\
		&\leq \min_{s \in \mathcal{S}}\; e^{-s(d-\frac{1}{R})} M_{1}(s) \sum_{v=0}^{\infty}  \beta^v_1(s)\nonumber \\
		& = \min_{s\in \mathcal{S}}\; \underbrace{e^{-s(d-\frac{1}{R})}\cdot \frac{M_{1}(s)}{(1 - \beta_1(s))}}_{\triangleq \, \Psi_1(s,d,R)}.
		\end{align}}
	Even though the Chernoff bound relaxes the upper bound in Lemma~\ref{lem:singleHopUB}, its objective function has a closed-form expression and can be computed numerically.
	The following theorem immediately follows from~\eqref{eq:ChernoffSinglehop} and Lemma~\ref{lem:singleHopUB}.
	\begin{theorem}\label{thm:singlehop}
		Given $d$, an upper bound for the violation probability for a single hop is given by
		\begin{align*}
		\lim_{t \rightarrow \infty} \P\{D(\nR) > t\} \leq \min_{s\in \mathcal{S}} \; \Psi_1(s,d,R),
		\end{align*}
		where $\Psi_1(s,d,R)$ is defined in~\eqref{eq:ChernoffSinglehop}.
	\end{theorem}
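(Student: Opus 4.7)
The plan is to derive the statement as an essentially direct corollary of Lemma~\ref{lem:singleHopUB}, with the Chernoff bound used as the bridge to obtain a closed-form expression. The starting point is Lemma~\ref{lem:singleHopUB}, which already reduces the steady-state departure-time tail to the infinite sum $\sum_{v=0}^\infty \Phi(v,R)$, where each $\Phi(v,R)$ is the tail probability of the cumulative sum $\sum_{i=0}^{v} X_1^i$. So the task reduces to upper-bounding this infinite sum by a quantity that is both finite and computable.

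First I would apply the Chernoff bound term by term. Since the $X_1^i$ are i.i.d.\ with MGF $M_1(s)$, for any $s > 0$ we have $\Phi(v,R) \leq e^{-s(d + (v-1)/R)} M_1^{v+1}(s)$. Rewriting this grouping the factor $M_1(s) e^{-s(d - 1/R)}$ separately and collecting the $v$-dependent terms into $\bigl(M_1(s)/e^{s/R}\bigr)^v = \beta_1^v(s)$ yields exactly the Chernoff bound displayed in~\eqref{eq:ChernoffPhi}. Minimizing over $s$ gives a valid per-term bound, and restricting $s$ to $\mathcal{S}$ (defined in~\eqref{eq:calS}) is without loss because, as I will argue next, this is precisely the range over which the subsequent sum converges.

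Second, I would swap the summation and the minimization to arrive at a single optimization over $s$. Using the elementary fact that $\sum_v \min_s f(s,v) \leq \min_s \sum_v f(s,v)$, we may pull the minimization out of the sum, at the mild cost of a further relaxation. Inside the minimization, the only $v$-dependent factor is $\beta_1^v(s)$; since $s \in \mathcal{S}$ implies $\beta_1(s) < 1$ (by definition of $\mathcal{S}$ and~\eqref{eq:beta}), the geometric series evaluates to $1/(1-\beta_1(s))$. This yields the claimed bound $\min_{s\in\mathcal{S}} \Psi_1(s,d,R)$.

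The proof is essentially mechanical once Lemma~\ref{lem:singleHopUB} is in hand, and I do not expect a serious obstacle. The only point that deserves care is justifying that the interchange of $\min_s$ and $\sum_v$ is legitimate and that the geometric series converges, both of which hinge on confining the optimization to $s \in \mathcal{S}$; this is the sole reason $\mathcal{S}$, rather than all of $\mathbb{R}^+$, appears in the statement. Once that is noted, the conclusion follows by concatenating the inequalities in~\eqref{eq:ChernoffSinglehop} with Lemma~\ref{lem:singleHopUB}.
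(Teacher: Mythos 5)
Your proposal is correct and follows the paper's own route exactly: apply the Chernoff bound to each $\Phi(v,R)$ from Lemma~\ref{lem:singleHopUB}, exchange the minimization over $s$ with the sum via $\sum_v \min_s \leq \min_s \sum_v$, and evaluate the resulting geometric series using $\beta_1(s) < 1$ for $s \in \mathcal{S}$. The paper presents this chain of inequalities as the derivation in~\eqref{eq:ChernoffSinglehop} and states that the theorem follows immediately from it together with Lemma~\ref{lem:singleHopUB}.
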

	With a slight abuse in the usage, we refer to the bound given in Theorem~\ref{thm:singlehop} as \textit{Chernoff bound}. 
	In the following, we formulate the Chernoff-UBMP for the single-hop scenario:
	\begin{equation}\label{Chernoff-UBMP:singleHop}
	\underset{ \frac{1}{d} \leq R < \mu_{1} }{\text{min}} \, \underset{s\in \mathcal{S}}{\text{min}} \; \Psi_1(s,d,R).
	\end{equation}
	
	\begin{lemma}\label{lem:singlehop:convexR}
		The function $\Psi_1(s,d,R)$ is strictly convex with respect to $\frac{1}{R}$.
	\end{lemma}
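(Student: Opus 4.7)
The plan is to parametrize by $u = 1/R$ and exploit the geometric-series structure of $\Psi_1$. Fix any $s \in \mathcal{S}$, so that $\beta_1(s) = M_1(s) e^{-su} \in (0,1)$ on the admissible range $u \in (1/\mu_1, d]$. Expanding $1/(1 - \beta_1(s))$ as a geometric series and simplifying writes
\begin{align*}
\Psi_1(s, d, R) \;=\; e^{-sd} \sum_{v=0}^{\infty} M_1(s)^{v+1} \, e^{s(1-v) u}.
\end{align*}
Each summand is the exponential of an affine function of $u$ multiplied by a positive constant, hence convex in $u$; its second derivative is proportional to $(1-v)^2$, which is strictly positive whenever $v \neq 1$ and zero when $v = 1$. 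Thus every term is convex in $u$, and at least one (e.g.\ $v = 0$) is strictly convex.

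To promote this to strict convexity of the infinite sum I would argue directly from the definition rather than differentiate termwise: for distinct $u_0, u_1$ in the admissible range and $\lambda \in (0,1)$, the $v = 0$ term alone already satisfies a strict convexity inequality, while every other term satisfies the (nonstrict) convexity inequality. Since each summand is nonnegative and the series converges absolutely at each of $u_0$, $u_1$, and $\lambda u_0 + (1-\lambda) u_1$ (all three lie in the region where $\beta_1(s) < 1$), these inequalities may be summed to yield the strict convexity of $\Psi_1$ in $u = 1/R$.

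The only real subtlety is the bookkeeping of strict versus nonstrict inequalities across the infinite sum. Should one prefer to sidestep the series altogether, an equivalent route is the closed-form decomposition
\begin{align*}
\Psi_1(s,d,R) \;=\; e^{-sd} M_1(s) \, e^{su} \;+\; e^{-sd} M_1(s)^2 \;+\; \frac{e^{-sd} M_1(s)^3}{e^{su} - M_1(s)},
\end{align*}
obtained from the identity $y^2/(y-c) = (y-c) + 2c + c^2/(y-c)$ with $y = e^{su}$, $c = M_1(s)$. The first summand is manifestly strictly convex in $u$, the second is constant, and strict convexity of the third is a short second-derivative calculation yielding a positive expression proportional to $e^{su}(e^{su}+M_1(s))/(e^{su}-M_1(s))^3$, which is positive because $e^{su} > M_1(s)$ for $s \in \mathcal{S}$. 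Either approach delivers strict convexity of $\Psi_1$ in $1/R$, completing the proof.
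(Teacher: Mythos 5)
Your proof is correct, but it takes a different route from the paper. The paper writes $\Psi_1(s,d,T)=e^{-sd}M_1(s)f(T)$ with $f(T)=e^{2sT}/(e^{sT}-M_1(s))$, computes $f'$ by the quotient rule, rewrites it as $f'(T)=sf(T)\bigl[1-\tfrac{M_1(s)}{e^{sT}-M_1(s)}\bigr]$, and then differentiates once more to exhibit $f''$ as a sum of two manifestly positive terms. Your two alternatives both sidestep that calculation: the geometric-series route expands $\Psi_1$ into terms $e^{-sd}M_1(s)^{v+1}e^{s(1-v)u}$, each an exponential of an affine function of $u$, and you correctly handle the infinite sum by summing the convexity inequalities pointwise rather than differentiating termwise (the $v=0$ term supplies strictness, the $v=1$ term is merely constant, and the relevant domain $\{u:e^{su}>M_1(s)\}$ is an interval so convex combinations stay admissible); amusingly, this is essentially the same device the paper itself uses to prove the companion Lemma~5 on convexity in $s$. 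Your second route, via the identity $y^2/(y-c)=(y-c)+2c+c^2/(y-c)$, is the cleanest of the three: it isolates a strictly convex exponential, a constant, and a term whose second derivative $s^2e^{su}(e^{su}+M_1(s))/(e^{su}-M_1(s))^3$ (up to a positive constant) is positive precisely because $e^{su}>M_1(s)$ on $\mathcal{S}$, which I have verified. What the paper's computation buys is a single self-contained calculation; what yours buys is transparency about \emph{why} the function is convex (a positive combination of convex exponentials) and, in the decomposition variant, an essentially mechanical verification.
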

	\begin{proof}
		Recall that $T = \frac{1}{R}$. We prove that $\frac{\partial^2 \Psi_1(s,d,T)}{\partial T^2} > 0$ for all $s\in \mathcal{S}$. Let us define $f(T)$ as follows:
		\begin{align*}
		f(T) = \frac{e^{2sT}}{(e^{sT}-M_{1}(s))}.
		\end{align*} 
		Then, we rewrite $\Psi_1(s,d,T)$ as follows:
		\begin{align*}
		\Psi_1(s,d,T) = e^{-sd}[M_{1}(s)] f(T).
		\end{align*}
		From the above equation we infer that it is sufficient to prove $\frac{\partial^2 f(T)}{\partial T^2} > 0$. Taking first derivative $f'(T) = \frac{\partial f(T)}{\partial T}$, we obtain 
		\begin{align}\label{eq:firstder}
		f'(T) &= \frac{2se^{2sT}}{(e^{sT}-M_{1}(s))} - \frac{e^{2sT}2se^{sT}}{(e^{sT}-M_{1}(s))^2} \nonumber \\
		&= s f(T)\left[1-\frac{M_{1}(s)}{e^{sT}-M_{1}(s)}\right].
		\end{align}
		Taking the second derivative $f''(T) = \frac{\partial^2 f(T)}{\partial^2 T}$, we obtain 
		\begin{align*}
		&f''(T)\! =\! sf'(T) \left[1 \! -\frac{M_{1}(s)}{e^{sT}-M_{1}(s)}\right]\!\! +\! \frac{s^2 f(T)M_{1}(s)e^{sT}}{(e^{sT}-M_{1}(s))^2} \\
		&= \! s^2\! f(T) \left[1\! -\frac{M_{1}(s)}{e^{sT}-M_{1}(s)}\right]^2\!\! +\! \frac{s^2 f(T)M_{1}(s)e^{sT}}{(e^{sT}-M_{1}(s))^2} > 0.
		\end{align*}
		In the second step above we have used~\eqref{eq:firstder}. The last step follows by noting that $e^{sT} > M_{1}(s)$ for all $s\in \mathcal{S}$, $M_{1}(s)>0$ for all $s$, and $f(T) > 0$. 
	\end{proof}
	
	\begin{lemma}\label{lem:singlehop:convexs}
		For $s > 0$, the function $\Psi_1(s,d,R)$ is convex in $s$.
	\end{lemma}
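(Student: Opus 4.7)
My plan is to prove the stronger property that $\Psi_1(s,d,R)$ is \emph{log-convex} in $s$, from which convexity follows immediately since $\Psi_1 > 0$ on $\mathcal S$. The idea is to expand $\Psi_1$ as an infinite series of moment generating functions and then invoke two standard facts: (i) MGFs are log-convex in $s$ (their logarithm is the cumulant generating function, which is convex by H\"older's inequality); (ii) sums of log-convex functions are log-convex (again a direct consequence of H\"older's inequality).

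First I would use the geometric-series identity used already in the derivation of Chernoff-UBMP. For $s\in\mathcal S$ we have $\beta_1(s) = M_1(s)e^{-s/R} < 1$, so
\begin{align*}
\Psi_1(s,d,R)
= e^{-s(d-1/R)} M_1(s) \sum_{v=0}^{\infty} \beta_1^v(s)
= \sum_{v=0}^{\infty} e^{-s\left(d+\frac{v-1}{R}\right)} M_1^{v+1}(s).
\end{align*}
Next I would recognize each term of this series as an MGF of a shifted iid sum. Since $X_1^0,X_1^1,\dots,X_1^v$ are iid with MGF $M_1(s)$,
\begin{align*}
e^{-s\left(d+\frac{v-1}{R}\right)} M_1^{v+1}(s)
= \mathbb E\!\left[\exp\!\left(s\Bigl(\sum_{i=0}^{v} X_1^i - d - \tfrac{v-1}{R}\Bigr)\right)\right],
\end{align*}
which is the MGF of the random variable $\sum_{i=0}^{v}X_1^i - d - (v-1)/R$ and hence log-convex in $s$.

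Finally, because each summand is log-convex in $s$, so is any finite partial sum by fact (ii). Log-convexity is preserved under monotone pointwise limits of positive functions, so the infinite sum $\Psi_1(s,d,R)$ is itself log-convex in $s$ on $\mathcal S$, hence convex on $\mathcal S$.

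The main obstacle I anticipate is the careful handling of the domain and convergence: $\Psi_1$ is only finite for $s\in\mathcal S$, so the convexity claim should be interpreted on $\mathcal S$; interchange of differentiation and summation (if one instead preferred a direct second-derivative attack \emph{\`a la} Lemma~\ref{lem:singlehop:convexR}) is justified by the positivity of all summands and monotone convergence, but the resulting expression is far messier than the log-convexity route above. No new machinery beyond the log-convexity of MGFs and H\"older's inequality is needed.
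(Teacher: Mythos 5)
Your proof is correct, and it starts from the same geometric-series decomposition $\Psi_1(s,d,R)=\sum_{v\ge 0}e^{-s(d+\frac{v-1}{R})}M_1^{v+1}(s)$ that the paper uses; the difference lies in how each summand is handled. The paper rewrites the $v$-th term as $(\mathbb{E}[e^{-s\hat X}])^{v+1}$ with $\hat X=(d+\frac{v-1}{R})/(v+1)-X_1^1$, notes that $\mathbb{E}[e^{-s\hat X}]$ is convex because $e^{-s\hat X}$ is convex in $s$ for every realization, and then composes with the convex increasing map $x\mapsto x^{v+1}$ on $x\ge 0$. You instead identify each term as the MGF of the shifted sum $\sum_{i=0}^{v}X_1^i-d-\frac{v-1}{R}$ and invoke log-convexity of MGFs via H\"older, then use closure of log-convexity under sums and pointwise limits. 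Your route yields a strictly stronger conclusion --- $\Psi_1$ is log-convex in $s$, not merely convex --- at essentially no extra cost, which could be of independent use when minimizing over $s$ in~\eqref{Chernoff-UBMP:singleHop}. Both arguments share the same domain caveat, which you flag and the paper glosses over: the series identity requires $\beta_1(s)<1$, so the convexity claim properly lives on $\mathcal{S}$ (an interval, since it is a sublevel set of the convex function $\log M_1(s)-s/R$) rather than on all of $s>0$ as the lemma's phrasing suggests.
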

	\begin{proof}
		We have
		\begin{align*}
		&\Psi_1(s,d,R) =\frac{e^{-s(d-\frac{1}{R})} M_{1}(s)}{(1 - \beta_1(s))}\\
		& = e^{-s(d-\frac{1}{R})} \sum_{v=0}^{\infty}  M_{1}(s) \beta^v_1(s)\\
		&= \sum_{v=0}^{\infty} e^{-s(d+\frac{v-1}{R})} M^{v+1}_{1}(s) = \sum_{v=0}^{\infty} (\mathbb{E}[e^{-s\hat{X}}])^{v+1},
		\end{align*}
		where $\hat{X} = (d+\frac{v-1}{R})/(v+1)-X^1_1$. Recall that the sum of convex functions is a convex function. Therefore, from the last step above, we infer that $\Psi_1(s,d,R)$ is convex if $(\mathbb{E}[e^{-sX}])^{v+1}$ is convex for $v \in \{0,1,\ldots$\}. For $s>0$, $e^{-s\hat{X}}$ is convex in $s$ for any $v$ and and any realization of $X^1_1$. Therefore, $\mathbb{E}[e^{-s\hat{X}}]$ is convex, and since $x^{v+1}$ is convex and increasing in $x$, we have that $(\mathbb{E}[e^{-sX}])^{v+1}$ is convex. Hence the result is proven.
	\end{proof}
	Both Lemmas~\ref{lem:singlehop:convexR} and~\ref{lem:singlehop:convexs} can be leveraged to efficiently solve~\eqref{Chernoff-UBMP:singleHop}.
	The heuristic solutions we obtain by solving the Chernoff-UBMP can be improved further for service distributions for which the distribution of a finite sum of service times can be computed exactly. Therefore, we next propose a relatively tight upper bound called $\alpha$-relaxed upper bound and formulate $\alpha$-UBMP.
	
	\subsubsection{$\alpha$-UBMP}
	In the upper bound provided in Lemma~\ref{lem:singleHopUB}, we propose to compute first $K < \infty$ terms of the summation, and use Chernoff bound for the rest of the terms. In the following, we make this precise. We first present a bound on the summation starting from $K$.
	\begin{lemma}\label{lem:singlehop:alpharelaxed}
		For any $K \geq 0$, we have
		\begin{align*}
		\sum_{v = K}^{\infty} \Phi(v,R) \leq \min_{s \in \mathcal{S}} \Psi_1(s,d,R)\beta^K_1(s).
		\end{align*}
	\end{lemma}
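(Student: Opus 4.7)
The plan is to combine the per-term Chernoff bound derived in equation~\eqref{eq:ChernoffPhi} with the geometric-series identity that already produced $\Psi_1(s,d,R)$ in the full sum, but truncated from below at index $K$ rather than starting at $0$. The key point is that the $s$-dependent upper bound on $\Phi(v,R)$ factorizes as (a $v$-independent prefactor) times $\beta_1^v(s)$, so summing over $v \geq K$ gives a geometric tail that contributes precisely a factor $\beta_1^K(s)$ compared to the full sum.

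Concretely, I would start from~\eqref{eq:ChernoffPhi}, which gives, for every $s \in \mathcal{S}$,
\begin{equation*}
\Phi(v,R) \;\leq\; e^{-s(d-\tfrac{1}{R})}\, M_1(s)\, \beta_1^v(s).
\end{equation*}
This inequality holds termwise and for each fixed $s \in \mathcal{S}$, so I can sum it over $v \geq K$ before taking any minimum. Summing yields
\begin{equation*}
\sum_{v=K}^{\infty} \Phi(v,R) \;\leq\; e^{-s(d-\tfrac{1}{R})}\, M_1(s) \sum_{v=K}^{\infty} \beta_1^v(s),
\end{equation*}
and since $\beta_1(s) < 1$ for $s \in \mathcal{S}$, the tail geometric series converges to $\beta_1^K(s)/(1-\beta_1(s))$. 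Recognizing the resulting expression as $\Psi_1(s,d,R)\,\beta_1^K(s)$ and then minimizing the right-hand side over $s \in \mathcal{S}$ — which preserves the inequality because the left-hand side does not depend on $s$ — delivers the claimed bound.

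There is no real obstacle here: the only subtlety worth a brief sentence in the write-up is the direction of the inequality when moving the minimum outside the sum. Because the Chernoff inequality $\Phi(v,R) \leq e^{-s(d-1/R)}M_1(s)\beta_1^v(s)$ is valid for \emph{every} admissible $s$, the bound on $\sum_{v \geq K} \Phi(v,R)$ is valid for every $s \in \mathcal{S}$, hence in particular for the minimizing $s$; this is the correct direction and avoids the (false) step of exchanging $\min$ and $\sum$ in the opposite order. Convergence of the geometric tail is guaranteed by the definition of $\mathcal{S}$ in~\eqref{eq:calS}, which is precisely the set where $\beta_1(s) < 1$.
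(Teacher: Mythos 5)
Your proposal is correct and follows essentially the same route as the paper, which proves the lemma by applying the per-term Chernoff bound from~\eqref{eq:ChernoffPhi} and repeating the geometric-series summation of~\eqref{eq:ChernoffSinglehop} starting at $v=K$. Your explicit remark about fixing $s\in\mathcal{S}$ before summing and only then minimizing is just a cleaner phrasing of the same $\min$--$\sum$ exchange the paper already uses.
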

	\begin{proof}
		The result follows by using the upper bound for $\Phi(v,R)$ given in~\eqref{eq:ChernoffPhi} and repeating the steps in~\eqref{eq:ChernoffSinglehop} for the summation over $v$ from $K$ to infinity.
	\end{proof}
	For the single hop scenario we define $\alpha$ as follows.
	\begin{align*}
	\alpha = 1 + \frac{\min_{s \in \mathcal{S}} \Psi_1(s,d,R)\beta^K_1(s)}{\sum_{v = 0}^{K-1} \Phi(v,R)}.
	\end{align*}
	Note that $\alpha$ depends on the value of $K$. Using Lemmas~\ref{lem:singleHopUB} and~\ref{lem:singlehop:alpharelaxed}, we next state the $\alpha$-relaxed upper bound without proof.
	\begin{theorem}
		Given $d$, the $\alpha$-relaxed upper bound for the violation probability for a single hop is given by
		\begin{align*}
		\lim_{t \rightarrow \infty}\! \P\{\!D(\nR)\! >\! t\}\! \leq\!\! \sum_{v = 0}^{K-1}\!\! \Phi(v,R)\! +\! \min_{s\in \mathcal{S}} \! \Psi_1(s,d,R)\beta^K_1\!(s).
		\end{align*}
	\end{theorem}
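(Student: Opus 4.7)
The plan is to derive this bound by combining the union bound from Lemma~\ref{lem:singleHopUB} with the Chernoff-style tail bound from Lemma~\ref{lem:singlehop:alpharelaxed}. These two results already do most of the heavy lifting; what remains is to split the infinite sum at an appropriately chosen index $K$ and bound each piece separately.

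More concretely, I would first invoke Lemma~\ref{lem:singleHopUB} to write
\begin{align*}
\lim_{t \rightarrow \infty} \P\{D(\nR) > t\} \leq \sum_{v=0}^{\infty} \Phi(v,R).
\end{align*}
Then I would split the sum at index $K$:
\begin{align*}
\sum_{v=0}^{\infty} \Phi(v,R) = \sum_{v=0}^{K-1} \Phi(v,R) + \sum_{v=K}^{\infty} \Phi(v,R).
\end{align*}
For the first $K$ terms, no further bounding is needed, since they are to be computed exactly (or numerically) for service distributions where the distribution of a finite sum of $X_1^i$ is available. For the tail sum starting at $v = K$, I would directly apply Lemma~\ref{lem:singlehop:alpharelaxed}, which gives $\sum_{v = K}^{\infty} \Phi(v,R) \leq \min_{s \in \mathcal{S}} \Psi_1(s,d,R)\beta_1^K(s)$. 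Adding the two bounds yields the claimed inequality.

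There is essentially no obstacle here beyond bookkeeping, since the non-trivial content is already contained in Lemma~\ref{lem:singlehop:alpharelaxed} (whose proof uses the Chernoff bound on $\Phi(v,R)$ derived in~\eqref{eq:ChernoffPhi} together with the geometric sum $\sum_{v=K}^{\infty} \beta_1^v(s) = \beta_1^K(s)/(1-\beta_1(s))$, valid because $\beta_1(s) < 1$ on $\mathcal{S}$). The only mild subtlety worth commenting on is that the minimizer $s \in \mathcal{S}$ in the tail bound may depend on $K$, but this does not affect the validity of the bound since the inequality holds for \emph{any} admissible $s$ and we take the infimum. The resulting inequality is tighter than the pure Chernoff-UBMP bound in Theorem~\ref{thm:singlehop} whenever the first $K$ terms $\Phi(v,R)$ are smaller than their Chernoff surrogates, which motivates the $\alpha$-relaxation: as $K$ grows, the residual Chernoff contribution shrinks like $\beta_1^K(s)$, driving $\alpha \to 1$.
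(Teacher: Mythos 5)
Your proof is correct and follows exactly the route the paper intends: it states this theorem ``without proof'' as an immediate consequence of Lemma~\ref{lem:singleHopUB} (union bound) and Lemma~\ref{lem:singlehop:alpharelaxed} (Chernoff bound on the tail $\sum_{v\geq K}\Phi(v,R)$), which is precisely your splitting argument. Nothing to add.
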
 
	Note that, by definition the $\alpha$-relaxed upper bound is at most $\alpha$ times worse than the upper bound $\sum_{v = 0}^{\infty} \Phi(v,R)$. More precisely, the $\alpha$-relaxed upper bound has $\alpha$ approximation factor with respect to $\sum_{v = 0}^{\infty} \Phi(v,R)$. To see this,
	{\allowdisplaybreaks\begin{align*}
	&\sum_{v = 0}^{K-1} \Phi(v,R) + \min_{s\in \mathcal{S}} \; \Psi_1(s,d,R)\beta^K_1(s)\! \\
	&= \! \sum_{v = 0}^{K-1} \Phi(v,R)\!\left(1 + \frac{\min_{s \in \mathcal{S}} \Psi_1(s,d,R)\beta^K_1(s)}{\sum_{v = 0}^{K-1} \Phi(v,R)} \right) \\
	&\leq \! \alpha \! \sum_{v = 0}^{\infty} \Phi(v,R).
	\end{align*}}
	Note that $\alpha > 1$, and it is easy to see that as $K$ increases, the value of $\alpha$ approaches $1$ from above. In this work, we choose $K$ the largest value that is computationally tractable in numerical evaluations. 
	Now, we formulate \prob~as follows:
	\begin{equation*}
	\begin{aligned}
	\underset{\frac{1}{d} \leq R < \mu_{1}}{\text{min}} \sum_{v = 0}^{K-1} \Phi(v,R) + \min_{s\in \mathcal{S}} \Psi_1(s,d,R)\beta^K_1(s).
	\end{aligned}
	\end{equation*} 

	\section{Extensions to Two-Hop and N-Hop Scenarios}\label{sec:twohop}
	In this section, we present Chernoff-UBMP and \prob~ for the two-hop scenario and also present Chernoff-UBMP for N-hop tandem queuing network. 
	
	In the following we first focus on the two-hop scenario. Similar to the case of single-hop scenario, we use Reich's equation and apply union bound to obtain an upper bound for the AoI violation probability which is presented in the following lemma.  
	\begin{lemma}\label{lem:twoHop}
		Given $d$, and $\nR$ as defined in \eqref{eq:nR}, we have
		\begin{align*}
		\lim_{t\rightarrow \infty} \P(D(\nR) > t)\! \leq \! \lim_{\nR\rightarrow \infty} \sum_{v_0=0}^{\nR} \sum_{v_1=0}^{\nR-v_0} \Phi(v_0,v_1,R),
		\end{align*}
		where
		\begin{align*}
		\Phi(v_0,v_1,R) \triangleq \P\left\{\!\sum_{i=0}^{v_0}\! X_{2}^{i} +\! \sum_{i=0}^{v_1}\! X_{1}^{i} \! >\! d\! +\! \frac{v_0 + v_1 - 1}{R}\!\right\}.
		\end{align*}
	\end{lemma}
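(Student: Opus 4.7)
The plan is to mirror the single-hop argument from Lemma~\ref{lem:singleHopUB} but now unrolling Reich's recursion through both servers, then applying Boole's inequality over the two resulting maximization indices.

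First, I would apply~\eqref{eq:departureTime} at node 2, using $A_2(n,R) = D_1(n,R)$, to express
\[
D(n,R) = D_2(n,R) = \max_{0 \leq v_0 \leq n} \Bigl\{ D_1(n-v_0,R) + \sum_{i=0}^{v_0} X_2^{n-i}\Bigr\}.
\]
Then, since the arrival instants at node 1 are deterministic ($A_1(m,R) = m/R$), applying~\eqref{eq:departureTime} at node 1 gives
\[
D_1(n-v_0,R) = \max_{0 \leq v_1 \leq n-v_0} \Bigl\{ \frac{n-v_0-v_1}{R} + \sum_{i=0}^{v_1} X_1^{n-v_0-i}\Bigr\}.
\]
Substituting and specializing to $n = \nR$ turns the event $\{D(\nR) > t\}$ into a double maximum event.

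Next, I would apply the union bound over both indices:
\[
\P\{D(\nR) > t\} \leq \sum_{v_0=0}^{\nR} \sum_{v_1=0}^{\nR-v_0} \P\Bigl\{ \sum_{i=0}^{v_0} X_2^{\nR-i} + \sum_{i=0}^{v_1} X_1^{\nR-v_0-i} > t - \frac{\nR-v_0-v_1}{R}\Bigr\}.
\]
Using $\nR = \lceil R(t-d) \rceil \leq R(t-d) + 1$ bounds the right-hand threshold from below by $d + (v_0 + v_1 - 1)/R$, exactly as in the single-hop proof. Since $\{X_1^n\}$ and $\{X_2^n\}$ are i.i.d.\ (and mutually independent), each summand equals $\Phi(v_0,v_1,R)$ after a harmless re-indexing, because the joint distribution of the two partial sums depends only on $v_0$ and $v_1$.

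Finally, letting $t \to \infty$ sends $\nR \to \infty$, yielding the stated double-sum bound. The only step that requires care is the substitution step connecting the two Reich recursions: I need to verify that conditioning on $D_1(n-v_0,R)$ does not perturb the distribution of the node-2 service times $X_2^{n-i}$ when these are pulled outside, but this is immediate because $\{X_2^i\}$ are independent of $\{X_1^i\}$ and hence of $D_1$. I expect this independence bookkeeping, together with the correct alignment of summation ranges so that $v_0 + v_1 \leq \nR$, to be the most delicate part; everything else is a direct repetition of the single-hop manipulation.
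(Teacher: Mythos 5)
Your proposal is correct and follows essentially the same route as the paper's proof in Appendix~\ref{lem:twoHop:proof}: Reich's equation at node~2 with $A_2 = D_1$, Reich's equation again at node~1, the union bound over both indices, the bound $\nR \leq R(t-d)+1$ to reduce the threshold to $d + (v_0+v_1-1)/R$, and re-indexing via the i.i.d.\ assumption. The only cosmetic difference is that the paper applies the union bound in two stages (first over $v_0$, then over $v_1$ after substituting $D_1$) rather than once over the combined double maximum; the independence concern you flag is harmless since the union bound needs no independence.
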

	\begin{proof}
		The proof is given in Appendix~\ref{lem:twoHop:proof}.
	\end{proof}
	
	\subsection{Chernoff-UBMP for Two-Hop Scenario}
	\begin{theorem}\label{thm:twoHop:Chernoff}
		For the two-hop network with deterministic arrivals, the violation probability is upper bounded as follows:
		\begin{align*}
		\lim_{t \rightarrow \infty} \P\{D(\nR) > t\} \leq \min_{s\in \mathcal{S}} \Psi_2(s,d,R), 
		\end{align*}
		where
		\begin{align}\label{eq:Psi2}
		\Psi_2(s,d,R) =  \frac{e^{-s(d-\frac{1}{R})} M_{1}(s) M_{2}(s)}{(1 - \beta_1(s))(1 - \beta_2(s))}.
		\end{align}
	\end{theorem}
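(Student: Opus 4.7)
\textbf{Proof plan for Theorem~\ref{thm:twoHop:Chernoff}.} The plan is to start from the union bound established in Lemma~\ref{lem:twoHop} and then apply the Chernoff technique term by term, mirroring the single-hop derivation in~\eqref{eq:ChernoffPhi}--\eqref{eq:ChernoffSinglehop} but with two layers of summation corresponding to the two queues. Concretely, I would fix an arbitrary $s\in\mathcal{S}$ and bound each $\Phi(v_0,v_1,R)$ by the Chernoff inequality, exploiting that $\{X_1^i\}$ and $\{X_2^i\}$ are mutually independent, so that the MGF of the sum factorizes into a product of powers of $M_1(s)$ and $M_2(s)$.

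The key algebraic step is to rearrange
\[
e^{-s(d+(v_0+v_1-1)/R)}\,M_2^{v_0+1}(s)\,M_1^{v_1+1}(s)
= e^{-s(d-1/R)}\,M_1(s)\,M_2(s)\,\beta_1^{v_1}(s)\,\beta_2^{v_0}(s),
\]
using the definition $\beta_k(s)=M_k(s)/e^{s/R}$ from~\eqref{eq:beta} to absorb the linear-in-$(v_0+v_1)$ exponent into the geometric factors. Substituting this into the bound of Lemma~\ref{lem:twoHop} and enlarging the finite summation region $\{0\le v_0\le \hat n_{\mathrm R},\,0\le v_1\le \hat n_{\mathrm R}-v_0\}$ to the full quadrant $\{v_0,v_1\ge 0\}$ (which is valid because every term is nonnegative) decouples the double sum into a product of two independent geometric series. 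The condition $s\in\mathcal{S}$ guarantees $\beta_1(s),\beta_2(s)<1$, so both series converge and evaluate to $1/(1-\beta_1(s))$ and $1/(1-\beta_2(s))$, yielding exactly $\Psi_2(s,d,R)$ as defined in~\eqref{eq:Psi2}.

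Finally, since the chain of inequalities holds for every $s\in\mathcal{S}$, I would take the infimum on the right-hand side to obtain $\min_{s\in\mathcal{S}}\Psi_2(s,d,R)$; note that because the same $s$ is used for every term before summing, this minimization is legitimate and does not require interchanging $\min$ with an infinite sum. Passing to the limit $t\to\infty$ (equivalently $\hat n_{\mathrm R}\to\infty$) on the left, as in Lemma~\ref{lem:twoHop}, completes the argument.

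\textbf{Main obstacle.} The only non-routine part is justifying that the union bound in Lemma~\ref{lem:twoHop} combined with term-by-term Chernoff indeed gives a uniformly valid bound as $\hat n_{\mathrm R}\to\infty$; this hinges on the absolute convergence of $\sum_{v_0,v_1\ge 0}\beta_1^{v_1}(s)\beta_2^{v_0}(s)$ for $s\in\mathcal{S}$, which is ensured by the defining property of $\mathcal{S}$. The rest is careful bookkeeping of the exponent to isolate the geometric factors from the constants $e^{-s(d-1/R)}M_1(s)M_2(s)$.
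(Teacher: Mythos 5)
Your proposal is correct and follows essentially the same route as the paper: start from Lemma~\ref{lem:twoHop}, apply the Chernoff bound term by term using the independence of the two service processes to factorize the MGF into $M_1^{v_1+1}(s)M_2^{v_0+1}(s)$, and evaluate the resulting geometric sums for $s\in\mathcal{S}$. The only (harmless, and arguably cleaner) deviation is that you bound the triangular sum $\sum_{v_0=0}^{\nR}\sum_{v_1=0}^{\nR-v_0}\beta_1^{v_1}(s)\beta_2^{v_0}(s)$ by enlarging the index set to the full quadrant so it factors into two geometric series, whereas the paper computes the exact limit of the triangular partial sums in Lemma~\ref{lem:infSum} via an L'Hospital argument; both give $1/\bigl((1-\beta_1(s))(1-\beta_2(s))\bigr)$.
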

	\begin{proof}
		We use the relation between departure times, arrival times and the service times given by~\eqref{eq:departureTime} iteratively and apply union bound and Chernoff bound to obtain the result. The details of the proof are given in Appendix~\ref{thm:twoHop:Chernoff:proof}.
	\end{proof}
	
	
	The Chernoff-UBMP problem for the two-hop network is stated below:
	\begin{equation}\label{prob:twoHopUBMP}
	\begin{aligned}
	\underset{\frac{1}{d} \leq R < \mu}{\text{min}}\, \min_{s \in \mathcal{S}}\; \Psi_2(s,d,R).
	\end{aligned}
	\end{equation}
	The lemmas below provide convexity properties of $\Psi_2(s,d,R)$. Since the proofs of the lemmas are similar to that in the case of single-hop scenario (Lemmas~\ref{lem:singlehop:convexR} and~\ref{lem:singlehop:convexs}), we omit them here.
	\begin{lemma}\label{lem:twHopConvexR}
		For the two-hop network with deterministic arrivals, given $s \in \mathcal{S}$ and $d > 0$, $\Psi_2(s,d,R)$ is convex with respect to $\frac{1}{R}$.
	\end{lemma}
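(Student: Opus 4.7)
The plan is to adapt the argument used in Lemma~\ref{lem:singlehop:convexR}, but to simplify it via the geometric-series expansion trick already exploited in the proof of Lemma~\ref{lem:singlehop:convexs}. Setting $T=1/R$ and multiplying numerator and denominator of \eqref{eq:Psi2} by $e^{2sT}$, I would first rewrite
\begin{align*}
\Psi_2(s,d,T) = e^{-sd}M_1(s)M_2(s)\cdot g(T), \qquad g(T) \triangleq \frac{e^{3sT}}{(e^{sT}-M_1(s))(e^{sT}-M_2(s))}.
\end{align*}
Since $e^{-sd}M_1(s)M_2(s)>0$ and does not depend on $T$, it is enough to show that $g$ is convex in $T$.

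For this, rather than grinding through $g''(T)$ (which is messy because the two denominator factors interact), I would expand both geometric factors using $\frac{1}{1-\beta_k(s)}=\sum_{v\geq 0}\beta_k^v(s)$, valid for $s\in\mathcal{S}$ since $\beta_k(s)\in(0,1)$. Recalling $\beta_k(s)=M_k(s)e^{-sT}$ and reorganizing, $\Psi_2$ becomes the double series
\begin{align*}
\Psi_2(s,d,T) = e^{-sd}\sum_{u=0}^{\infty}\sum_{w=0}^{\infty} M_1^{u+1}(s)M_2^{w+1}(s)\,e^{-s(u+w-1)T}.
\end{align*}
Every summand is a positive constant times $e^{a_{uw}T}$ with $a_{uw}=-s(u+w-1)\in\mathbb{R}$. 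Since the exponential $T\mapsto e^{aT}$ is convex in $T$ for every real $a$ (its second derivative equals $a^2 e^{aT}\geq 0$), each summand is convex in $T$, and therefore so is every partial sum.

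The only remaining step, which is where the main obstacle lies, is justifying that convexity passes to the limit of the double series. This is straightforward: absolute convergence of the series follows directly from $\beta_k(s)<1$, and for any partial sum $S_N$ one has $S_N(\lambda T_1+(1-\lambda)T_2)\leq \lambda S_N(T_1)+(1-\lambda)S_N(T_2)$; letting $N\to\infty$ preserves the inequality and yields convexity of $\Psi_2(s,d,T)$ in $T$. If one instead insists on the direct second-derivative route of Lemma~\ref{lem:singlehop:convexR}, it is in principle feasible by defining $f(T)=g(T)$ and computing $f''(T)$ analogously—using $f'(T)$ to express $f''(T)$ as a sum of manifestly nonnegative terms—but the presence of two MGF factors in the denominator makes the bookkeeping substantially heavier, which is why the series route is preferable.
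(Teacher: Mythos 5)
Your proof is correct, but it takes a different route from the one the paper intends. The paper omits the proof of this lemma, stating only that it mirrors the single-hop argument of Lemma~\ref{lem:singlehop:convexR}, which is a direct second-derivative computation: there one writes $\Psi_1=e^{-sd}M_1(s)f(T)$ with $f(T)=e^{2sT}/(e^{sT}-M_1(s))$ and shows $f''(T)>0$ by expressing $f''$ through $f'$ as a sum of manifestly positive terms; the intended two-hop proof would do the same with your $g(T)=e^{3sT}/\bigl((e^{sT}-M_1(s))(e^{sT}-M_2(s))\bigr)$. You instead expand both geometric factors, obtaining $\Psi_2=e^{-sd}\sum_{u,w\geq 0}M_1^{u+1}(s)M_2^{w+1}(s)e^{-s(u+w-1)T}$, and invoke convexity of $T\mapsto e^{aT}$ term by term together with passage to the limit --- which is precisely the device the paper itself uses for convexity in $s$ (Lemma~\ref{lem:singlehop:convexs}), transplanted to the $T$ variable. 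Your identity $\Psi_2=e^{-sd}M_1(s)M_2(s)g(T)$ checks out, the series converges absolutely for $s\in\mathcal{S}$ since $\beta_k(s)<1$ there, the relevant set of $T$ values (those with $e^{sT}>\max(M_1(s),M_2(s))$) is an interval so convexity is well posed, and pointwise limits of convex functions are convex, so the argument is sound. The trade-off: your route scales painlessly to the $N$-hop bound $\Psi_N$ (an $N$-fold product becomes an $N$-fold series of convex exponentials) and avoids the bookkeeping of differentiating a product of two denominator factors, whereas the paper's differentiation route yields \emph{strict} convexity directly, as claimed in the single-hop lemma. Since the present lemma only asserts convexity this costs you nothing, and if strictness were wanted you could note that the $u=w=0$ summand $M_1(s)M_2(s)e^{sT}$ is strictly convex for $s>0$.
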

	
	\begin{lemma}\label{lem:twHopConvexs}
		For the two-hop network with deterministic arrivals, given $s \in \mathcal{S}$ and $d > 0$, $\Psi_2(s,d,R)$ is convex with respect to $s$.
	\end{lemma}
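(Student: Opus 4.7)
The plan is to mirror the strategy of Lemma~\ref{lem:singlehop:convexs}: expand $\Psi_2(s,d,R)$ as a nonnegative double series, recognize each summand as the moment generating function of a single random variable, and then lift pointwise convexity of the exponential to the whole sum. Concretely, since $s \in \mathcal{S}$ guarantees $\beta_1(s), \beta_2(s) \in (0,1)$, I would replace each factor $(1-\beta_k(s))^{-1}$ in~\eqref{eq:Psi2} by its geometric series and substitute $\beta_k(s) = M_k(s) e^{-s/R}$. Collecting exponents yields
\begin{align*}
\Psi_2(s,d,R) = \sum_{v_1=0}^{\infty}\sum_{v_2=0}^{\infty} e^{-s\left(d + \frac{v_1+v_2-1}{R}\right)} M_1^{v_1+1}(s)\, M_2^{v_2+1}(s).
\end{align*}

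Using the i.i.d.\ assumption within each queue and independence across the two queues, I would set $Y_k^{(v_k)} = \sum_{i=0}^{v_k} X_k^i$ and rewrite the $(v_1,v_2)$-th term as $\mathbb{E}[e^{s W_{v_1,v_2}}]$ with $W_{v_1,v_2} \triangleq Y_1^{(v_1)} + Y_2^{(v_2)} - d - (v_1+v_2-1)/R$. For every realization of $W_{v_1,v_2}$ the scalar map $s \mapsto e^{s W_{v_1,v_2}}$ is convex, and the expectation of a family of convex functions is convex, so each summand is convex in $s$. A countable nonnegative sum of convex functions is convex, hence $\Psi_2(s,d,R)$ is convex in $s$.

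The one technical check I anticipate is justifying the termwise manipulations: the geometric-series substitution and the interchange of the expectation with the double sum. Because every summand is nonnegative and $\beta_k(s) < 1$ on $\mathcal{S}$ gives absolute convergence, Tonelli's theorem authorizes both operations, and convexity of the partial sums transfers to the limit. The independence of the two queues is essential here, as it lets the product $M_1^{v_1+1}(s)\, M_2^{v_2+1}(s)$ collapse into a single MGF; without it one would face a product of convex functions, from which convexity does not directly follow.
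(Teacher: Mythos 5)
Your proposal is correct and is essentially the two-hop analogue of the paper's proof of Lemma~\ref{lem:singlehop:convexs}, which is exactly what the paper intends when it omits this proof as ``similar'' to the single-hop case: expand the geometric factors, recognize each term of the resulting series as $e^{-s(d+\frac{v_1+v_2-1}{R})}M_1^{v_1+1}(s)M_2^{v_2+1}(s)$, and invoke pointwise convexity of the exponential together with closure of convexity under expectations and nonnegative sums. Your only (harmless, arguably cleaner) deviation is that you absorb the whole $(v_1,v_2)$-term into a single expectation $\mathbb{E}[e^{sW_{v_1,v_2}}]$, whereas the paper's single-hop argument writes each term as a power $(\mathbb{E}[e^{-s\hat{X}}])^{v+1}$ and then applies the convex-increasing-of-convex composition rule.
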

	
	\subsection{$\alpha$-UBMP for Two-Hop Scenario}
	In the following theorem we present the $\alpha$-relaxed upper bound.
	\begin{theorem}\label{thm:twohop:alphaUB}
		For the two-hop network with deterministic arrivals, for any $K \geq 1$, the $\alpha$-relaxed upper bounded is given by
		\begin{align*}
		\sum_{v_0=0}^{K - 1}  \sum_{v_{1}=0}^{K-1} \Phi(v_0,v_1,R) + \min_{s\in \mathcal{S}} \Psi(s,d,R,K),
		\end{align*}
		where
		\begin{align*}
		&\Psi(s,d,R,K) \nonumber\\
		&= \! e^{-s(d-\frac{1}{R}\!)} M_{1}(s)M_{2}(s) \frac{(\beta_1^K\!(s)\! +\! \beta_2^K\!(s)\!-\!\beta_1^K\!(s)\beta^K_2\!(s)\!)}{(1\!-\!\beta_1(s))(1\!-\!\beta_2(s))}.
		\end{align*}
	\end{theorem}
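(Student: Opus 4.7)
The plan is to start from the upper bound in Lemma~\ref{lem:twoHop} and partition the double summation into a ``head'' where both indices satisfy $v_0 < K$ and $v_1 < K$, and a ``tail'' where at least one index is at least $K$. The head contribution $\sum_{v_0=0}^{K-1}\sum_{v_1=0}^{K-1} \Phi(v_0,v_1,R)$ will be kept exactly (it is what one computes directly whenever finite convolutions of the service distribution are tractable), while the tail will be bounded term-by-term using the Chernoff inequality, exactly paralleling the single-hop argument of Lemma~\ref{lem:singlehop:alpharelaxed}.

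For a generic term I would apply the Chernoff bound to $\Phi(v_0,v_1,R)$. Because $\{X_1^i\}$ and $\{X_2^i\}$ are mutually independent and i.i.d.\ within each hop, the MGF of the total service sum factorises into $M_1^{v_1+1}(s) M_2^{v_0+1}(s)$. After absorbing the deterministic offset $e^{-s(d + (v_0+v_1-1)/R)}$ into the prefactor, and using $\beta_k(s) = M_k(s)/e^{s/R}$ from~\eqref{eq:beta}, one obtains $\Phi(v_0,v_1,R) \leq e^{-s(d-1/R)} M_1(s) M_2(s)\,\beta_1^{v_1}(s)\beta_2^{v_0}(s)$ for every $s\in\mathcal S$. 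This is the natural two-hop analogue of~\eqref{eq:ChernoffPhi}.

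Next I would sum this per-term bound over the tail region. The crucial algebraic step is inclusion--exclusion: the set $\{v_0\geq K\}\cup\{v_1\geq K\}$ decomposes as $\{v_0\geq K,\,v_1\geq 0\} + \{v_0\geq 0,\,v_1\geq K\} - \{v_0\geq K,\,v_1\geq K\}$. Each of these three sums factorises into a product of two geometric series in $\beta_1(s)$ and $\beta_2(s)$, which converge because $\beta_k(s)<1$ for $s\in\mathcal S$; the three closed forms combine to give exactly the factor $\bigl(\beta_1^K(s)+\beta_2^K(s)-\beta_1^K(s)\beta_2^K(s)\bigr)/\bigl((1-\beta_1(s))(1-\beta_2(s))\bigr)$ that appears inside $\Psi(s,d,R,K)$. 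Since the per-term bound holds for every $s\in\mathcal S$, I can take $\min_{s\in\mathcal S}$ at the end and add the head sum to obtain the claimed expression.

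The main obstacle is bookkeeping rather than new insight: I must verify that inclusion--exclusion counts every tail index pair exactly once, and justify the passage $\hat n_\text{R}\to\infty$ in Lemma~\ref{lem:twoHop} by which the triangular partial sum $\sum_{v_0=0}^{\hat n_\text{R}}\sum_{v_1=0}^{\hat n_\text{R}-v_0}\Phi(v_0,v_1,R)$ becomes the full-quadrant sum. The latter is immediate by monotone convergence since every $\Phi(v_0,v_1,R)$ is non-negative. Once these two points are handled, assembling the head and tail contributions yields the stated $\alpha$-relaxed upper bound.
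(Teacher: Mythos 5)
Your proposal is correct and follows essentially the same route as the paper: split the double sum at $K$, keep the head $\sum_{v_0=0}^{K-1}\sum_{v_1=0}^{K-1}\Phi(v_0,v_1,R)$ exactly, bound each tail term by the Chernoff bound $e^{-s(d-\frac{1}{R})}M_1(s)M_2(s)\beta_1^{v_1}(s)\beta_2^{v_0}(s)$, and sum the resulting geometric series over the tail region. The only cosmetic difference is bookkeeping: you handle the tail by inclusion--exclusion over $\{v_0\ge K\}\cup\{v_1\ge K\}$, whereas the paper partitions it disjointly into $\{v_0<K,\,v_1\ge K\}$ and $\{v_0\ge K\}$; both give the same numerator $\beta_1^K(s)+\beta_2^K(s)-\beta_1^K(s)\beta_2^K(s)$.
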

	\begin{proof}
		The proof is given in Appendix~\ref{thm:twohop:alphaUB:proof}.
	\end{proof}
	We note that the $\alpha$-relaxed upper bound is computationally expensive when compared to that in the single-hop scenario because of the nested sum.

	{\color{black}\subsection{N-hop Scenario}\label{sec:multihop}}
	For an N-hop tandem network we have $k \in \{1,2,\ldots,N\}$ and $D(n) = D_N(n)$. For simplicity of presentation, in this section, we assume that $X^n_{k}$ are identically distributed. Therefore, we have $\mu = \mu_{k}$ for all $k$, and $M_{k}(s) = M_{1}(s)$ for all k. We now define the set $\mathcal{S}$ as follows.
	\begin{align*}
	\mathcal{S} = \{s > 0: M_{1}(s) < e^{s/R}\}.
	\end{align*}
	
	\begin{lemma}\label{lem:NHop}
		Given $d$, and $\nR$ as defined in \eqref{eq:nR}, we have
		\vspace{-.3cm}
		\begin{align*}
		\lim_{t\rightarrow \infty} \P(D(\nR)\! > \! t) \! \leq \! \lim_{\nR\rightarrow \infty}\!\! \sum_{v_0=0}^{\nR}\!\! \sum_{v_1=0}^{\nR-v_0}\!\!\!\ldots \!\!\!\!\sum_{v_{N-1}=0}^{\nR-v_{N-2}}  \!\!\!\!\Phi(v_0^{N-1}\!\!,\!R),
		\vspace{-.5cm}
		\end{align*}
		where
		\vspace{-.3cm}
		\begin{align}\label{eq:Phi:Nhop}
		\Phi(v_0^{N-1}\!,R)\! \triangleq \! \P\left\{\!\sum_{k=0}^{N-1}\! \sum_{i=0}^{v_k} X_{N-k}^{i} > d \!+\! \frac{\sum_{k=0}^{N-1} v_k\! -\! 1}{R}\!\right\},
		\end{align}
		and $v_0^{N-1} = (v_0,v_1,\ldots,v_{N-1})$.
	\end{lemma}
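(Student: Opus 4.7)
My plan is to follow the same blueprint that proves Lemma \ref{lem:twoHop}, but to apply the max-plus recursion through all $N$ hops rather than just two. The starting point is Reich's equation \eqref{eq:departureTime} at the last node:
\begin{align*}
D_N(\nR) = \max_{0 \leq v_0 \leq \nR}\Bigl\{A_N(\nR-v_0) + \sum_{i=0}^{v_0} X_N^{\nR-i}\Bigr\}.
\end{align*}
Because the hops are in tandem, $A_N(n) = D_{N-1}(n)$, so I can substitute Reich's equation for $D_{N-1}$, then for $D_{N-2}$, and so on, peeling off one node per iteration. After $N$ substitutions I end up with a single nested maximum over $(v_0,\ldots,v_{N-1})$ with $v_k \in \{0,\ldots,\nR - v_0 - \cdots - v_{k-1}\}$, whose argument is
\begin{align*}
A_1\Bigl(\nR - \sum_{k=0}^{N-1} v_k\Bigr) + \sum_{k=0}^{N-1}\sum_{i=0}^{v_k} X_{N-k}^{n_k - i},
\end{align*}
where $n_k = \nR - v_0 - \cdots - v_{k-1}$ (and $n_0 = \nR$). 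Using $A_1(m) = m/R$ replaces the leading term by $(\nR - \sum_k v_k)/R$.

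Next I apply Boole's inequality to the event $\{D_N(\nR) > t\}$ over the $N$-fold index range, turning the nested maximum into a nested sum of tail probabilities for fixed $(v_0,\ldots,v_{N-1})$. Using $\nR = \lceil R(t-d)\rceil \leq R(t-d)+1$ gives
\begin{align*}
t - \frac{\nR - \sum_{k=0}^{N-1} v_k}{R} \;\geq\; d + \frac{\sum_{k=0}^{N-1} v_k - 1}{R},
\end{align*}
so each term is upper bounded by $\P\{\sum_{k,i} X_{N-k}^{n_k-i} > d + (\sum_k v_k - 1)/R\}$. Because the $\{X_k^n\}_n$ are i.i.d.\ across $n$ for each $k$, and independent across $k$, the joint distribution of the sum depends only on the counts $v_k$, not on the specific indices $n_k - i$; hence the probability equals $\Phi(v_0^{N-1},R)$ as defined in \eqref{eq:Phi:Nhop}. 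Finally, letting $t\to\infty$ (so $\nR\to\infty$) yields the claimed bound.

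The main bookkeeping obstacle, as in the two-hop proof, is keeping track of the shifting upper limits of the sums when unrolling Reich's equation: after substituting once, the index of the departure being expanded is $\nR - v_0$, not $\nR$, so $v_1$ ranges up to $\nR - v_0$, $v_2$ up to $\nR - v_0 - v_1$, and so on. Once this triangular index structure is set up correctly, the rest is a straightforward induction on the number of hops, with the base case $N=2$ already handled by Lemma \ref{lem:twoHop}; alternatively, one can state the argument as a direct $N$-fold unrolling rather than a formal induction, which is what I would do here since the identical computation at each hop is clearer when displayed once at the general level.
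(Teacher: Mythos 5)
Your proposal is correct and matches the paper's intended argument: the paper omits the proof, stating only that it "follows similar steps as the proof of Lemma~\ref{lem:twoHop}," which is precisely your $N$-fold unrolling of Reich's equation with repeated union bounds, the substitution $A_1(m)=m/R$ together with $\nR \leq R(t-d)+1$, and re-indexing via the i.i.d.\ assumption. The triangular index bookkeeping you highlight is the only nontrivial point, and you handle it exactly as the two-hop appendix does.
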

	\begin{proof}
		The proof follows similar steps as the proof of Lemma~\ref{lem:twoHop} and is omitted.
	\end{proof}
	\begin{theorem}\label{thm:multihop}
		For the $N$-hop network with deterministic arrivals, the violation probability is upper bounded as follows:
		\begin{align*}
		\lim_{t \rightarrow \infty} \P\{D(\nR) > t\} \leq \min_{s\in \mathcal{S}} \Psi_N(s,d,R), 
		\end{align*}
		where
		\begin{align}\label{eq:Psi}
		\Psi_N(s,d,R) =  \frac{e^{-s(d-\frac{1}{R})}[M_{1}(s)]^{N}}{[1 - \frac{M_{1}(s)}{e^{s/R}}]^{N}}.
		\end{align}
	\end{theorem}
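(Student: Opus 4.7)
The plan is to mimic the two-hop derivation in Theorem~\ref{thm:twoHop:Chernoff} and generalize it cleanly, using the union-bound expression in Lemma~\ref{lem:NHop} as the starting point and then exploiting the i.i.d.\ structure of the service times to decouple the $N$ summations into a product of $N$ identical geometric series.

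First I would apply the Chernoff bound directly to $\Phi(v_0^{N-1},R)$ defined in~\eqref{eq:Phi:Nhop}. Because the $X_k^i$ are i.i.d.\ across both indices with common MGF $M_1(s)$, and the total number of summands appearing in the event is $\sum_{k=0}^{N-1}(v_k+1)=\sum_k v_k + N$, for any $s>0$ I obtain
\begin{align*}
\Phi(v_0^{N-1},R) \le e^{-s\left(d+\frac{\sum_k v_k-1}{R}\right)}\,[M_1(s)]^{\sum_k v_k + N}.
\end{align*}
The exponent neatly splits as $e^{-s(d-1/R)}\cdot e^{-s\sum_k v_k/R}$, which lets me pull out a constant prefactor and a product over $k$:
\begin{align*}
\Phi(v_0^{N-1},R) \le e^{-s(d-\frac{1}{R})}\,[M_1(s)]^N \prod_{k=0}^{N-1}\left(\frac{M_1(s)}{e^{s/R}}\right)^{v_k}.
\end{align*}

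Next, since the Chernoff bound above is non-negative, the triangular nested sum in Lemma~\ref{lem:NHop} can be upper bounded by the $N$-dimensional rectangular sum with each $v_k$ ranging independently over $\{0,1,2,\dots\}$. Once the sums decouple, the product structure allows me to write
\begin{align*}
\lim_{t\to\infty}\P(D(\nR)>t) \le e^{-s(d-\frac{1}{R})}[M_1(s)]^N \prod_{k=0}^{N-1}\sum_{v_k=0}^{\infty}\left(\frac{M_1(s)}{e^{s/R}}\right)^{v_k},
\end{align*}
and for every $s\in\mathcal{S}$ each geometric series converges to $1/(1-M_1(s)/e^{s/R})$. Taking the $N$-fold product and then minimizing over $s\in\mathcal{S}$ yields exactly $\min_{s\in\mathcal{S}}\Psi_N(s,d,R)$ with $\Psi_N$ as in~\eqref{eq:Psi}.

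The main obstacle is not any single algebraic step but rather the bookkeeping for the extension of the nested truncated sums to the full rectangular infinite sum and the justification of interchanging the limit $\nR\to\infty$ with the summation. Both follow from non-negativity of the Chernoff bound (so adding extra terms only relaxes the inequality) and from monotone convergence on the truncated nested sum. The restriction $s\in\mathcal{S}$ is precisely the set on which $M_1(s)/e^{s/R}<1$, guaranteeing convergence of each geometric series; this is exactly the condition under which $\Psi_N$ is finite, so the minimization is well-posed.
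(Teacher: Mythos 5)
Your proposal is correct and follows essentially the same route as the paper: the paper's (omitted) proof of Theorem~\ref{thm:multihop} likewise starts from Lemma~\ref{lem:NHop}, applies the Chernoff bound to $\Phi(v_0^{N-1},R)$ to factor it as $e^{-s(d-\frac{1}{R})}[M_1(s)]^N\prod_k \beta_1^{v_k}(s)$, and sums the resulting geometric series, exactly as in the two-hop proof of Theorem~\ref{thm:twoHop:Chernoff}. The only (harmless) difference is that where the paper evaluates the limit of the truncated triangular sum exactly (Lemma~\ref{lem:infSum}, showing the boundary correction terms vanish via L'H\^opital), you simply upper-bound it by the full rectangular sum using non-negativity, which is a valid and slightly cleaner shortcut that yields the same bound.
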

	\begin{proof}
		We use the relation between departure times, arrival times and the service times given by~\eqref{eq:departureTime} recursively starting from the last node $N$, and apply union bound and Chernoff bound to obtain the result. The proof follows similar steps as in the proof of Theorem~\ref{thm:twoHop:Chernoff} and therefore it is omitted.
	\end{proof}
	Therefore, an upper bound minimization problem for the N-hop network can be stated as follows:
	\begin{equation}\label{prob:NHopUBMP}
	\underset{\frac{1}{d} \leq R < \mu}{\text{min}} \, \underset{s \in \mathcal{S}}{\text{min}} \, \Psi_N(s,d,R).
	\end{equation}
	
	\noindent \textbf{\textit{Discussion:}} We note that similar to the single-hop and two-hop scenario $\Psi_N(s,d,R)$ is also convex with respect to $\frac{1}{R}$ and with respect to $s$. One may also obtain $\alpha$-UBMP for the $N$-hop scenario. However, the $\alpha$-relaxed upper bound involves the nested sum which becomes computationally expensive as $N$ increases. Furthermore, we note that as $N$ increases the upper bounds become more relaxed and therefore the heuristic solutions provided by Chernoff-UBMP may not be close to optimal solution. Nevertheless, these heuristic solutions could potentially be used as starting points. {\color{black}For example, when the controller has non-negligible processing time, the sensor-controller-actuator can be modelled as a three-hop tandem queuing system and one may use the heuristic solutions provided by the Chernoff-UMBP for three-hop scenario}.
	
	Next, we present an independent result for service-time distributions with bounded support. 
	
	\subsubsection*{Service Distributions with Bounded Support}
	Note that in practice, the service time distributions typically have bounded support. For example, the channel capacity for transmissions is always upper bounded due to bandwidth limitation. Considering that the service time is upper bounded by $b \in \mathbb{R}_{>0}$, in the following theorem we present a result for computing an optimal rate for age limits above certain threshold.
	\jpcolor{\begin{corollary}
			For an N-hop network, if the support of the service time distribution is upper bounded by $b < \infty$, then for all $d \geq (N+1)b$, the AoI violation probability is zero at \jpcolor{$R \leq (N+1)/d$}, i.e., these rate solutions are optimal for~\eqref{equivalentProb}.
		\end{corollary}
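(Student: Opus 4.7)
The plan is to apply the $N$-hop upper bound from Lemma~\ref{lem:NHop} and exploit the bounded support of the service times to show that every summand $\Phi(v_0^{N-1},R)$ in that union bound is identically zero. Writing $V \triangleq \sum_{k=0}^{N-1} v_k$, the event defining $\Phi(v_0^{N-1},R)$ involves a sum of exactly $V+N$ independent service-time samples, each almost surely at most $b$; that sum is therefore almost surely bounded by $(V+N)b$, so $\Phi(v_0^{N-1},R) = 0$ whenever $(V+N)b \leq d + (V-1)/R$.

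The main algebraic step is to verify this inequality uniformly in $V \geq 0$ under the stated hypotheses. Rearranging it as $V(b - 1/R) \leq d - Nb - 1/R$, note that the combination $R \leq (N+1)/d$ and $d \geq (N+1)b$ yields $1/R \geq d/(N+1) \geq b$, so the coefficient $b - 1/R$ is non-positive and the LHS is non-positive for every $V \geq 0$. It therefore suffices to check the inequality at $V = 0$, i.e., $Nb + 1/R \leq d$; at the boundary rate $R = (N+1)/d$ this reduces to $Nb + d/(N+1) \leq d$, which is exactly the hypothesis $b \leq d/(N+1)$. Feeding these vanishing summands back into Lemma~\ref{lem:NHop} gives $\lim_{t\to\infty}\P(D(\nR) > t) = 0$, and since $R \geq 1/d$ places us in Case~1 of Theorem~\ref{thm:steadystate}, this is equivalent to $\lim_{t\to\infty}\P(\Delta(t,R) > d) = 0$. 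Because probabilities are non-negative, zero is the global minimum of~\eqref{equivalentProb}, so every such $R$ is an optimal solution.

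The main technical subtlety is covering the full interval $R \in [1/d, (N+1)/d]$ claimed by the corollary: the $V=0$ condition $Nb + 1/R \leq d$ in fact requires $R \geq 1/(d-Nb)$, so the union-bound argument as stated rigorously covers only $R \in [1/(d-Nb), (N+1)/d]$. For rates in the residual gap $[1/d, 1/(d-Nb))$ the Lemma~\ref{lem:NHop} bound is too loose; I would close the argument by returning directly to Reich's recursion for $D(\nR)$, exploiting the fact that when $T = 1/R$ is close to $d$ only one or two arrivals fall in the window $[t-d, t)$, so the sum of at most $N$ bounded service times can be compared against $t - A(\nR) \leq d$ exactly rather than through the union bound.
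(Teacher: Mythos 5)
Your derivation follows the same route as the paper's own proof --- rewrite $\Phi(v_0^{N-1},R)$ as a tail event for $\sum (X_{N-k}^{i} - \tfrac{1}{R})$ and use the bounded support to kill every summand in the union bound of Lemma~\ref{lem:NHop} --- but your bookkeeping is more careful, and the ``subtlety'' you flag at the end is in fact an error in the published statement, not a removable gap in your own argument. The paper's proof replaces the threshold $d - \tfrac{N+1}{R}$ by $0$, which is legitimate only when $d - \tfrac{N+1}{R} \geq 0$, i.e.\ $R \geq (N+1)/d$ --- the opposite of the claimed range $R \leq (N+1)/d$ --- so as written it is valid only at the single boundary point $R = (N+1)/d$. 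Your condition $Nb + 1/R \leq d$, equivalently $R \geq 1/(d - Nb)$, is the correct one; since $1/(d-Nb) \leq (N+1)/d$ exactly when $d \geq (N+1)b$, your argument rigorously establishes the conclusion on $[1/(d-Nb),\,(N+1)/d]$ (indeed on $[1/(d-Nb),\,1/b]$).

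Do not, however, spend effort on the proposed patch for the residual range $R \in [1/d,\, 1/(d-Nb))$: the corollary is simply false there, so no refinement of Reich's recursion can close it. Take $N=1$ and deterministic service $X \equiv b$, say $b = 1$ and $d = 3 \geq 2b$. For $T = 1/R \in (d-b,\, d] = (2,3]$ there is no queueing, every packet is delivered with delay exactly $b$, and the AoI sweeps from $b$ up to $T + b > d$ in every inter-departure cycle, so it exceeds $d$ on a set of times of positive density and the violation probability cannot vanish. This example also shows your threshold is tight: for deterministic service the peak AoI is $T + Nb$, which stays below $d$ precisely when $R \geq 1/(d - Nb)$. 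The correct resolution is therefore to restate the corollary over $R \in [1/(d-Nb),\,(N+1)/d]$ (or up to $1/b$), not to strengthen the proof on the full interval.
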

		\begin{proof}
			We rewrite $\Phi(v_0^{N-1},R)$ (defined in~\eqref{eq:Phi:Nhop}) as follows:
			\begin{align*}
			\Phi(v_0^{N-1}\!,R)\! =\! \P\left\{\!\sum_{k=0}^{N-1}\!\sum_{i=0}^{v_k}\! \left(\!X_{N-k}^{i} - \frac{1}{R}\right)\! > \!d \! -\! \frac{N+1}{R}\!\right\}.
			\end{align*}
			For $R \leq (N+1)/d$, we have $X^n_k \leq b \leq \frac{1}{R^*}$ for all $k \geq 1$ and for all $n$, and we obtain
			\begin{align*}
			\Phi(v_0^{N-1}\!,R^*) &= \P\left\{\sum_{k=0}^{N-1}\sum_{i=0}^{v_k} \left(X_{N-k}^{i} - \frac{1}{R}\right) > 0\right\}\! = 0.
			\end{align*}
			Therefore, from Lemma~\ref{lem:NHop} we conclude that the AoI violation probability $\lim_{t\rightarrow \infty} \P(T_D(\hat{n}_{R}) > t)$ is equal to zero when $R \leq (N+1)/d$. 
		\end{proof}
	}

	\section{Application Examples: Geometric, Exponential and Erlang Service}\label{sec:exampleDis}
	In the following we show the computation of the upper bounds for typical service distributions, namely, geometric, exponential and Erlang. These distributions are most commonly used in the queuing analysis, and also they serve as good models for several practical service-time processes. 
	Note that for these distributions, the distribution of the sum of service times is known and thus the $\alpha$-relaxed upper bound can be computed. Later in Section~\ref{sec:numerical} we will evaluate the performance of the the computed heuristic solutions for these service distributions. To shorten the expressions, in the sequel we denote 
	\begin{align*}
	Y_1 = \sum_{i=0}^{v_{1}} X_{1}^{i}, \text{ }
	Y_2 = \sum_{i=0}^{v_0} X_{2}^{i}, \text{ and }
	\kappa  = d + \frac{v_0 + v_1 - 1}{R}.
	\end{align*}

	\subsection{Geometric Service: Wireless Links with Packet Errors}
	Consider that each packet generated by the sensor is of fixed length and the packets that carry actuator commands are also of fixed length, possibly different from sensor packet length. To accommodate for packet transmission errors in the wireless links, we use
	geometric distribution to model the number of time slots required for transmitting a packet successfully. In particular, we consider the service distributions at link $1$ and link $2$ to be geometric with success probabilities $p_1$ and $p_2$, respectively. Given an age limit $d$ at the actuator, we compute $R$ heuristically.  
	
	In the following we compute the first term of the $\alpha$-relaxed upper bound given in Theorem~\ref{thm:twohop:alphaUB}. Since $Y_1$ and $Y_2$ are integers, we have
	\begin{align}\label{eq:probterm}
	&\sum_{v_0=0}^{K - 1}  \sum_{v_{1}=0}^{K-1} \Phi(v_0,v_1,R) = \sum_{v_0=0}^{K - 1}  \sum_{v_{1}=0}^{K-1} \P \left\{Y_1  +  Y_2 \! > \kappa \!\right\} \nonumber\\
	&= \sum_{v_0=0}^{K - 1}  \sum_{v_{1}=0}^{K-1} \P \left\{Y_1  +  Y_2 \! > \lfloor \kappa \rfloor\right\}.
	\end{align}
	Since for geometrical distribution $X^{i}_k \geq 1$, for all $i$ and $k \in \{1,2\}$, we have $Y_1 \geq v_1 + 1$ and $Y_2 \geq v_0 + 1$. Therefore, for $\lfloor \kappa \rfloor <= v_1 + v_2 + 1$, we have $\P \{Y_1  +  Y_2 > \lfloor \kappa \rfloor\} = 1$. For $\lfloor \kappa \rfloor >= v_1 + v_2 + 2$ we compute the probability by conditioning on $Y_2=y$ for positive integers $y \geq v_0 + 1$.
	\begin{align*}
	& \P \left\{Y_1 \! + \!  Y_2 \! > \! \lfloor \kappa \rfloor \right\}\\
	& = \!\! \sum_{y = v_0 + 1}^{ \infty} \!\! \P \left\{Y_1 \! + \!  Y_2 \! > \! \lfloor \kappa \rfloor  | Y_2\! =\! y \right\}\!\P\{Y_2 \!=\! y\} \\
	&=\!\! \sum_{y = v_0 + 1}^{\lfloor \kappa \rfloor - v_1 - 1}\!\!\!\!\!\! \P \left\{Y_1 \! >\!  \lfloor \kappa \rfloor \! - \! y  \right\}\!\P\{Y_2 = y\}\! + \! \P\{Y_2 \! \geq \! \lfloor \kappa \rfloor \! - \! v_1\}.
	\end{align*}
	In the last step above we have used $\P \left\{Y_1  >  \lfloor \kappa \rfloor - y  \right\} = 1$ for $y \geq \lfloor \kappa \rfloor - v_1$.
	Noting that the sum of i.i.d. geometric random variables has a negative binomial distribution, we have 
	\begin{align*}
	\P\{Y_2 = y\} &= \P\left\{\sum_{i=0}^{v_0} X_{2}^{i} = y\right\} \\
	&  = \binom{y-1}{v_0}p_2^{v_0+1}(1-p_2)^{y-v_0-1},
	\end{align*}
	and
	\begin{align*}
	\P \left\{Y_1 \! >  \lfloor \kappa \rfloor  - y \right\} = \frac{B(1-p_2; \lfloor \kappa \rfloor  - y - v_1,v_1 + 1)}{B(\lfloor \kappa \rfloor  - y - v_1,v_1 + 1)},
	\end{align*}
	where $B(\cdot)$ is the incomplete beta function given by
	\begin{align*}
	B(z;a,b) &= \int_{0}^{z} x^{a} (1-x)^{b} dx, \\
	B(a,b) &= \int_{0}^{1} x^{a} (1-x)^{b} dx .
	\end{align*}
	Similarly, we compute $\P\{Y_2 \geq \lfloor \kappa \rfloor -v_1\}$. Finally, using $\P \left\{Y_1  +  Y_2 \! > \lfloor \kappa \rfloor \right\}$ we compute~\eqref{eq:probterm}. For computing the Chernoff bound we require the moment generating function, which for geometric service is given below.
	\begin{align*}
	M_{k}(s) = \frac{p_k e^s}{1-(1-p_k)e^s}.
	\end{align*}
	Since the Chernoff bound is convex in $s$, using bisection algorithm we compute its minimum value.

	\subsection{Exponential Service}
	In this subsection, we study the two-hop system with exponentially distributed service times with rates $\mu_{1}$ and $\mu_{2}$ at links $1$ and $2$, respectively. For this case, $Y_1$ is a sum of $v_1+1$ i.i.d. exponential random variables, which is given by the Gamma distribution with shape parameter $v_1+1$ and rate parameter $\mu_{1}$. Similarly, $Y_2$ has Gamma distribution with shape parameter $v_2+1$ and rate parameter $\mu_{2}$.  Therefore, we compute  $\Phi(v_0,v_1,R)$ as follows.
	\jpcol{\begin{align}\label{eq:phiExp}
		&\Phi(v_0,v_1,R) = \int_{0}^{\infty} \P\{Y_1 > \kappa - y\}f_{Y_2}(y) dy \nonumber \\
		&= \int_{0}^{\kappa} \P\{Y_1 > \kappa - y\}f_{Y_2}(y) dy + \int_{\kappa}^{\infty} f_{Y_2}(y) dy,
		\end{align}}
	where $f_{Y_2}(\cdot)$ is the PDF of $Y_2$, given by
	\begin{align*}
	f_{Y_2}(y) &= \frac{\mu^{v_2+1}_{2} y^{v_2} e^{-\mu_{2} y}}{v_2!} \, ,\\
	\P\{Y_1 > \kappa - y\} &= \frac{\Gamma(v_1+1,\mu_{1}(\kappa - y))}{v_1!},
	\end{align*}
	and $\Gamma(x,a)$ is the upper incomplete gamma function:
	\begin{align*}
	\Gamma(x,a) = \int_{a}^{\infty}y^{x-1}e^{-y} dy.
	\end{align*}
	Further, if $\mu_1 = \mu_2 = \mu$, then 
	\begin{align*}
	\Phi(v_0,v_1,R)\big|_{\mu_1 = \mu_2} = \frac{\Gamma(v_0+v_1+2,\mu\kappa)}{(v_0+v_1+1)!}.
	\end{align*} 
	
	For computing the Chernoff bound we use the MGF of the exponential distribution which is given below.
	\begin{align*}
	M_{k}(s) = \frac{\mu_{k}}{\mu_{k} - s}, \text{ for }s < \mu_{k}.
	\end{align*}
	
	\subsection{Erlang Service}
	Consider the Erlang service distribution at link $k$ has shape parameter $b_k$ and rate $\lambda_k$. This implies $\mu_k = b_k \lambda_k$. We note that, in this case, $Y_k$ has Gamma distribution with shape parameter $(v_k+1)b_k$ and rate parameter $\lambda_k$. Therefore, we compute the bounds using similar expressions given in the previous subsection.
	
	\textit{\textbf{Remark 2:}} We note that the Chernoff upper bound and the $\alpha$-relaxed upper bound presented above may take values greater than $1$. It is natural to cap the values of these upper bounds by $1$ because for probability values an upper bound greater than $1$ is not of any use, in general. However, somewhat to our surprise, in our simulations we found that allowing the values of the proposed bounds greater than $1$ provides good heuristic solutions for the sampling rate, especially for parameter setting where the upper bounds are always greater than $1$. Since our primary objective is to find upper bounds that can provide good heuristic solutions, but need not necessarily be tight upper bounds, we consider values greater than $1$ for the bounds in our numerical evaluation. However, this should not be confused with the violation probability which does not exceed 1 at all times.

	\section{Numerical Evaluation}\label{sec:numerical}
	In this section, we evaluate the performance of \prob~solutions and \Chernoff~solutions for geometric, exponential and Erlang service distributions. We first study the trends of the proposed upper bounds in comparison to the AoI violation probability obtained using simulation for both single-hop and two-hop scenarios. We then evaluate the quality of numerically computed solutions using the UBMPs in comparison with that of the simulation-based estimate of the optimum violation probability. \jpcol{Finally, we present simulation results comparing the performance of FCFS with queue management policies that use unit buffer and packet replacement.}
	
	\jpcolor{Since Chernoff-UBMP is a convex optimization problem we used bisection search, and for $\alpha$-UBMP we used brute-force search to compute the respective optimal rates}. The numerical computations are done using MATLAB, and the simulation is implemented in C where we run $10^{10}$ iterations for each data point. The default parameters are as follows. For exponential distribution $\mu_1$ and $\mu_2$ equal 1 packet/ms; for Erlang distribution we use shape parameters $b_1 = b_2 = 3$ and rate parameters $\lambda_1 = \lambda_2 = 3$, and therefore the mean rates $\mu_1$ and $\mu_2$ equal one packet/ms;  for geometric service we choose success probabilities $p_1 = 0.85$ and  $p_2 = 0.9$, \jpcol{and the slot duration is $1$ ms}. The minimum value for $R$ is chosen to be $0.2$ packets/ms and its maximum value is chosen to be $0.75*\min(\mu_1,\mu_2)$ packets/ms. \jpcol{For all the figures with varying rate $R$ on the x-axis, a constant resolution of $0.025$ is used.} The minimum value for $d$ is chosen to be $5$ ms and its maximum value is chosen to be $15$ ms. 
	We use $K = 30$ for computing $\alpha$-relaxed upper bound for all the distributions because for Geometric service MATLAB does not provide precision guarantees for higher $K$ values for computing $\Phi(v_0,v_1,R)$, and for other service distributions, choosing $K = 30$ is sufficient to obtain $\alpha$ values close to $1$. 
	
	\begin{figure*}[ht!]
		\centering
		\begin{subfigure}[b]{0.30\textwidth}
			\includegraphics[width=\textwidth]{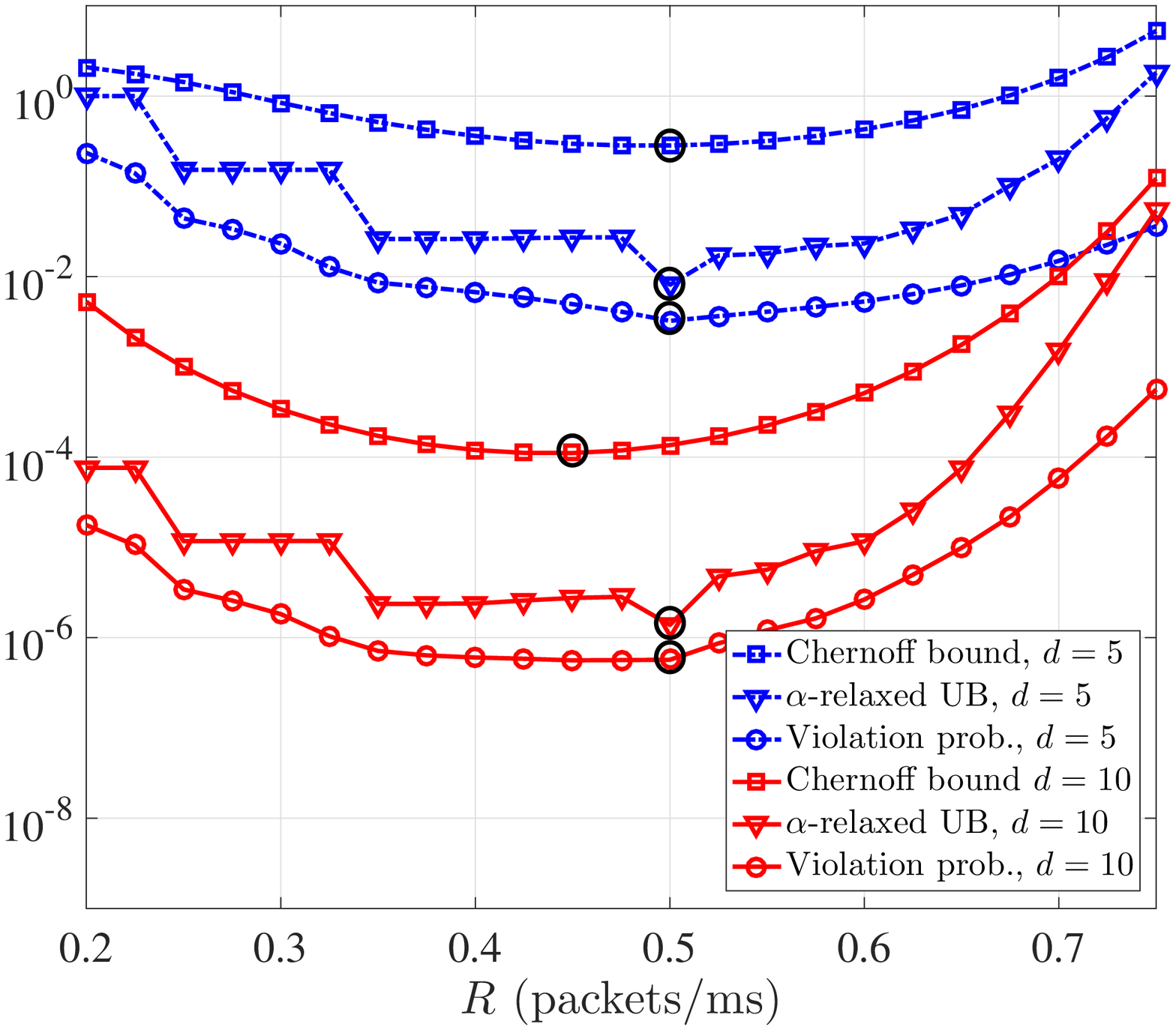}
			\caption{Geometric service with success probability $p_1 = 0.85$.}
			\vspace{-0.0cm}
			\label{fig:singleHop_Geometric_varR}
		\end{subfigure} \quad 
		\begin{subfigure}[b]{0.30\textwidth}
			\includegraphics[width=\textwidth]{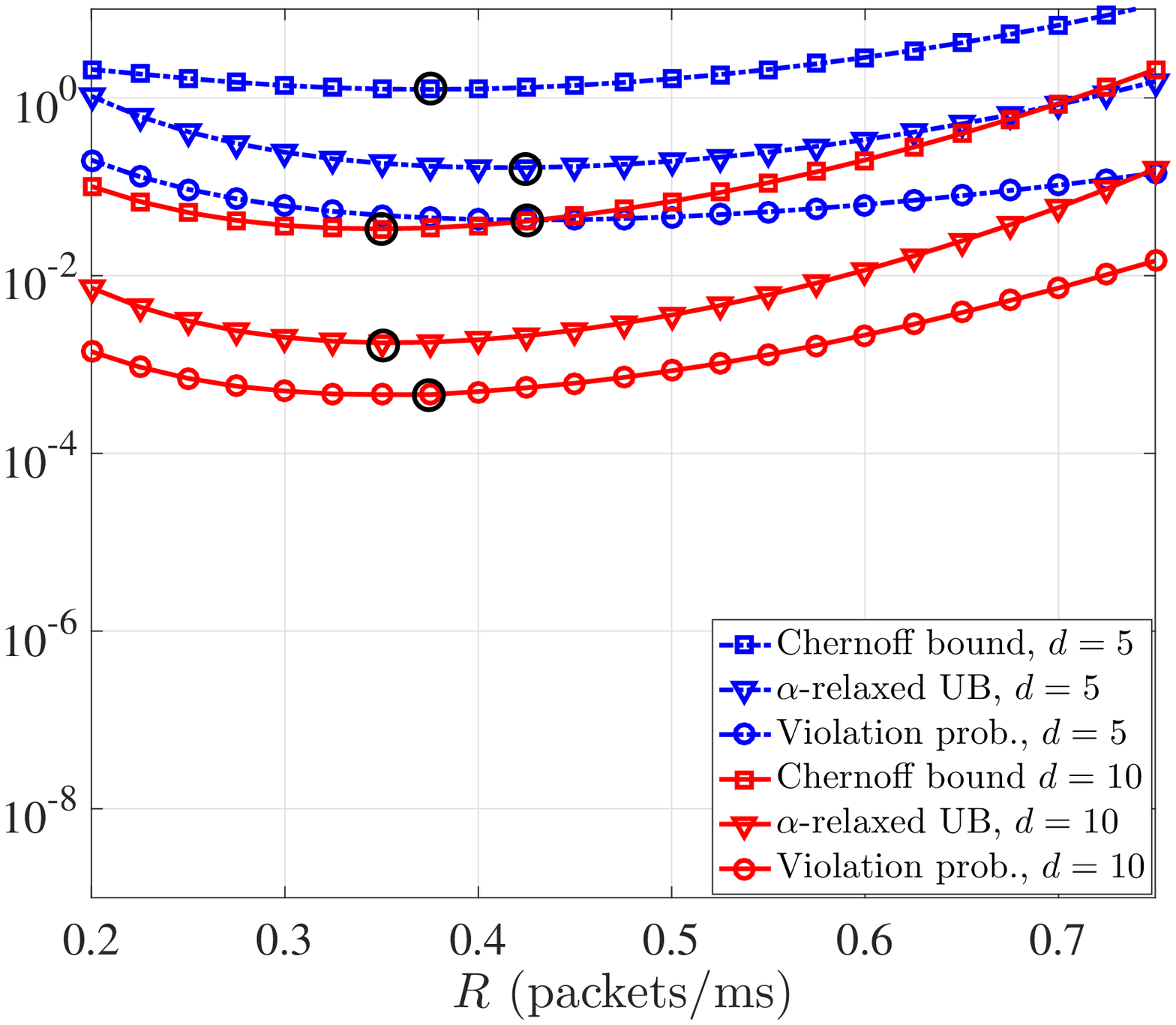}
			\caption{Exponential service with mean rate $\mu_1 = 1$.}
			\vspace{-0.0cm}
			\label{fig:singleHop_Exp_varR}
		\end{subfigure} \quad 
		\begin{subfigure}[b]{0.30\textwidth}
			\includegraphics[width=\textwidth]{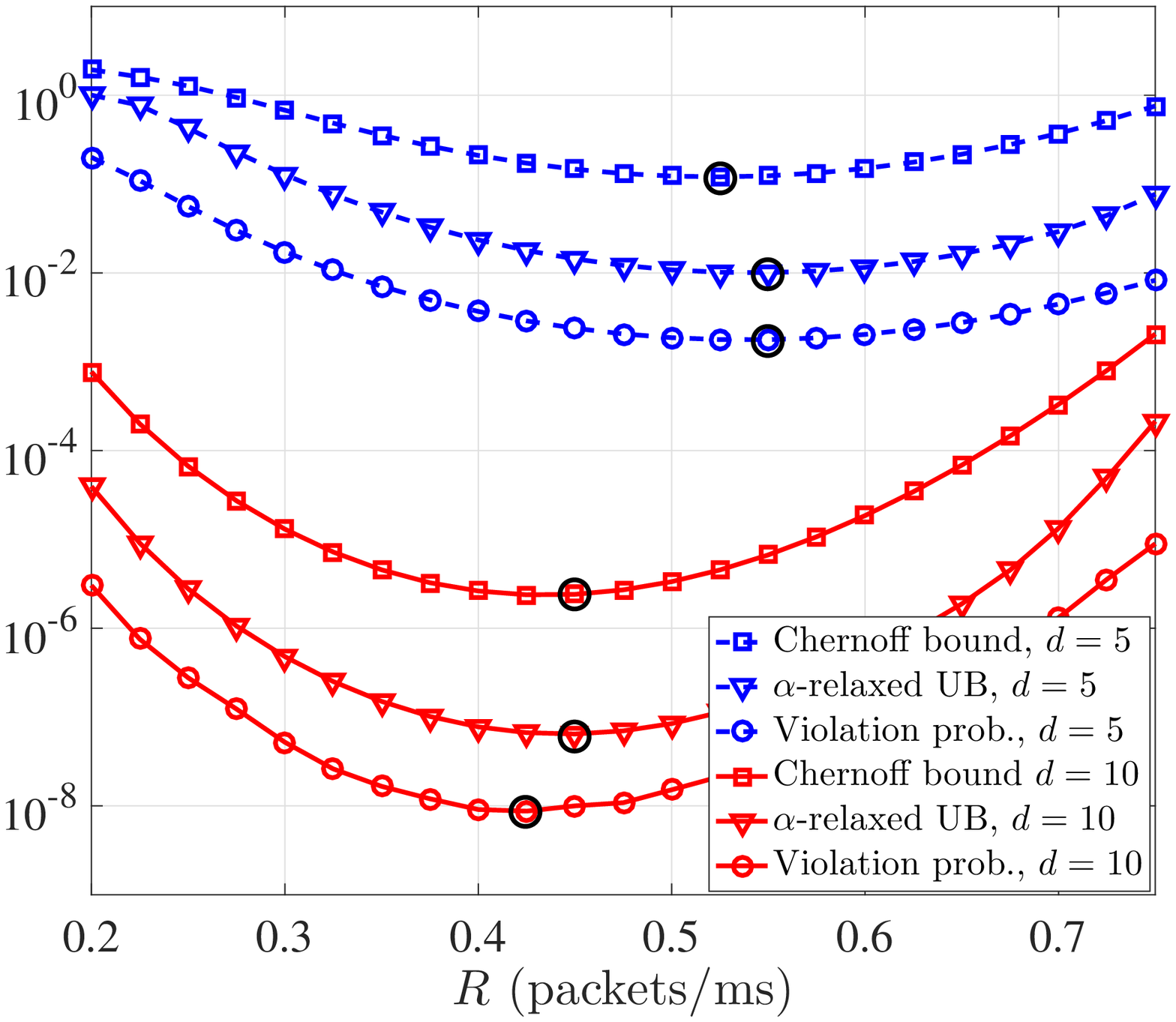}
			\caption{Erlang service with $b_1 = 3$, $\lambda_1 = 3$ and $\mu_1 = 1$.}
			\vspace{-0.0cm}
			\label{fig:singleHop_Erlang3_varR}
		\end{subfigure}
		\caption{Comparison of the upper bounds for varying arrival rate $R$ in a \textit{single hop} for different service time distributions.}\label{fig:singleHop_varR}
	\end{figure*}
	
	\begin{figure*}[ht!]
		\centering
		\begin{subfigure}[b]{0.30\textwidth}
			\includegraphics[width=\textwidth]{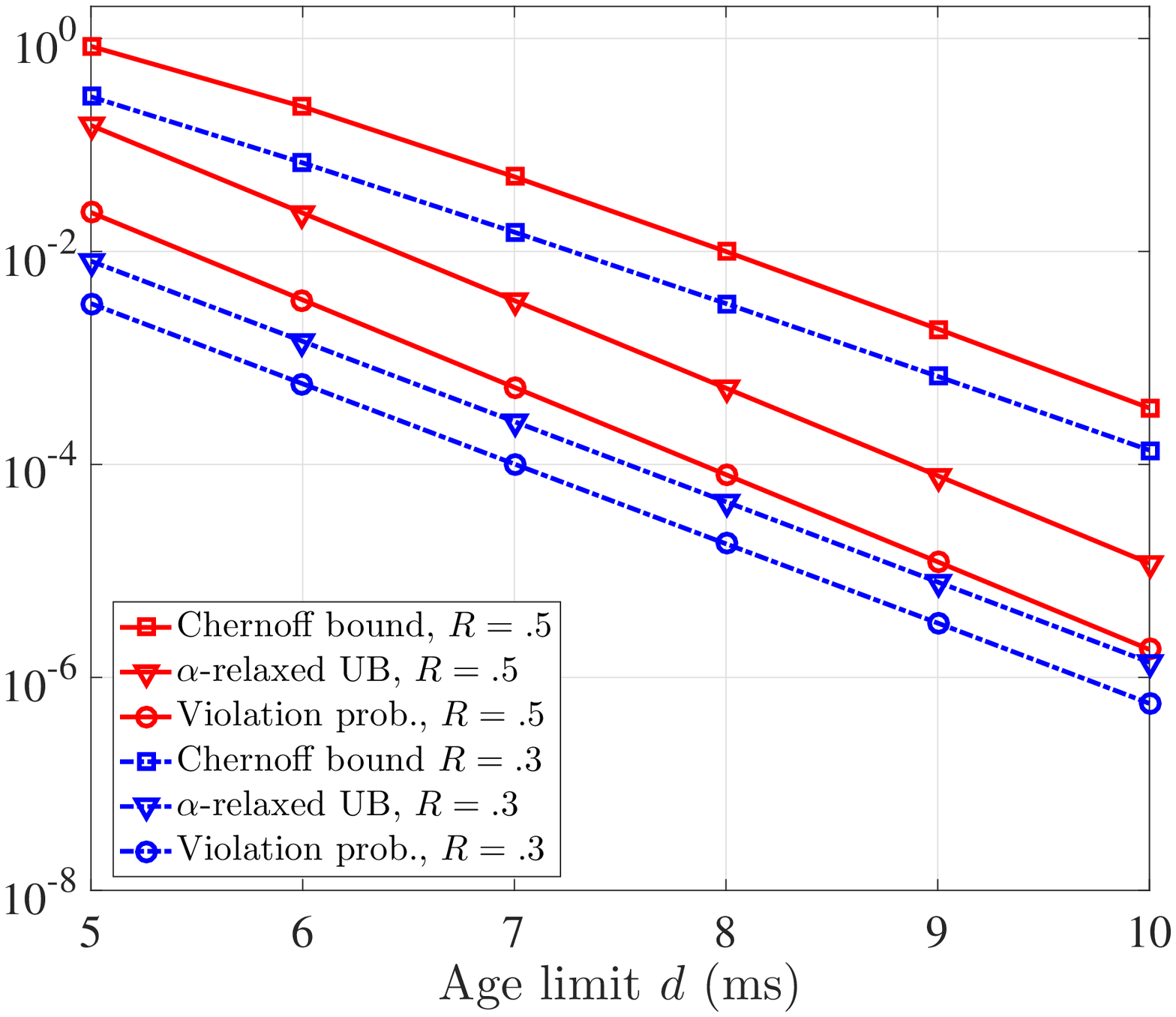}
			\caption{Geometric service with success probability $p_1 = 0.85$.}
			\vspace{-0.0cm}
			\label{fig:singleHop_Geometric_vard} 
		\end{subfigure} \quad 
		\begin{subfigure}[b]{0.30\textwidth}
			\includegraphics[width=\textwidth]{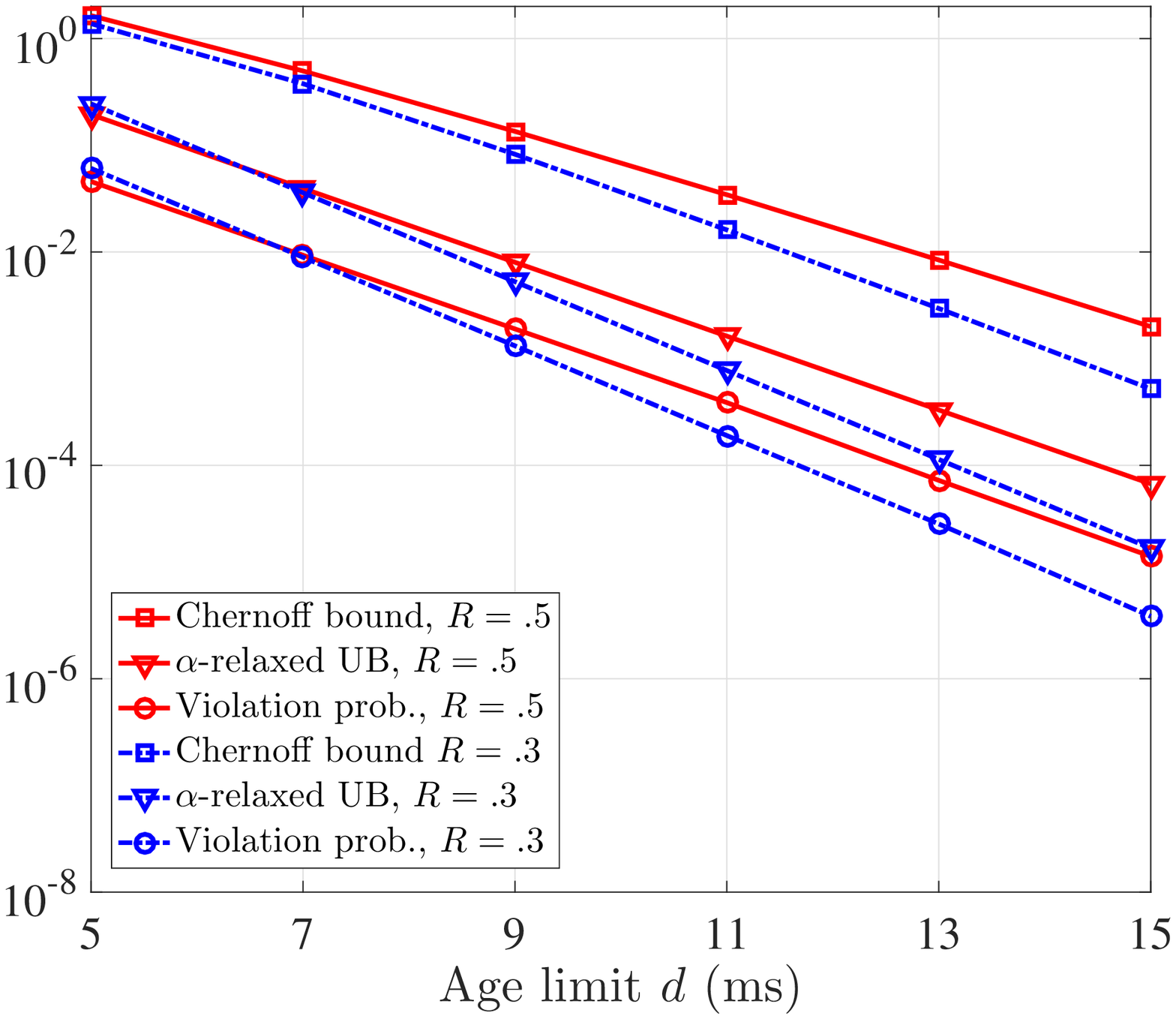}
			\caption{Exponential service with mean rate $\mu_1 = 1$.}
			\vspace{-0.0cm}
			\label{fig:singleHop_Exp_vard} 
		\end{subfigure} \quad 
		\begin{subfigure}[b]{0.30\textwidth}
			\includegraphics[width=\textwidth]{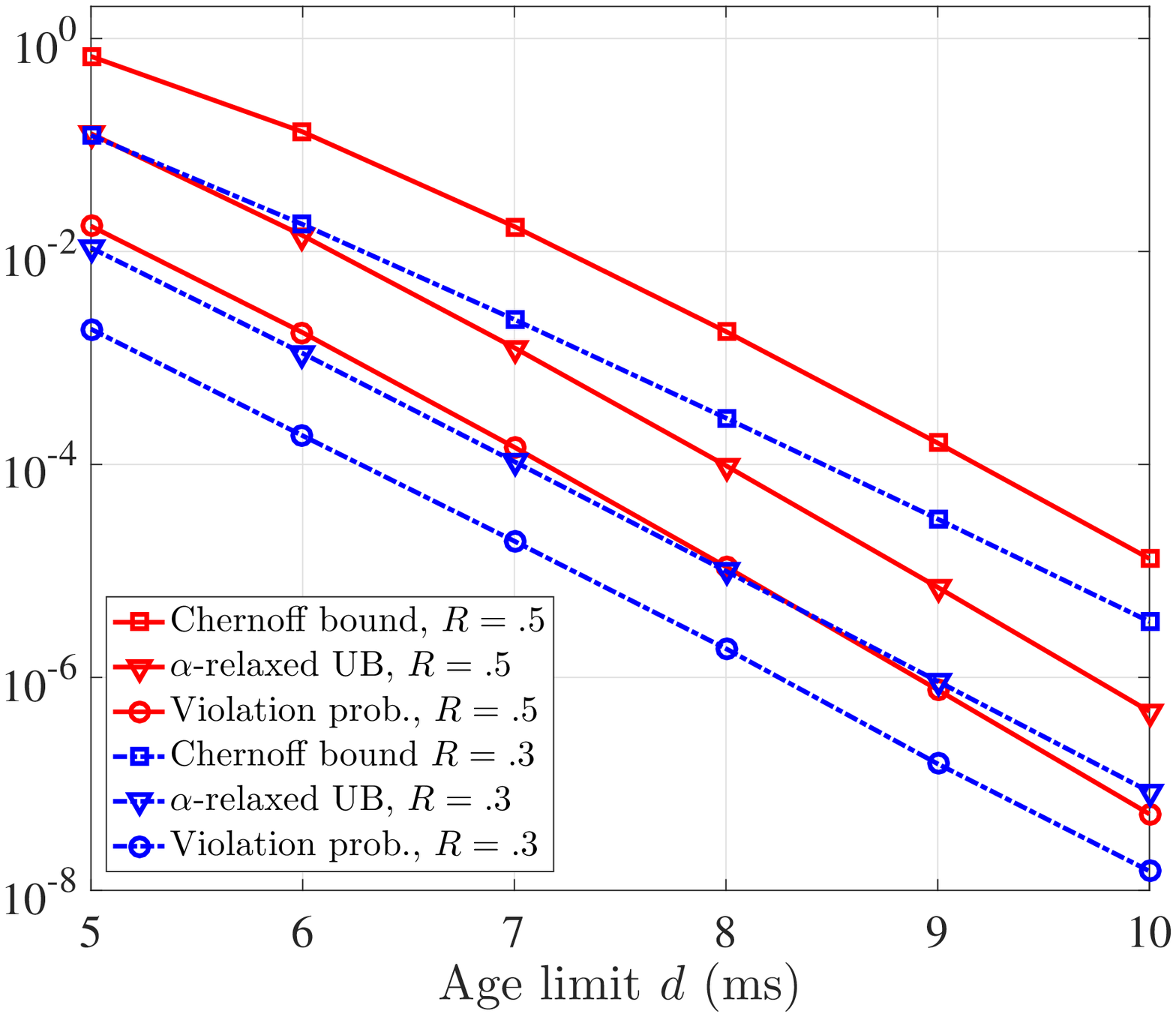}
			\caption{Erlang service with $b_1 = 3$, $\lambda_1 = 3$ and $\mu_1 = 1$.}
			\vspace{-0.0cm}
			\label{fig:singleHop_Erlang3_vard} 
		\end{subfigure}
		\caption{Comparison of the upper bounds for varying age limit $d$ in a \textit{single hop} for different service time distributions.}\label{fig:singleHop_vard}
	\end{figure*}
	
	\begin{figure*}[ht!]
		\centering
		\begin{subfigure}[b]{0.30\textwidth}
			\includegraphics[width=\textwidth]{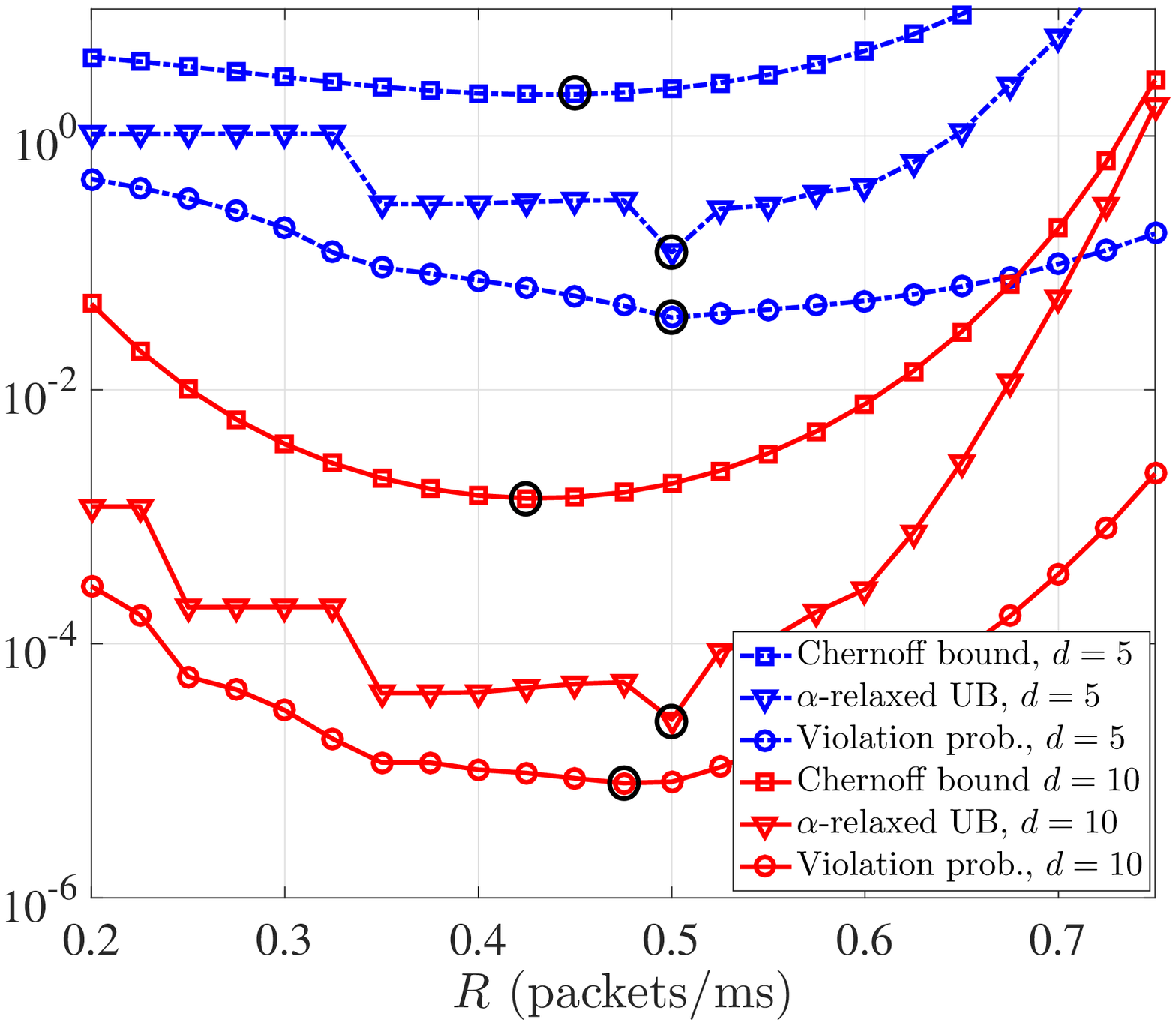}
			\caption{Geometric service with $p_1 = 0.85$ and  $p_2 = 0.9$.}
			\vspace{-0.0cm}
			\label{fig:twoHop_Geometric_varR}
		\end{subfigure} \quad 
		\begin{subfigure}[b]{0.30\textwidth}
			\includegraphics[width=\textwidth]{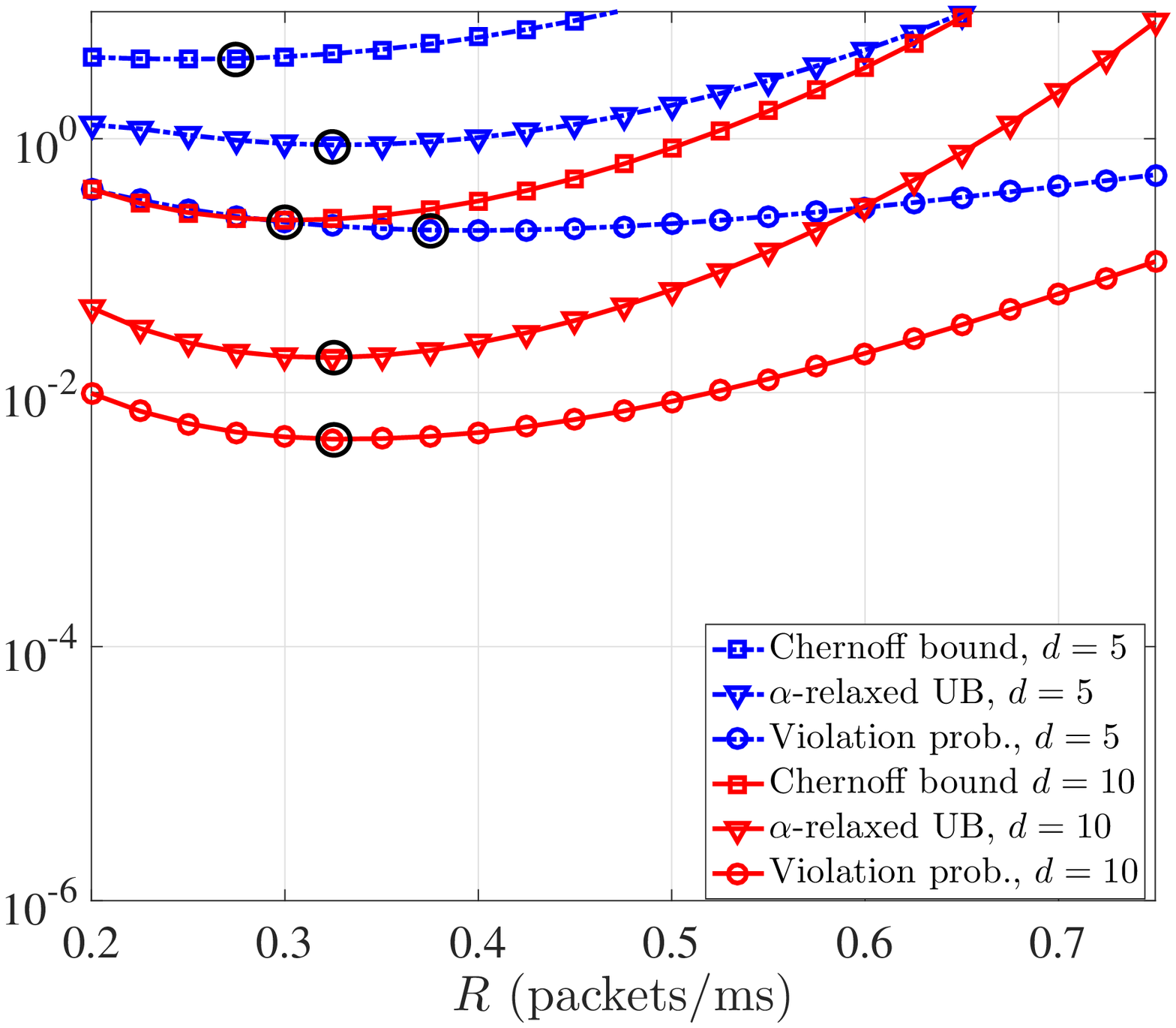}
			\caption{Exponential service with $\mu_1 = \mu_2 = 1$.}
			\vspace{-0.0cm}
			\label{fig:twoHop_Exp_varR}
		\end{subfigure} \quad 
		\begin{subfigure}[b]{0.30\textwidth}
			\includegraphics[width=\textwidth]{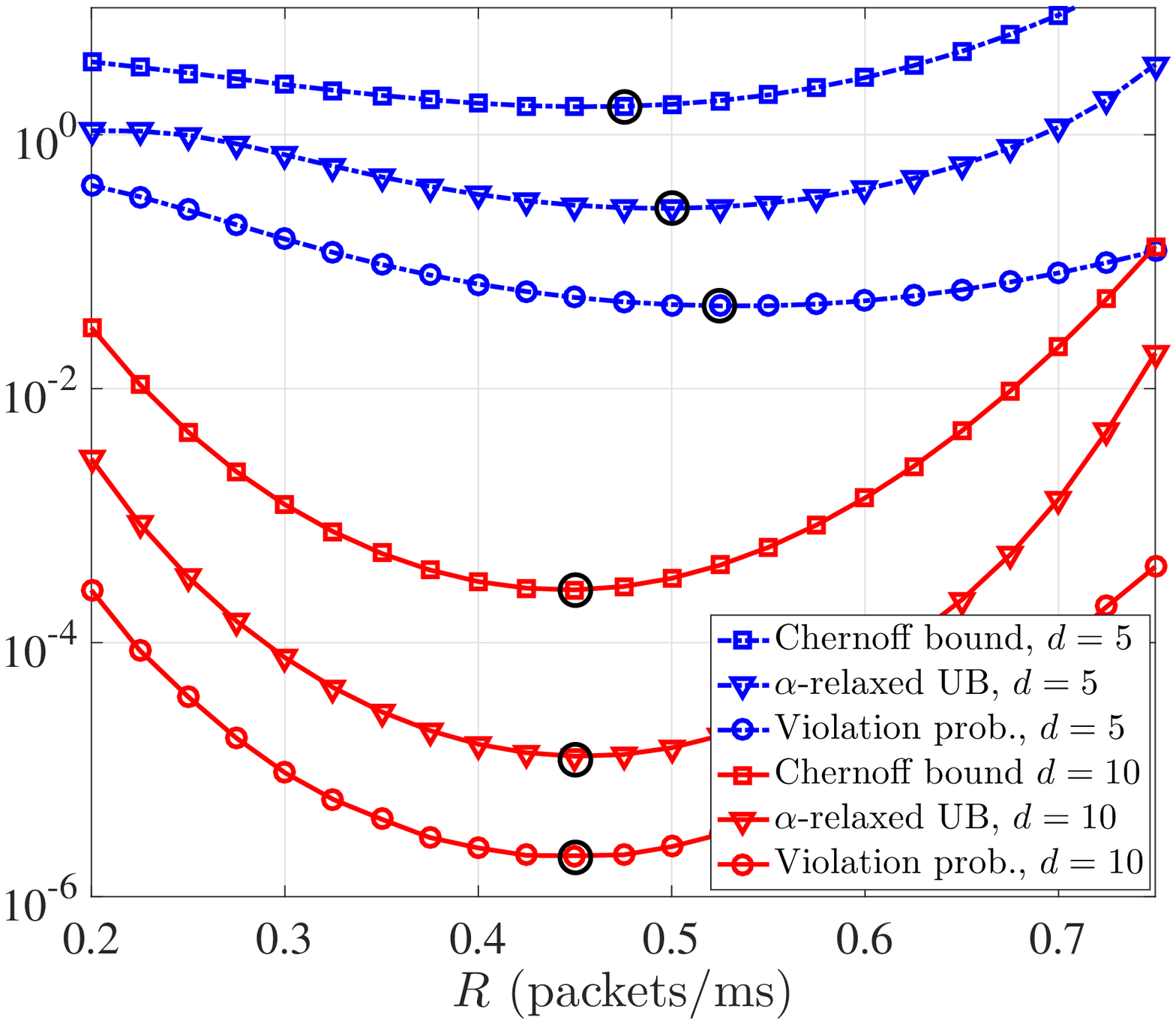}
			\caption{Erlang service with $b_1 = b_2 = 3$, $\lambda_1 = \lambda_2 = 3$, and $\mu_1 \! = \! \mu_2 =\! 1$.}
			\vspace{-0.0cm}
			\label{fig:twoHop_Erlang3_varR}
		\end{subfigure}
		\caption{Comparison of the upper bounds for varying arrival rate $R$ in a \textit{two hop} network for different service time distributions.}\label{fig:twoHop_varR}
	\end{figure*}
	
	\begin{figure*}[ht!]
		\centering
		\begin{subfigure}[b]{0.30\textwidth}
			\includegraphics[width=\textwidth]{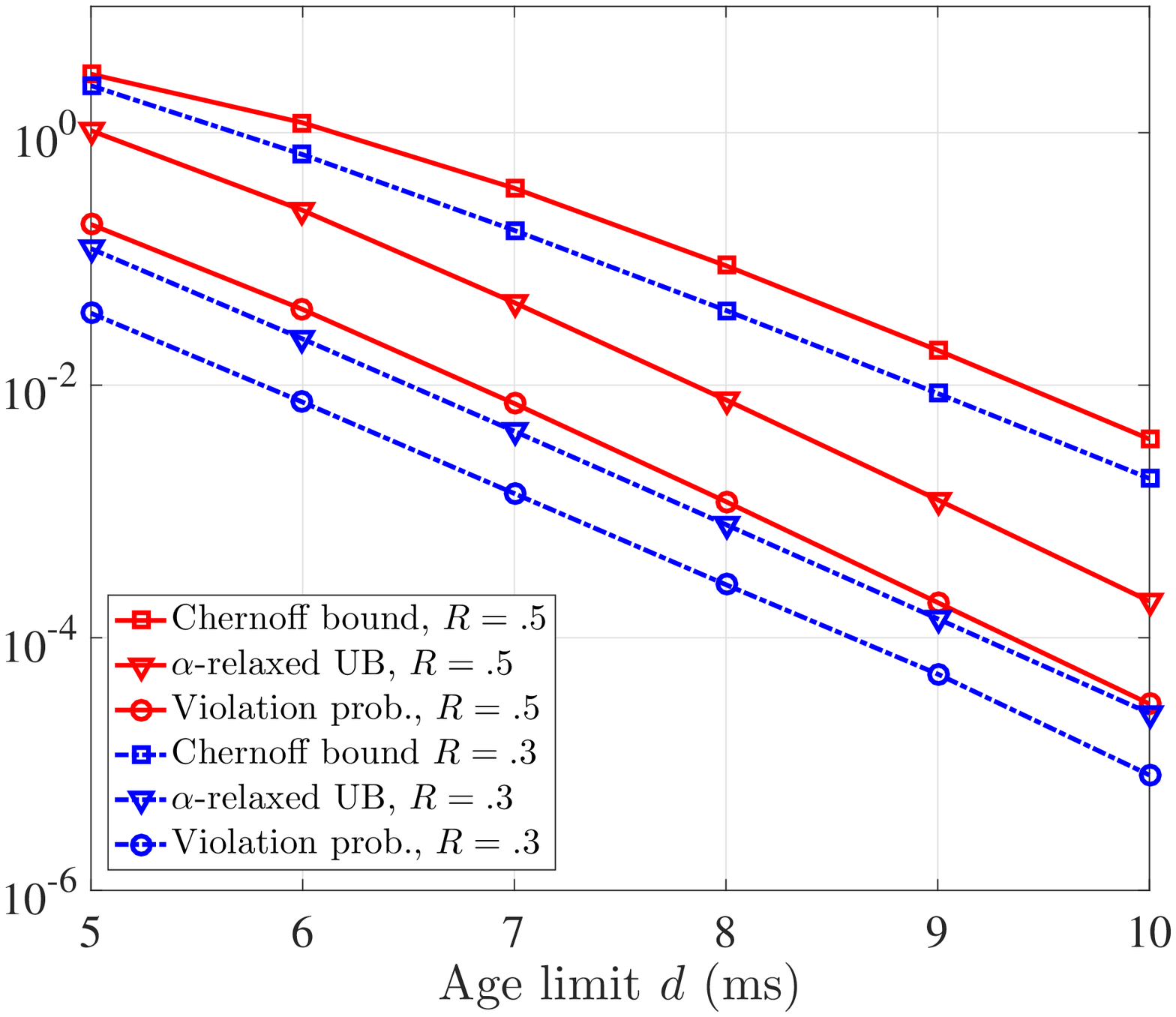}
			\caption{Geometric service with $p_1 = 0.85$ and  $p_2 = 0.9$}
			\vspace{-0.0cm}
			\label{fig:twoHop_Geometric_vard} 
		\end{subfigure} \quad 
		\begin{subfigure}[b]{0.30\textwidth}
			\includegraphics[width=\textwidth]{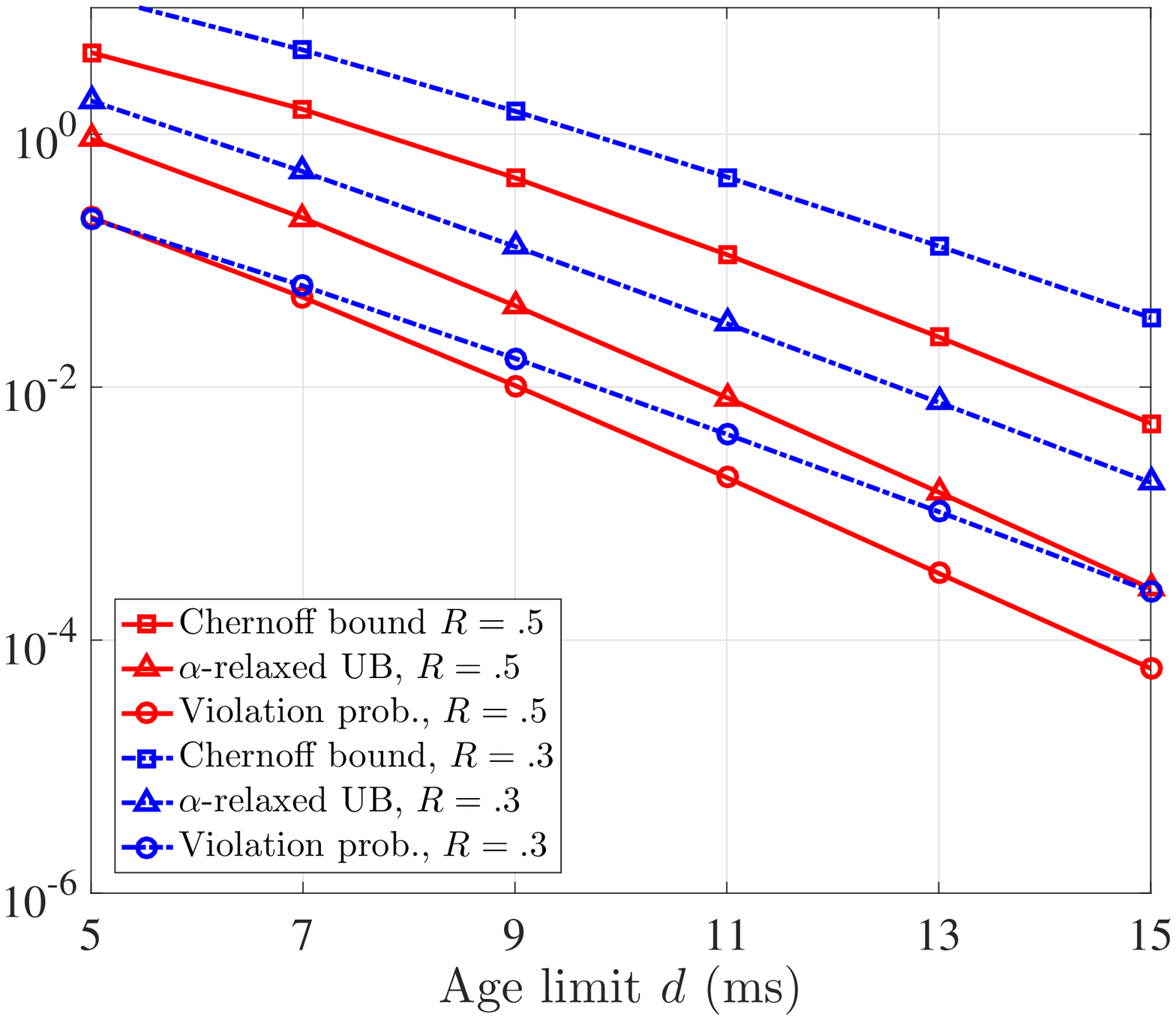}
			\caption{Exponential service with $\mu_1 = \mu_2 = 1$.}
			\vspace{-0.0cm}
			\label{fig:twoHop_Exp_vard}
		\end{subfigure} \quad 
		\begin{subfigure}[b]{0.30\textwidth}
			\includegraphics[width=\textwidth]{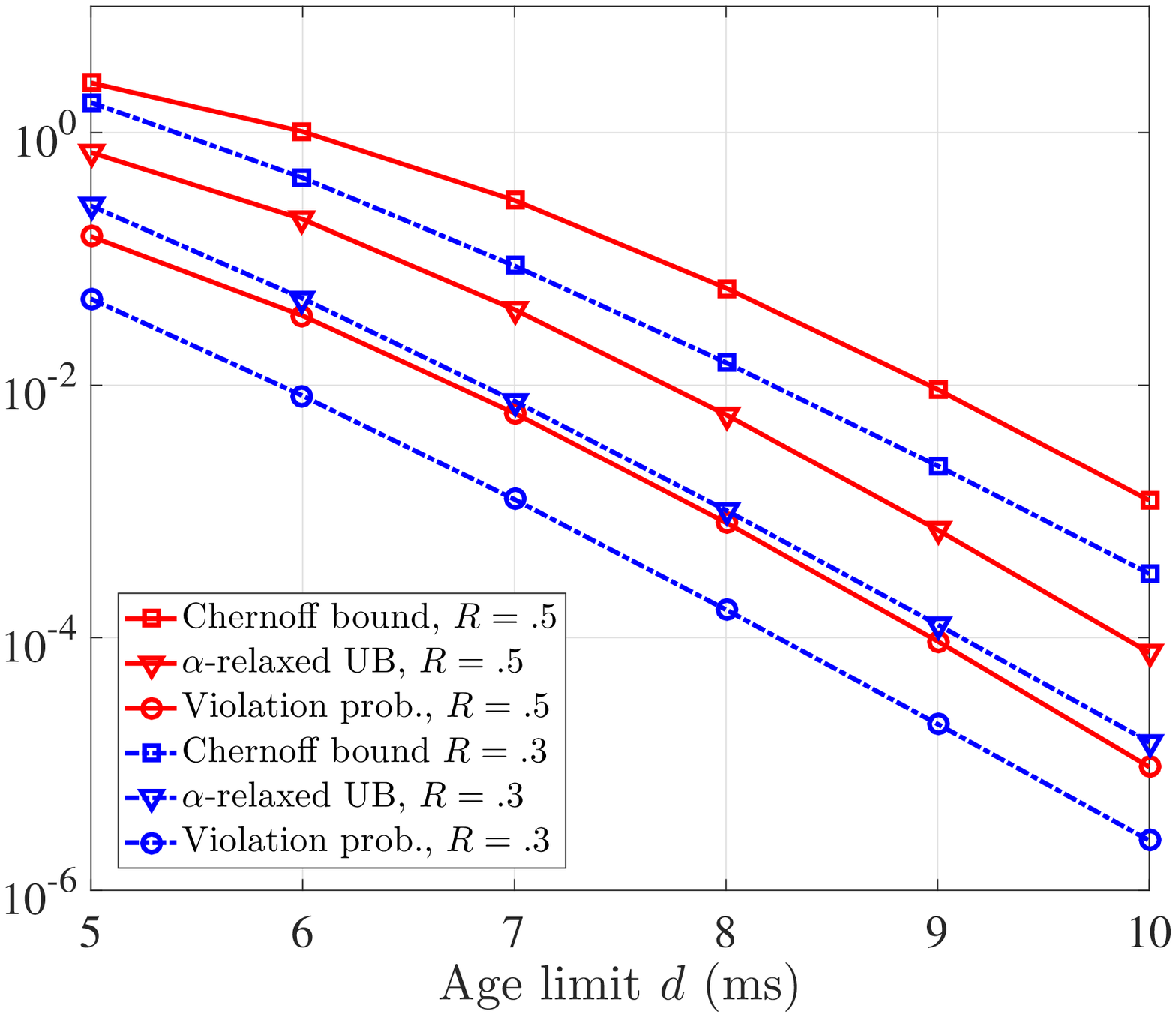}
			\caption{Erlang service with $b_1 = b_2 = 3$, $\lambda_1 = \lambda_2 = 3$, and $\mu_1 \! = \! \mu_2 =\! 1$.}
			\vspace{-0.0cm}
			\label{fig:twoHop_Erlang3_vard}
		\end{subfigure}
		\caption{Comparison of the upper bounds for varying age limit $d$ in a \textit{two hop} network for different service time distributions.}\label{fig:twoHop_vard}
	\end{figure*}
	
	\begin{figure*}[ht!]
		\centering
		\begin{subfigure}[b]{0.30\textwidth}
			\includegraphics[width = \textwidth]{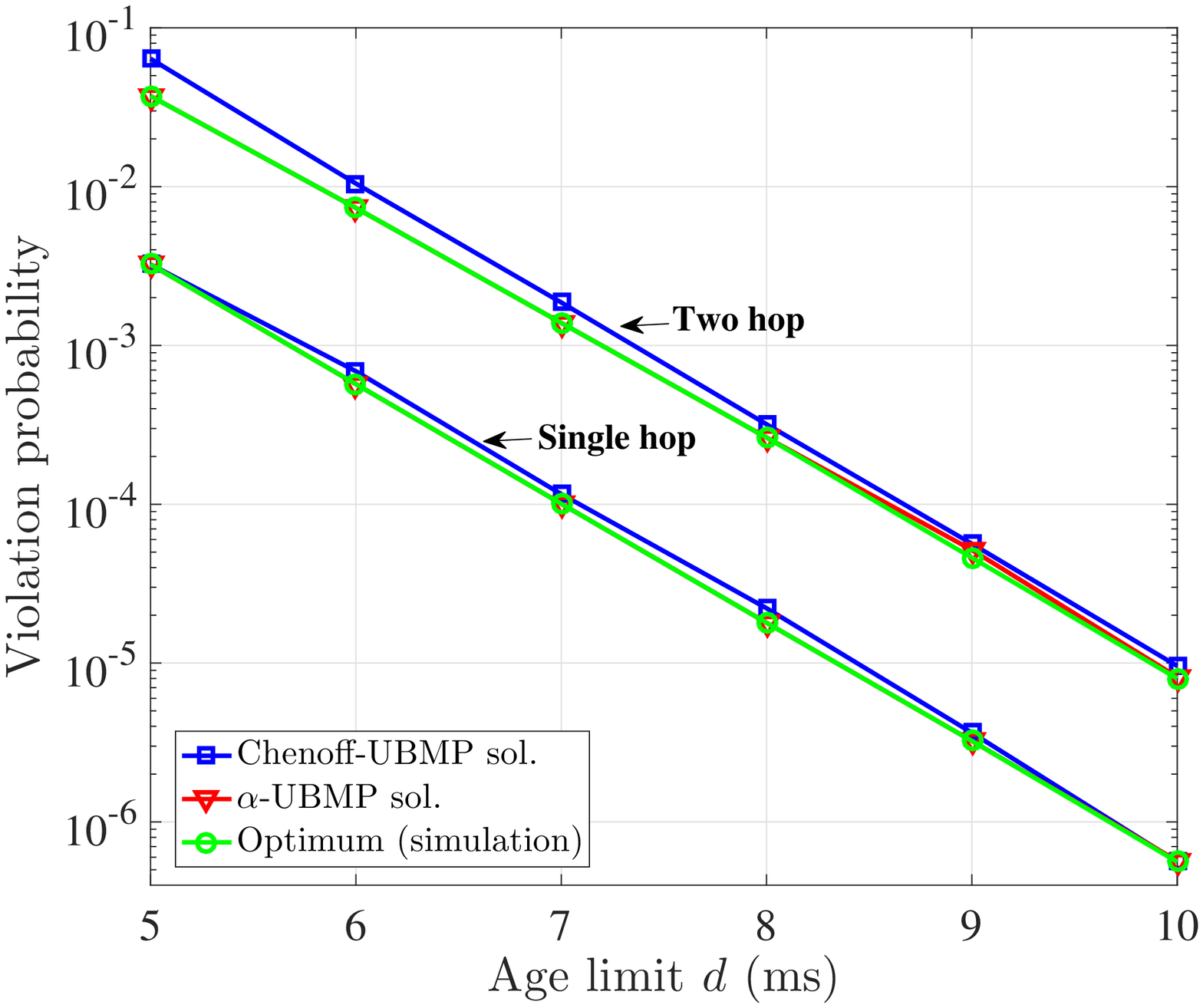}
			\caption{Geometric service with $p_1 = 0.85$ and  $p_2 = 0.9$.}
			\vspace{-0.0cm}
			\label{fig:comparisonSol_Geometric}
		\end{subfigure} \quad 
		\begin{subfigure}[b]{0.30\textwidth}
			\includegraphics[width = \textwidth]{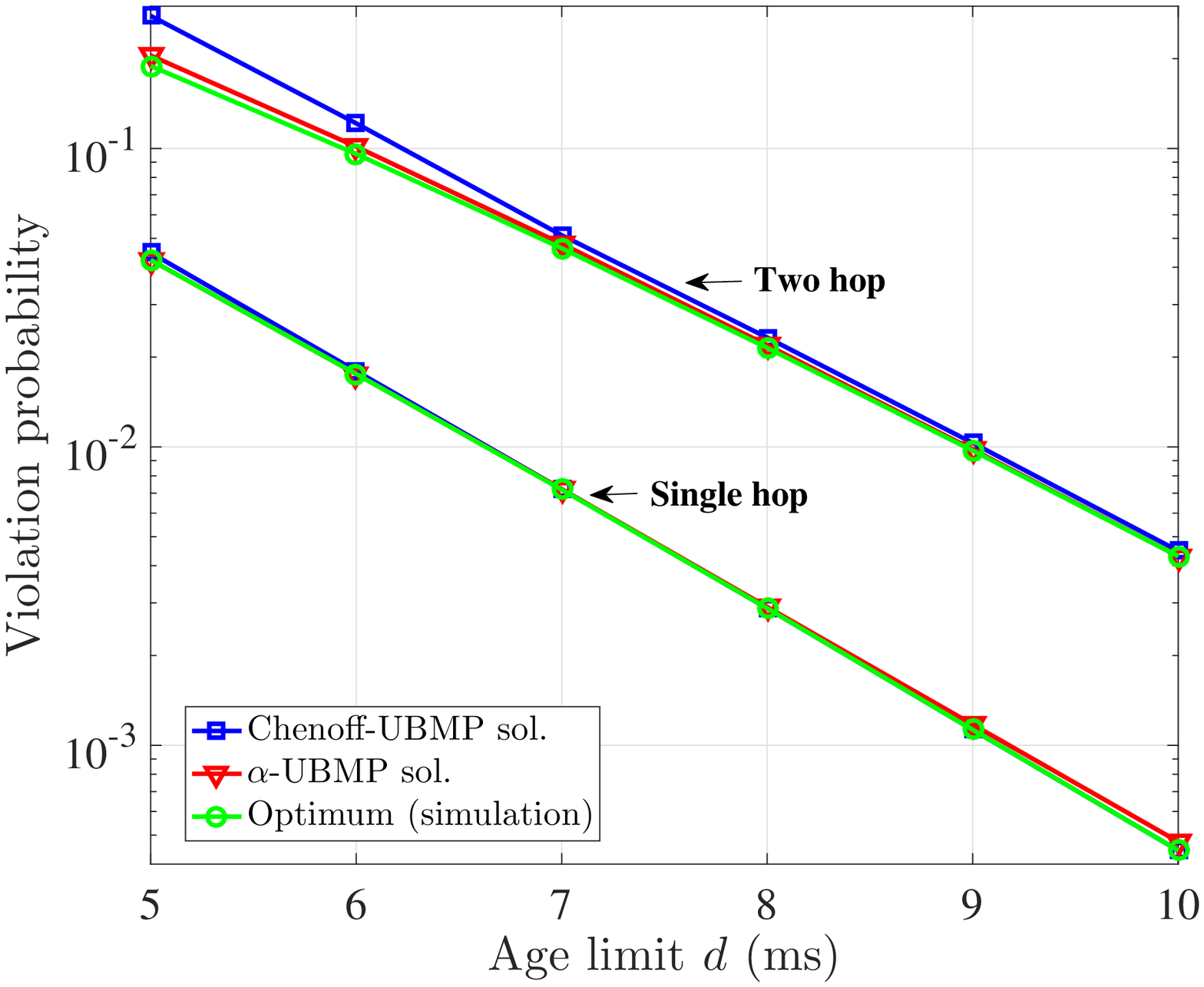}
			\caption{Exponential service with $\mu_1 = \mu_2 = 1$ packets/ms.}
			\vspace{-0.0cm}
			\label{fig:comparisonSol_exp}
		\end{subfigure} \quad 
		\begin{subfigure}[b]{0.30\textwidth}
			\includegraphics[width = \textwidth]{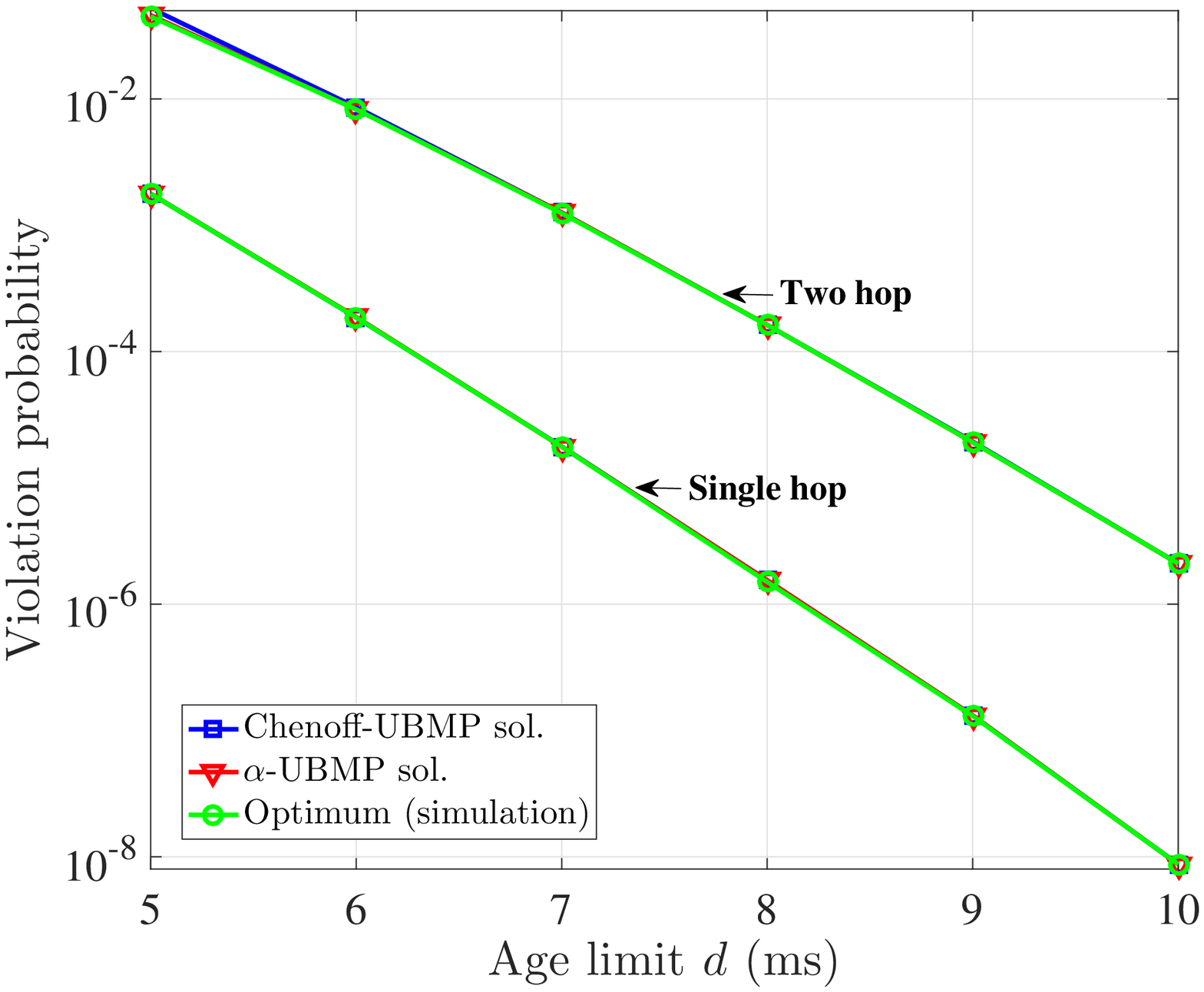}
			\caption{Erlang service with $b_1 = b_2 = 3$, $\lambda_1 = \lambda_2 = 3$, and $\mu_1 \! = \! \mu_2 =\! 1$.}
			\vspace{-0.0cm}
			\label{fig:comparisonSol_Erlang3}
		\end{subfigure}
		\caption{Evaluation of the rate solutions obtained using upper bound minimization for different service time distributions.}\label{fig:comparisonSol}
	\end{figure*}
	
	\subsection{Properties of Upper Bounds}
	\subsubsection{Single Hop}
	In Figures~\ref{fig:singleHop_varR} and~\ref{fig:singleHop_vard}, we present the upper bounds and the simulated AoI violation probability for varying arrival rate $R$ and varying age limit $d$ for different distributions for the single-hop scenario. From Figure~\ref{fig:singleHop_varR}, we observe that the upper bounds and the violation probability have convex nature and a global minimum \jpcolor{(highlighted in black circles)} in the chosen range of $R$. Further, observe that the curvature of the upper bounds approximately follow the curvature of the simulated violation probability around its minimum value and only deviates at higher sampling rate. This is an interesting property as it suggests that a rate that minimizes the upper bound \jpcolor{will be a ``good" rate solution for minimizing} the violation probability. 
	We note that the $\alpha$-relaxed upper bound curves are not continuous because the probability terms $\Phi(v_0,v_1,R)$ involves a floor function, namely, $\lfloor \beta \rfloor$.
	From Figure~\ref{fig:singleHop_vard}, we observe that the decay rates of the upper bounds match closely the decay rate of the violation probability. This further strengthens our statement above that minimizing the upper bounds results in good heuristic rate solutions for the considered range of age limits.
	
	\subsubsection{Two Hop}
	In Figures~\ref{fig:twoHop_varR} and~\ref{fig:twoHop_vard}, we present the upper bounds and the simulated AoI violation probability for varying arrival rate $R$ and varying age limit $d$ for different distributions for the two-hop scenario. We observe similar trends as in the case of single-hop scenario. Nevertheless, the bounds become relatively looser. This can be attributed to the fact that the union bound is applied twice for the two-hop scenario.
	
	Note that for both single-hop and two-hop scenarios $\alpha$-relaxed bound is much lower than the Chernoff bound. Nevertheless, Chernoff bound can be useful for the cases where the exact distribution of the summation of service times is intractable. 
	
	\jpcol{\subsubsection{Service Times with Higher Variance}
		In this section, we study how the upper bounds perform for service time with higher variance. In Figure~\ref{fig:twoHop_Exp_varR_heterogenous}, we consider heterogeneous exponential service times, with $\mu_1 = 0.75$ and $\mu_2 = 1$. We have chosen  $\mu_1 = 0.75$ so that the variance in this case is higher than that of the homogeneous case where both  $\mu_1$ and $\mu_2$ are equal to $1$. 
		We note that the trend persists and that the mismatch in the minima among the three curves in this case becomes greater when lower age limit is considered, i.e., $d=5$ ms. However, the mismatch is much smaller at higher $d$. We also note that compared to the homogeneous server case in Figure~\ref{fig:twoHop_Exp_varR}, heterogeneity does not affect the conclusions regarding system behaviour with respect to AoI. Nevertheless, the performance becomes more dependent on the bottleneck link in this case. 
		
		In Figure~\ref{fig:variance}, we consider hyper-exponential service time distribution with probability density function given by $p\lambda_1e^{-\lambda_1 x} + (1-p)\lambda_2e^{-\lambda_2 x}$. We choose $p = 0.91$, $\lambda_1 = 0.95$, and $\lambda_2 = 2$ such that the mean value is equal to $1$ ms. We note that this distribution has higher variance compared to exponential-service time distribution with mean $1$. For computing the alpha-relaxed upper bound, we set $K=6$ and numerically evaluated the convolution of hyper-exponential  probability distribution functions to obtain values for $\phi(v_0,v_1,R)$.
		
		From both Figures~\ref{fig:twoHop_Exp_varR_heterogenous} and~\ref{fig:variance}, we observe that for the two-hop scenario, for $d = 5$, the mismatch between the heuristic rate solution provided by $\alpha$-UBMP and the optimal rate solution is relatively bigger. Nevertheless,  under these settings, it should be noted that the value of the minimum AoI violation probability is not significantly lower than that  achieved by the heuristic rate solution. Again, the main trends noticed with the other three service distributions, see Figure~\ref{fig:twoHop_varR}, are prevailing here as well. 
		
	}
	
	\begin{figure}
		\centering
		\includegraphics[width = 3in]{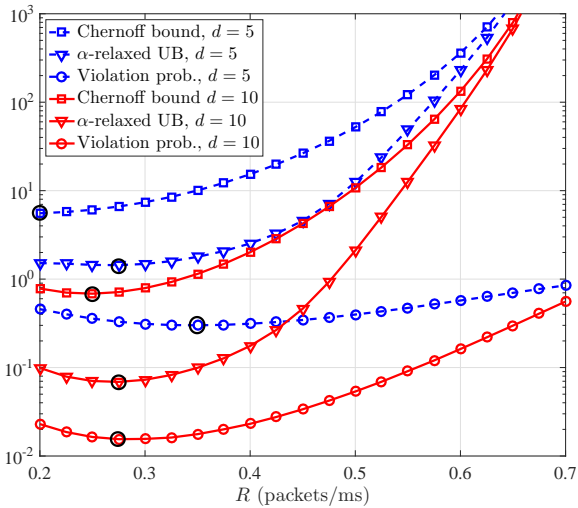}
		\caption{Comparison of the upper bounds for varying arrival rate $R$ in a \textit{two hop} network for heterogeneous exponential service time distributions, with $\mu_1 = 0.75$ packet/ms and $\mu_2=1$ packets/ms.}
		\label{fig:twoHop_Exp_varR_heterogenous}
	\end{figure}

	\begin{figure*}[ht!]
	\centering
	\begin{subfigure}[b]{0.4\textwidth}
		\includegraphics[width=\textwidth]{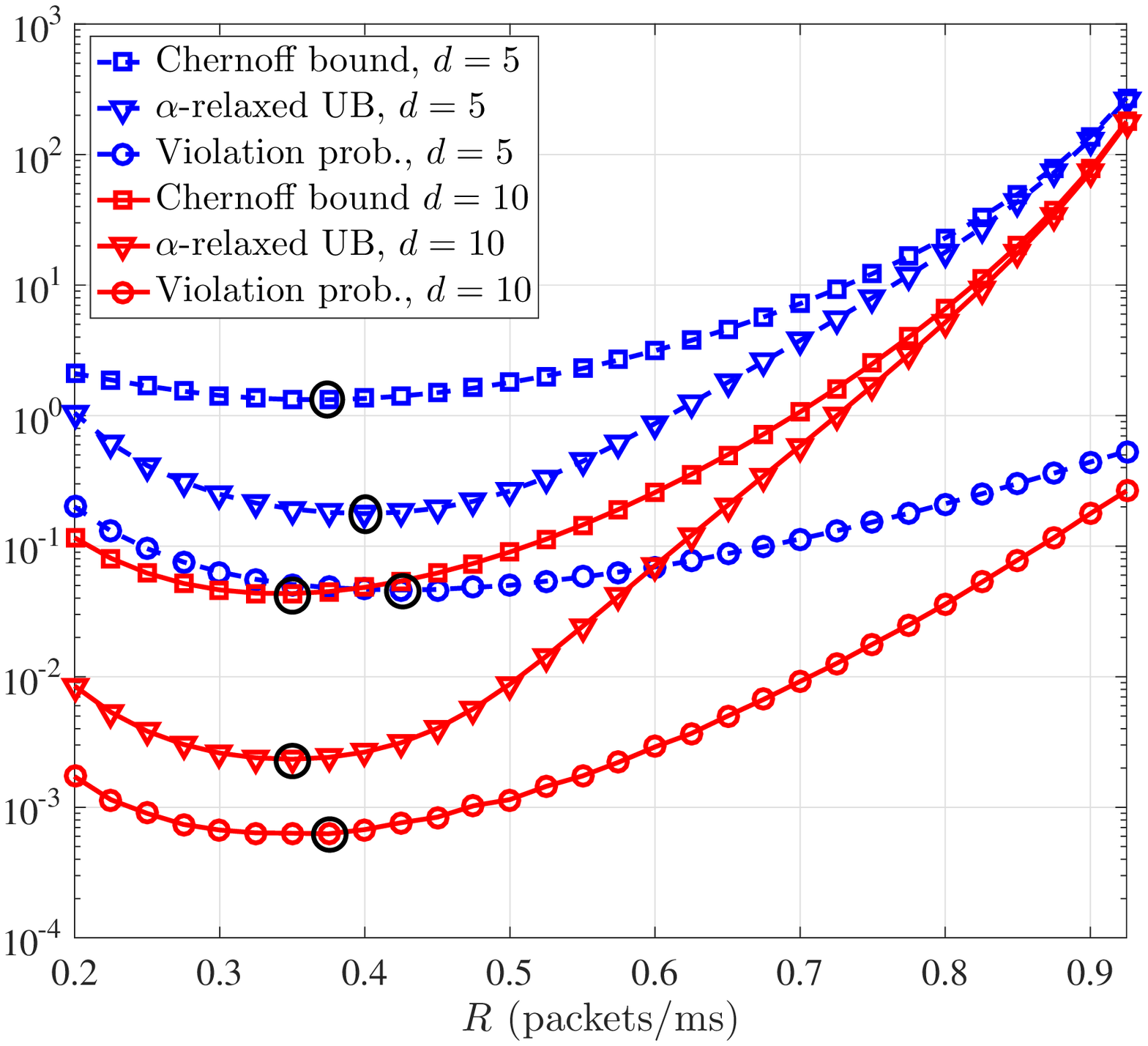}
		\caption{Single hop}
		\vspace{-0.0cm}
		\label{fig:singleHop_HyperExp_varR}
	\end{subfigure} \quad \quad \quad \quad
	\begin{subfigure}[b]{0.4\textwidth}
		\includegraphics[width=\textwidth]{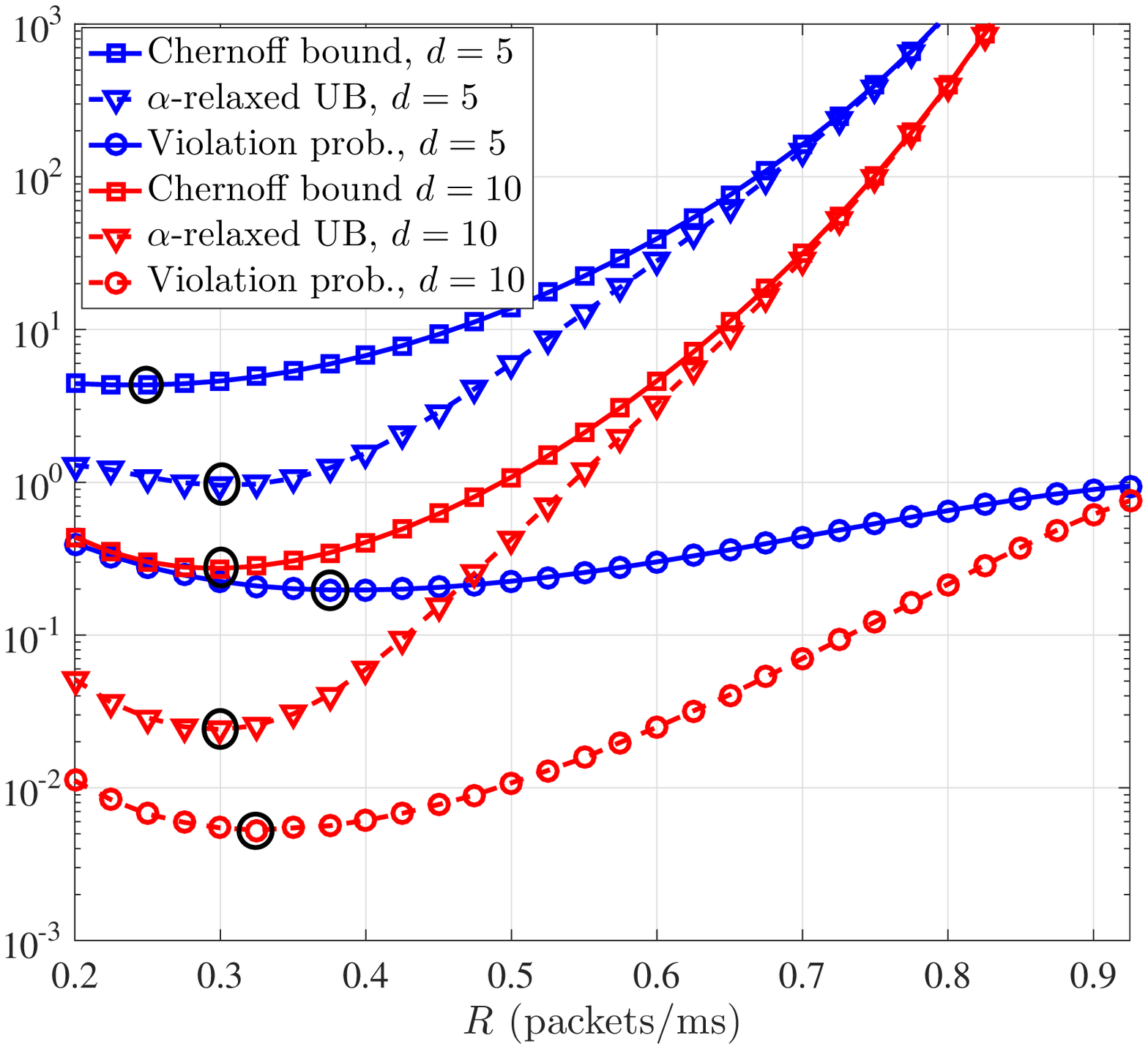}
		\caption{Two hop}
		\vspace{-0.0cm}
		\label{fig:twoHop_HyperExp_varR}
	\end{subfigure}
	\caption{Comparison of the upper bounds for varying arrival rate $R$ for hyper-exponential service-time distribution.}\label{fig:variance}
	\end{figure*}
	
	\subsection{Quality of the Heuristic Solution}
	In Figure~\ref{fig:comparisonSol}, we compare the violation probabilities for rate solutions obtained by solving the UBMPs and the estimated minimum/optimum violation probability obtained by exhaustive search using simulation, for both single-hop and two-hop scenarios. Note that the difference between the violation probabilities achieved by the heuristic rate solutions  and the optimum violation probability is negligible. This suggests that the solutions of the UBMPs are near optimal for $\mathcal{P}$ \jpcolor{for the considered service-time distributions}.  
	This can be attributed to the fact that the upper bounds have decay rate that matches the decay rate of the violation probability as stated before. 
	Although $\alpha$-relaxed upper bound is much lower than Chenoff bound the solutions of $\alpha$-UBMP provide only slightly lower violation probability than that of the Chernoff-UBMP solutions. Thus, Chernoff-UBMP is relatively tractable and the rate solutions provided can be used as first step toward computing close-to-optimal solutions by utilizing additional information about the service distributions. 
	
	\textbf{\textit{Remark 3:}} We note that unlike the time-average age objective, which is minimized at $0.515$ utilization factor ($\lambda_1/\mu_1$) for the D/M/1 queue~\cite{kaul_2012b}, the optimal rate solution and in turn the utilization factor that minimizes AoI violation probability depends on age limit $d$. For a comparison, in Figure~\ref{fig:comparisonSol_exp} the single-hop scenario is equivalent to D/M/1 system and in this case the optimal utilization factors are $\{0.425,0.4,0.4,0.35,0.35,0.35,0.35,0.35,0.35\}$.
	
	\begin{figure*}[ht!]
	\centering
	\begin{subfigure}[b]{0.42\textwidth}
		\includegraphics[width=\textwidth]{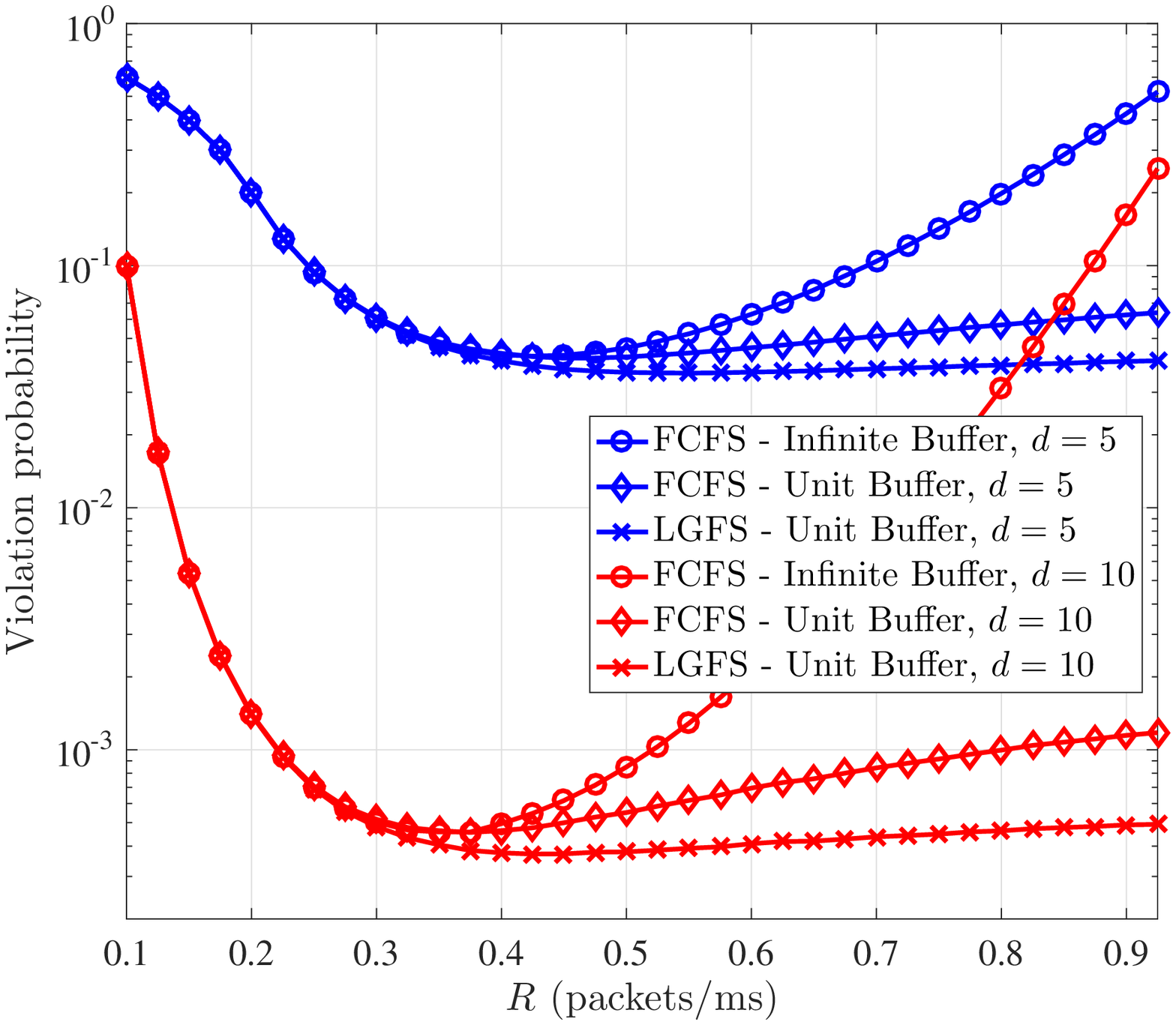}
		\caption{Single hop.}
		\vspace{-0.0cm}
		\label{fig:singleHop_unitBuffer}
	\end{subfigure} \quad \quad \quad 
	\begin{subfigure}[b]{0.42\textwidth}
		\includegraphics[width=\textwidth]{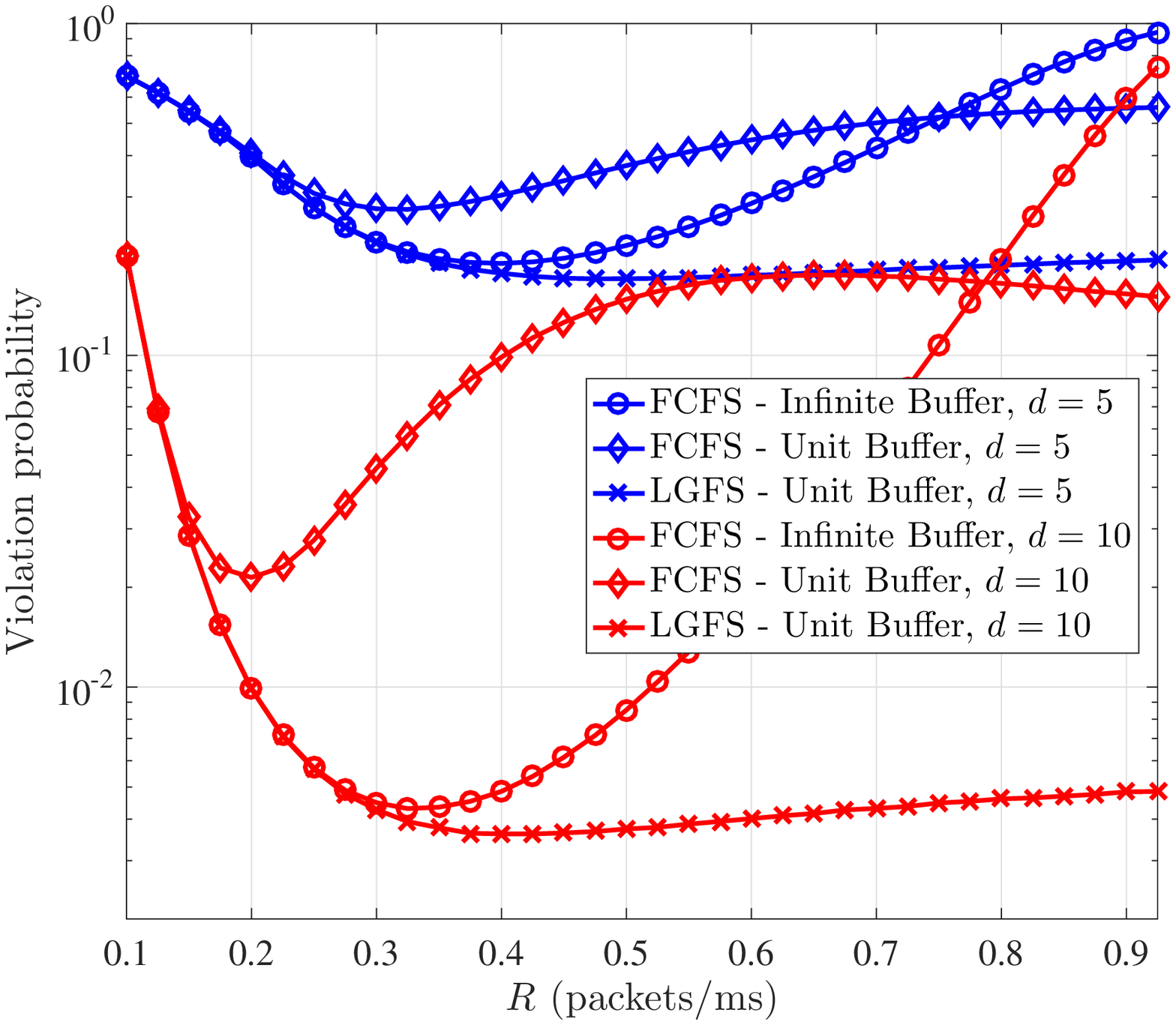}
		\caption{Two hop.}
		\vspace{-0.0cm}
		\label{fig:twoHop_unitBuffer}
	\end{subfigure}
	\caption{Comparison of AoI violation probability achieved under different queue management policies under exponential-service times with rates equal to one packet/sec.}\label{fig:unitBuffer}
	\end{figure*}	
	
	\jpcol{\subsection{Queue Management Policies with Unit Buffer}
		Although this work is dedicated to study AoI distribution for WNCS  under FCFS\footnote{By FCFS we mean  First-Come-First-Serve with infinite buffer.} scheduling, 
		recent research results have shown that considering a unit buffer with queue management policies provide lower AoI statistics in comparison to FCFS with infinite buffer~\cite{Costa_2016,Champati_GG1_2019,Bedewy_2017a,Bedewy2019}. In this section, we investigate this effect by applying two widely referenced (in the context of AoI research) queue management policies, namely,  FCFS-Unit Buffer  and LGFS-Unit Buffer, to every queue in our network and then compare the achieved  AoI violation probability to that we obtained earlier under FCFS. 
		Both of these policies employs a one-packet buffer, however, they differ in that,  whenever the buffer is occupied and a new  packet arrives, the existing packet is kept in the first while it is replaced with the newly arriving packet in the second. 
		In~\cite{Bedewy_2017a}, it was shown that LGFS-Unit Buffer minimizes AoI processes, in stochastic ordering sense, among all non-preemptive service policies for any arrival process and service-time distributions. This implies, for the tandem two-queue system we consider, LGFS-Unit Buffer results in minimum AoI violation probability. Hence, it provides a good reference to measure the performance of other queue management policies against.
		
		In Figures~\ref{fig:singleHop_unitBuffer} and~\ref{fig:twoHop_unitBuffer}, 
		the AoI violation probability  is plotted against the arrival rate $R$ for the FCFS as well as the two unit buffer queue management policies mentioned above, assuming  exponential-service times with rate $\mu = 1$ packet/sec and for two age limits $d=\{5, 10\}$ ms. 
		We observe that, the minimum AoI violation probability under FCFS-Unit Buffer and LGFS-Unit Buffer is comparable to that under FCFS in the single-hop scenario. However, the performance of FCFS-Unit Buffer deteriorates drastically  in the two-hop case compared to the other two, see Figure~\ref{fig:twoHop_unitBuffer}. This can be attributed to the fact that, under FCFS-Unit Buffer, packets that are served at first link may still be dropped when arriving at the second link if its buffer is already occupied. This effect may be exacerbated when more links (hops) are added to the tandem and even more when heterogeneous service processes are present at different links along the cascade where latter links experience higher utilization. This example shows that in tandem-queuing systems, it is not always true that FCFS with finite buffer has lower AoI statistics than that of FCFS with infinite buffer. We contrast this with the performance trends of these policies for a system with parallel servers between the source and the destination~\cite{Bedewy2019}, where it was demonstrated that, under Poisson arrivals and exponential service-times, the minimum values for average AoI and average peak AoI achieved under FCFS with infinite buffer are much higher than that for the case with finite buffer.
		
		Furthermore, in both  scenarios, we observe that the  AoI violation probability under FCFS with infinite queue is quite close to that of LGFS-Unit Buffer at low utilization (i.e., low $R$), and more importantly the minima for both  cases are reasonably close and are achieved around the same arrival rate $R$. The reason for such behaviour can be attributed to the fact that at such low arrival rate the buffer would be empty most of the time mimicking the unit buffer behaviour. In the two-hop scenario, the FCFS would still have  buffer space at the second hop to hold (and not drop as FCFS-Unit Buffer may do) a packet that is successfully forwarded by the first hop, hence, countering the performance deterioration experienced by FCFS-Unit Buffer due to packet drops that we highlighted in the two-hop scenario above. 

		The above observations are quite interesting. They suggest, at least for deterministic arrivals and tandem-queuing system, using FCFS with infinite queue may achieve a minimum AoI violation probability that is reasonably close to the optimum (achieved by LGFS-Unit Buffer among the set of non-preeemptive policies~\cite{Bedewy_2017a}). This also opens an interesting research question for future work: how far  the minimum AoI violation probability achieved under FCFS can be from the optimum?
		

	}

	\section{Conclusion and Future Work}\label{sec:conclusion}
	We provide a general characterization of AoI violation probability for a network with periodic input arrivals. Using this characterization, we formulate an optimization problem $\mathcal P$ to find the optimal input rate which minimizes the AoI violation probability. Further, we show that $\mathcal{P}$ is equivalent to the problem of minimizing the violation probability of the departure time of a tagged arrival $\nR$ over the rate region $[\frac{1}{d},\mu)$. 
	Noting that computing an exact expression for the violation probability is hard, 
	we propose an Upper Bound Minimization Problem (UBMP) and its more computationally tractable versions Chernoff-UBMP and \prob, which result in heuristic rate solutions. We also present the Chernoff-UBMP for N-hop tandem queuing system.
	We solve Chernoff-UBMP and \prob~for single-hop and two-hop scenarios for three service-time distributions, namely, geometric, exponential and Erlang. Numerical results suggest that the rate solutions of \prob~are near optimal for $\mathcal{P}$, demonstrating the efficacy of our method. 
	
	{\color{black} Furthermore, our simulation results suggest that, FCFS performs close to the optimum achieved by LGFS-Unit Buffer and drastically outperforms FCFS-Unit Buffer in a two-hop network with respect to AoI violation probability. 
		This opens up an interesting research question: how far will be the minimum AoI violation probability achieved under FCFS from the optimum?
	}
	Another interesting research direction for future work would be to extend our results to stochastic arrivals. 
	We are also studying the computational complexity for solving \prob~and investigating more efficient solution methods, i.e., by identifying the range of $\alpha$ for which a good heuristic solution for $\mathcal{P}$ can be obtained. 
	{\color{black} Finally, we would like to investigate different queuing disciplines relevant to AoI.}
	

	\appendix
{\allowdisplaybreaks	
	\subsection{Proof of Lemma~\ref{lem:twoHop}}\label{lem:twoHop:proof}
	The violation probability is given by
	\begin{align}\label{eq1:twohop}
	&\P\{D(\nR) > t\} = \P\{D_{2}(\nR) > t\}  \nonumber\\
	&= \P \left\{\max_{0 \leq v_0 \leq \nR}  \left (A_{2}(\nR - v_0) + \sum_{i=0}^{v_0}X_{2}^{\nR - i}\right)  >  t \right\}\nonumber \\
	&= \P \left\{\bigcup\limits_{v_0=0}^{\nR}  \left (A_{2}(\nR - v_0) + \sum_{i=0}^{v_0}X_{2}^{\nR - i}  >  t \right)\right\}\nonumber \\
	&\leq \sum_{v_0=0}^{\nR} \P \left\{A_{2}(\nR - v_0) + \sum_{i=0}^{v_0}X_{2}^{\nR - i}  >  t \right\}.
	\end{align}
	Further, we have 
	\begin{align*}
	&A_{2}(\nR - v_0) = D_{1}(\nR - v_0) \\
	&= \max_{0 \leq v_1 \leq \nR - v_0} \left (A_{1}(\nR - v_0 - v_1) + \sum_{i=0}^{v_1}X_{1}^{\nR - v_0 - i}\right).
	\end{align*}
	Substituting $A_{2}(\nR - v_0)$ and $A_{1}(\nR - v_0 - v_1) = \frac{\nR - v_0 - v_1}{R}$ in~\eqref{eq1:twohop} we obtain
	{\allowdisplaybreaks\begin{align*}
		&\P\{D(\nR) > t\}\nonumber \\
		& \leq \sum_{v_0=0}^{\nR}\!\! \P \Big[\max_{0 \leq  v_1 \leq  \nR - v_0}\! \Big ( \frac{\nR - v_0 - v_1}{R}\! + \! \sum_{i=0}^{v_1} X_{1}^{\nR - v_0 - i}\Big)  \nonumber \\
		& \quad \quad \quad \quad \quad \quad \quad \quad  + \sum_{i=0}^{v_0} X^{N}_{\nR - i}\!  > \! t \Big] \nonumber\\
		&\leq \sum_{v_0=0}^{\nR}\sum_{v_1=0}^{\nR - v_0} \P \Big[\frac{\nR - v_0 - v_1}{R} + \sum_{i=0}^{v_1}X_{1}^{\nR - v_0 - i} \nonumber \\
		& \quad \quad \quad \quad \quad \quad \quad \quad + \sum_{i=0}^{v_0}X_{2}^{\nR - i}  >  t \Big] \nonumber\\
		&= \sum_{v_0=0}^{\nR}\sum_{v_1=0}^{\nR - v_0} \P \left\{\sum_{i=0}^{v_1}X_{1}^{i} + \!\! \sum_{i=0}^{v_0}X_{2}^{i}  >  d \!+\! \frac{v_0 \! + \! v_1 \! - \! 1}{R} \right\}.
		\end{align*}}
	In the second step above, we have used the union bound. In the last step we have used $\nR \leq R(t-d) + 1$. Also, since $X_1$ and $X_2$ are i.i.d., we re-indexed the superscripts of $X_1$ and $X_2$ in the summations. The result follows from the fact that as $t$ goes to infinity $\nR$ goes to infinity.

\subsection{Proof of Theorem~\ref{thm:twoHop:Chernoff}}\label{thm:twoHop:Chernoff:proof}	
We first obtain Chernoff bound for $\Phi(v_0,v_1,R)$. We have
	\begin{align}\label{eq5:twohop}
	&\Phi(v_0,v_1,R) = \P \left\{\sum_{i=0}^{v_1}X_{1}^{i} + \!\! \sum_{i=0}^{v_0}X_{2}^{i}  >  d \!+\! \frac{v_0 \! + \! v_1 \! - \! 1}{R} \right\} \nonumber \\
	&\leq \min_{s>0}  e^{-s( d \!+\! \frac{v_0 \! + \! v_1 \! - \! 1}{R})} \mathbb{E}[e^{s(\sum_{i=0}^{v_1}X_{1}^{i} + \! \sum_{i=0}^{v_0}X_{2}^{i})}]\nonumber\\
	&= \min_{s>0}  e^{-s(d \!+\! \frac{v_0 \! + \! v_1 \! - \! 1}{R})}[M_{1}(s)]^{v_1 + 1}[M_{2}(s)]^{v_0 + 1} \nonumber\\
	&=  \min_{s>0} e^{-s(d \! - \! \frac{1}{R})}M_{1}(s)  M_{2}(s)\beta^{v_1}_1(s)\beta^{v_0}_2(s).
	\end{align}
	Assuming the moment generating function of $X$ exists, in the second step above we have used the Chernoff bound. In the third step above we have used the fact that $X^{k}_{n}$ are i.i.d. for all $k$ and $n$, and in the last step we have used~\eqref{eq:beta}. Using~\eqref{eq5:twohop} in Lemma~\ref{lem:twoHop}, we obtain
	\begin{align}\label{eq6:twohop}
	&\lim_{t \rightarrow \infty} \P\{D(\nR) > t\} \leq  \lim_{\nR \rightarrow \infty} \sum_{v_0=0}^{\nR}\sum_{v_1=0}^{\nR - v_0} \Phi(v_0,v_1,R) \nonumber \\
	&\leq \min_{s>0} e^{-s(d \! - \! \frac{1}{R})}M_{1}(s)  M_{2}(s) \phi(s,\beta_1(s),\beta_2(s)), 
	\end{align}
	where
	\begin{align}\label{eq7:twohop}
	\phi(s,\beta_1(s),\beta_2(s)) = \lim_{\nR \rightarrow \infty} \! \sum_{v_0=0}^{\nR}\sum_{v_1=0}^{\nR - v_0}\!\! \beta^{v_1}_1(s)\beta^{v_0}_2(s).
	\end{align}
	Note that in the second step of~\eqref{eq6:twohop} we have used the fact that for positive quantities sum over minimum is less than or equal to minimum over the sum. In the following lemma we provide a closed form expression for $\phi(s,\beta_1(s),\beta_2(s))$.
	\begin{lemma}\label{lem:infSum}
		For $s \in \mathcal{S}$, 
		\begin{align*}
		\phi(s,\beta_1(s),\beta_2(s))  = \frac{1}{(1 - \beta_1(s))(1 - \beta_2(s))}.
		\end{align*}
	\end{lemma}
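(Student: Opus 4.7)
\textbf{Proof plan for Lemma~\ref{lem:infSum}.}
The plan is to evaluate the double sum as a product of two geometric series, using that $\beta_1(s),\beta_2(s)\in(0,1)$ whenever $s\in\mathcal{S}$. The strict inequality $\beta_k(s)<1$ is immediate from the definition of $\mathcal{S}$ in~\eqref{eq:calS} and~\eqref{eq:beta}; strict positivity follows because $M_k(s)=\mathbb{E}[e^{sX_k^n}]>0$. Hence every term $\beta_1^{v_1}(s)\beta_2^{v_0}(s)$ in the double sum is non-negative, and we are in a setting where monotone convergence (equivalently, Tonelli) applies.

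The first step is to evaluate the inner geometric sum in closed form: for fixed $v_0$,
\begin{align*}
\sum_{v_1=0}^{\hat{n}_\text{R}-v_0}\beta_1^{v_1}(s)=\frac{1-\beta_1^{\hat{n}_\text{R}-v_0+1}(s)}{1-\beta_1(s)}.
\end{align*}
Substituting this into $\phi$ and pulling the factor $1/(1-\beta_1(s))$ out gives
\begin{align*}
\phi(s,\beta_1(s),\beta_2(s))=\frac{1}{1-\beta_1(s)}\lim_{\hat{n}_\text{R}\to\infty}\sum_{v_0=0}^{\hat{n}_\text{R}}\beta_2^{v_0}(s)\bigl(1-\beta_1^{\hat{n}_\text{R}-v_0+1}(s)\bigr).
\end{align*}

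The second step is to split this sum into two pieces. The first piece $\sum_{v_0=0}^{\hat{n}_\text{R}}\beta_2^{v_0}(s)$ converges to $1/(1-\beta_2(s))$ as $\hat{n}_\text{R}\to\infty$, giving the desired product form. I would then show the residual piece $\sum_{v_0=0}^{\hat{n}_\text{R}}\beta_2^{v_0}(s)\beta_1^{\hat{n}_\text{R}-v_0+1}(s)$ vanishes in the limit. A clean bound is
\begin{align*}
\sum_{v_0=0}^{\hat{n}_\text{R}}\beta_2^{v_0}(s)\beta_1^{\hat{n}_\text{R}-v_0+1}(s)\le \beta_{\max}^{\hat{n}_\text{R}+1}(s)\,(\hat{n}_\text{R}+1),
\end{align*}
where $\beta_{\max}(s)\triangleq\max(\beta_1(s),\beta_2(s))<1$, and the right-hand side tends to $0$ since the geometric decay dominates the linear factor. (If $\beta_1(s)\neq\beta_2(s)$ one may instead compute the residual exactly as a geometric sum in $(\beta_2/\beta_1)(s)$; both approaches work.)

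Combining the two pieces yields
\begin{align*}
\phi(s,\beta_1(s),\beta_2(s))=\frac{1}{(1-\beta_1(s))(1-\beta_2(s))},
\end{align*}
which is the claim. I do not expect any real obstacle: the argument is an exchange-of-limits/geometric-series computation enabled entirely by $\beta_1(s),\beta_2(s)\in(0,1)$ for $s\in\mathcal{S}$. The only mild care needed is justifying the vanishing of the cross term, which I handle via the uniform bound above.
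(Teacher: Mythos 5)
Your proposal is correct and follows essentially the same route as the paper's own proof: evaluate the inner geometric sum in closed form, split off the main term converging to $1/((1-\beta_1(s))(1-\beta_2(s)))$, and show the residual cross term vanishes by bounding it by $(\hat{n}_\text{R}+1)\,\beta_{\max}^{\hat{n}_\text{R}+1}(s)\to 0$. One small point in your favor: your choice $\beta_{\max}(s)=\max(\beta_1(s),\beta_2(s))$ is the correct envelope for upper-bounding each summand, whereas the paper's appendix writes $\min(\beta_1(s),\beta_1(s))$ at that step, which is evidently a typo for the maximum.
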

	\begin{proof}
		Recall that $\beta_1(s) < 1$ and $\beta_1(s) < 1$, for all $s \in \mathcal{S}$. Using this, we obtain
		\begin{align*}
		&\phi(s,\beta^{v_1}_1,\beta^{v_0}_2) = \lim_{\nR \rightarrow \infty} \! \sum_{v_0=0}^{\nR}\sum_{v_1=0}^{\nR - v_0}\!\! \beta^{v_1}_1(s)\beta^{v_0}_2(s) \\
		&= \lim_{\nR \rightarrow \infty} \! \sum_{v_0=0}^{\nR} \beta^{v_0}_2(s) \sum_{v_1=0}^{\nR - v_0}\!\! \beta^{v_1}_1(s) \\
		&= \lim_{\nR \rightarrow \infty} \! \sum_{v_0=0}^{\nR} \beta^{v_0}_2 \cdot \frac{(1-\beta^{\nR - v_0 + 1}_1(s))}{1-\beta_1(s)} \\
		&= \lim_{\nR \rightarrow \infty} \! \sum_{v_0=0}^{\nR}\left [ \frac{\beta^{v_0}_2}{1-\beta_1(s)} -  \frac{\beta^{v_0}_2 (s) \beta^{\nR - v_0 + 1}_1(s)}{1-\beta_1(s)}\right] \\
		&=\! \frac{1}{(1 \! -\! \beta_1(s))(1\! -\! \beta_2(s))}\! -\! \lim_{\nR \rightarrow \infty} \! \sum_{v_0=0}^{\nR}\!\frac{\beta^{v_0}_2 \!(s) \beta^{\nR - v_0 + 1}_1\!(s)}{1-\beta_1(s)}
		\end{align*}
		It is now sufficient  to show that the summation term above is equal to zero. We first note that the summation is non-negative since $0 \leq \beta_1(s) < 1$ and $0 \leq \beta_1(s) < 1$. Let $\beta(s) = \min(\beta_1(s),\beta_1(s))$, then we have
		\begin{align*}
		&\lim_{\nR \rightarrow \infty} \! \sum_{v_0=0}^{\nR}\frac{\beta^{v_0}_2 (s) \beta^{\nR - v_0 + 1}_1(s)}{1-\beta_1(s)} \leq \lim_{\nR \rightarrow \infty} \! \sum_{v_0=0}^{\nR}\frac{\beta^{\nR + 1}(s)}{1-\beta_1(s)} \\
		&= \lim_{\nR \rightarrow \infty}  \frac{(\nR + 1)}{ \beta^{-(\nR + 1)}(s)(1-\beta_1(s))} \\
		&= \lim_{\nR \rightarrow \infty} \frac{1}{ \beta^{-(\nR + 1)}(s)(- \log \beta(s))(1-\beta_1(s))} = 0.
		\end{align*}
		In the third step above we have used L'Hospital's Rule. Since the summation is non-negative and is less than or equal to zero, it should be equal to zero. 
	\end{proof}
	It is easy to see that if $s \notin \mathcal{S}$, then $\phi(s,\beta_1(s),\beta_2(s))$ will be equal to infinity. Therefore, using~\eqref{eq6:twohop} and Lemma~\ref{lem:infSum}, we obtain
	\begin{align*}
	&\lim_{t \rightarrow \infty} \P\{D(\nR) > t\} \\
	&\leq \min_{s>0} e^{-s(d \! - \! \frac{1}{R})}M_{1}(s)  M_{2}(s) \phi(s,\beta_1(s),\beta_2(s))\\
	&= \min_{s \in \mathcal{S}} e^{-s(d \! - \! \frac{1}{R})} \cdot \frac{M_{1}(s)  M_{2}(s)}{(1 - \beta_1(s))(1 - \beta_2(s))}.
	\end{align*}
	Hence the result is proven.
	
	\subsection{Proof of Theorem~\ref{thm:twohop:alphaUB}}\label{thm:twohop:alphaUB:proof}
	From Lemma~\ref{lem:twoHop} we have 
	\begin{align}\label{lem4:eq1}
	&\lim_{t \rightarrow \infty} \P\{D(\nR) > t\} \leq \lim_{\nR \rightarrow \infty} \sum_{v_0=0}^{\nR}\sum_{v_1=0}^{\nR - v_0} \Phi(v_0,v_1,R) \nonumber \\
	&= \lim_{\nR \rightarrow \infty} \sum_{v_0=0}^{K - 1}  \sum_{v_{1}=0}^{K-1}\!\! \Phi(v_0,v_1,R) +  \Phi_1(K) + \Phi_2(K),
	\end{align}
	where
	\begin{align*}
	\Phi_1(K) &= \lim_{\nR \rightarrow \infty} \sum_{v_0=0}^{K-1}  \sum_{v_{1}=K}^{\nR - v_0}\Phi(v_0,v_1,R) \\
	\Phi_2(K) &= \lim_{\nR \rightarrow \infty} \sum_{v_0=K}^{\nR}  \sum_{v_{1}=0}^{\nR - v_0}\Phi(v_0,v_1,R).
	\end{align*}
	In the following we use the Chernoff bound for $\Phi(v_0,v_1,R)$, given in~\eqref{eq5:twohop}, to derive bounds for $\Phi_1(K)$ and $\Phi_2(K)$. We have
	\begin{align}\label{lem4:eq2}
	\Phi_1(K) \leq \min_{s>0} e^{-s(d \! - \! \frac{1}{R})}M_{1}(s)  M_{2}(s) \phi_1(s,\beta_1(s),\beta_2(s)),
	\end{align}
	where
	\begin{align}\label{lem4:eq3}
	&\phi_1(s,\beta_1(s),\beta_2(s)) = \lim_{\nR \rightarrow \infty} \! \sum_{v_0=0}^{K-1}\sum_{v_1=0}^{\nR - v_0}\!\! \beta^{v_1}_1(s)\beta^{v_0}_2(s) \nonumber\\
	&= \lim_{\nR \rightarrow \infty} \! \sum_{v_0=0}^{K-1} \beta^{v_0}_2(s) \frac{\beta^K_1(s)(1-\beta^{\nR-v_0+1}_1(s))}{1-\beta_1(s)} \nonumber\\
	&= \frac{\sum_{v_0=0}^{K-1}\!\beta^{v_0}_2(s) \beta_1^K\!(s)}{1-\beta_1\!(s)}\! -\!\! \lim_{\nR \rightarrow \infty} \!\! \sum_{v_0=0}^{K-1} \!\!\frac{\beta^{v_0}_2\!(s)\beta^{\nR-v_0+1+ K}_1\!(s)}{1-\beta_1(s)} \nonumber\\
	&= \frac{(1-\beta^K_2(s))\beta_1^K(s)}{(1-\beta_1(s))(1-\beta_2(s))}, \text{ for } s \in \mathcal{S}.
	\end{align}
	The second term in the third step above vanishes as $\beta_1(s) < 1$ for $s \in \mathcal{S}$. Using~\eqref{lem4:eq3} in~\eqref{lem4:eq2}, we obtain
	\begin{align}\label{lem4:eq4}
	\Phi_1(K)\! \leq \! \min_{s\in \mathcal{S}} e^{-s(d \! - \! \frac{1}{R})}\! M_{1}(s)  M_{2}(s) \frac{(1\!-\!\beta^K_2\!(s))\beta_1^K\!(s)}{(1\!-\!\beta_1\!(s))(1\!-\!\beta_2\!(s))}. 
	\end{align}
	
	Again, substituting~\eqref{eq5:twohop} in $\Phi_2(K)$, we obtain
	\begin{align}\label{lem4:eq5}
	\Phi_2(K) \leq \min_{s>0} e^{-s(d \! - \! \frac{1}{R})}M_{1}(s)  M_{2}(s) \phi_2(s,\beta_1(s),\beta_2(s)),
	\end{align}
	where
	\begin{align*}
	&\phi_2(s,\beta_1(s),\beta_2(s)) = \lim_{\nR \rightarrow \infty} \sum_{v_0=K}^{\nR}  \sum_{v_{1}=0}^{\nR - v_0}\!\! \beta^{v_1}_1(s)\beta^{v_0}_2(s).
	\end{align*}
	Using similar analysis as in Lemma~\ref{lem:infSum}, we obtain
	\begin{align}\label{lem4:eq6}
	\phi_2(s,\beta_1(s),\beta_2(s)) = \frac{\beta^K_2(s)}{(1\!-\!\beta_1\!(s))(1\!-\!\beta_2\!(s))},\!\! \text{ for } s \in \mathcal{S}.
	\end{align}
	Using~\eqref{lem4:eq6} in~\eqref{lem4:eq5}, we obtain
	\begin{align}\label{lem4:eq7}
	\Phi_2(K)\! \leq \!\min_{s\in \mathcal{S}} e^{-s(d \! - \! \frac{1}{R})}\! M_{1}(s)  M_{2}(s) \frac{\beta^K_2(s)}{(1\!-\!\beta_1\!(s))(1\!-\!\beta_2\!(s))}. 
	\end{align}
	Finally, substituting~\eqref{lem4:eq4} and~\eqref{lem4:eq7} in~\eqref{lem4:eq1} we obtain the result.
}
		
\begin{IEEEbiography}[{
		\includegraphics[width=1in,height=1.25in,clip,keepaspectratio]{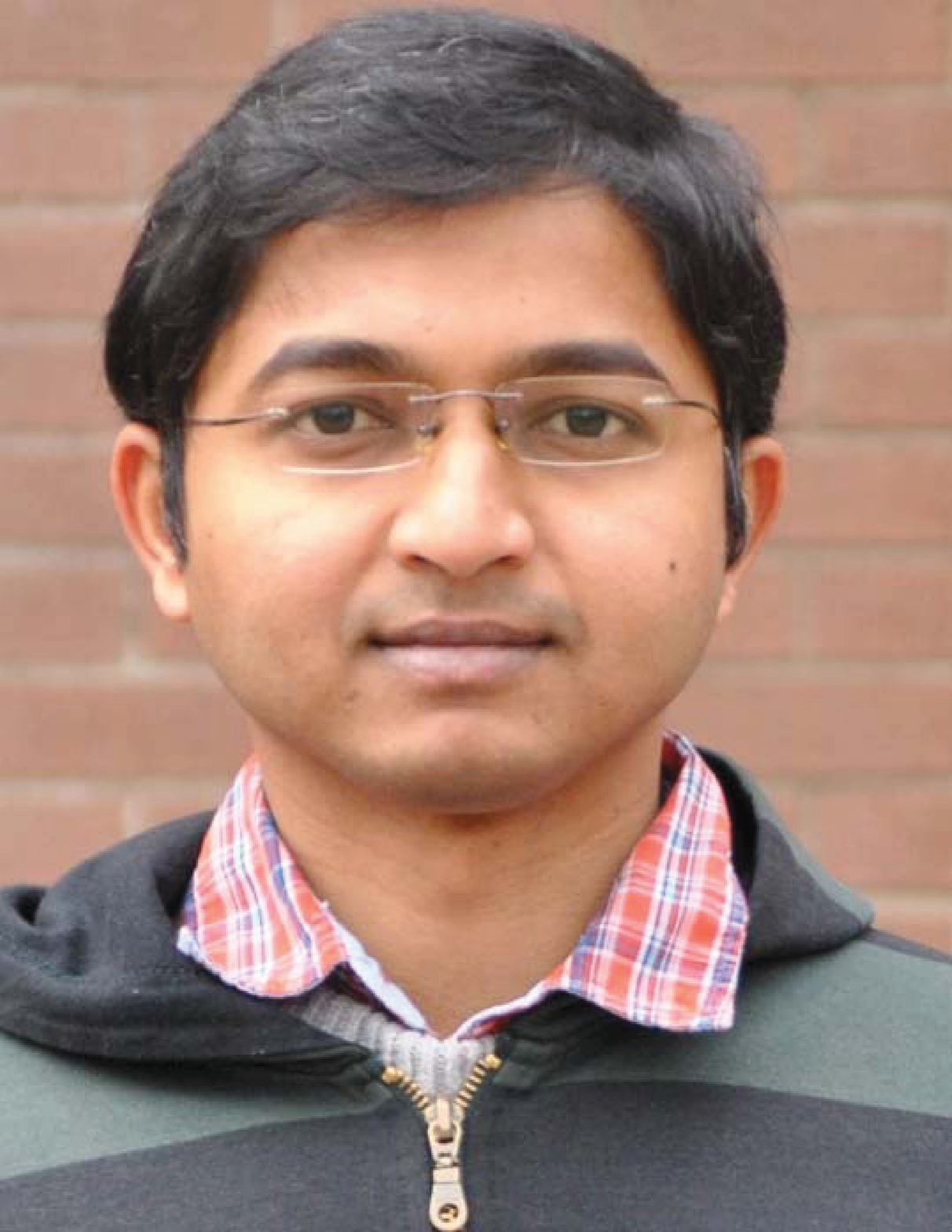}
	}]{Jaya Prakash Champati} is a post-doctoral researcher from the division of Information Science and Engineering, EECS, KTH Royal Institute of Technology, Sweden. He finished his PhD in Electrical and Computer Engineering, University of Toronto, Canada in 2017. He obtained his master of technology degree from Indian Institute of Technology (IIT) Bombay, India, and  bachelor of technology degree from National Institute of Technology Warangal, India. His general research interest is in the design and analysis of algorithms for scheduling problems that arise in networking and information systems. Currently, his focus is on freshness and delay analysis for time-critical control applications in Cyber-Physical Systems (CPS) and the Internet of Things (IoT). In the past, he worked on task scheduling and job assignment problems with motivations from computational offloading in edge computing systems. Prior to joining PhD he worked at Broadcom Communications, where he was involved in developing the LTE MAC layer. He was a recipient of the best paper award at IEEE National Conference on Communications, India, 2011.
\end{IEEEbiography}
\begin{IEEEbiography}
	[{\includegraphics[width=1in,height=1.20in,clip,keepaspectratio]{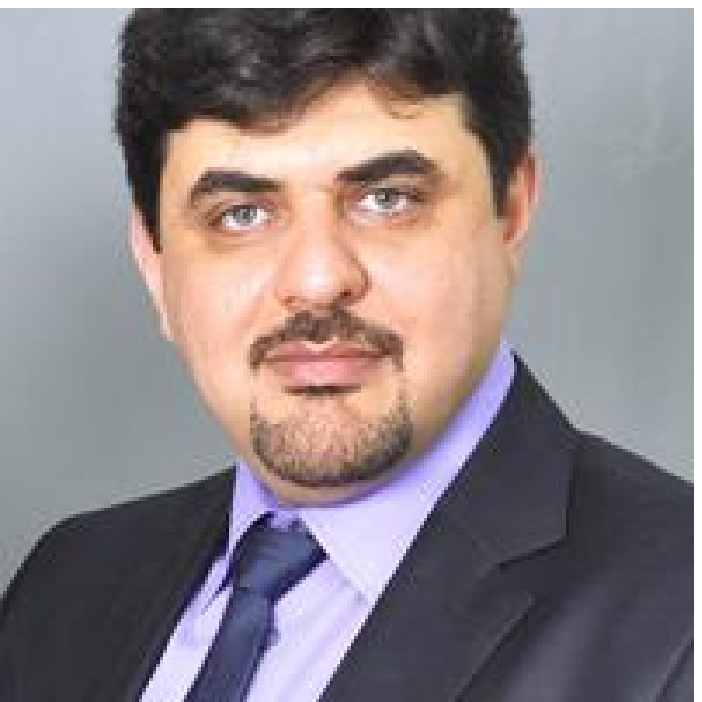}}]
	{Hussein Al-zubaidy} (S07M’11SM’16) received the Ph.D. degree in electrical and computer engineering
	from Carleton University, Ottawa, ON, Canada, in 2010. He was a Post-Doctoral Fellow with the
	Department of Electrical and Computer Engineering, University of Toronto, Toronto, ON, Canada, from
	2011 to 2013. In the Fall of 2013, he joined the School of Electrical Engineering (EES) at the Royal
	Institute of Technology (KTH), Stockholm, Sweden, as a Post-Doctoral Fellow. Since Fall 2015, he has
	been a Senior Researcher with EES at the Royal 	Institute of Technology (KTH), Stockholm, Sweden. Dr. Al-Zubaidy is
	the recipient of many honors and awards, including the Ontario Graduate Scholarship (OGS), NSERC Visiting Fellowship, NSERC Summer Program
	in Taiwan, OGSST, and NSERC Post-Doctoral Fellowship.
\end{IEEEbiography}
\begin{IEEEbiography}
	[{\includegraphics[width=1in,height=1.20in,clip,keepaspectratio]{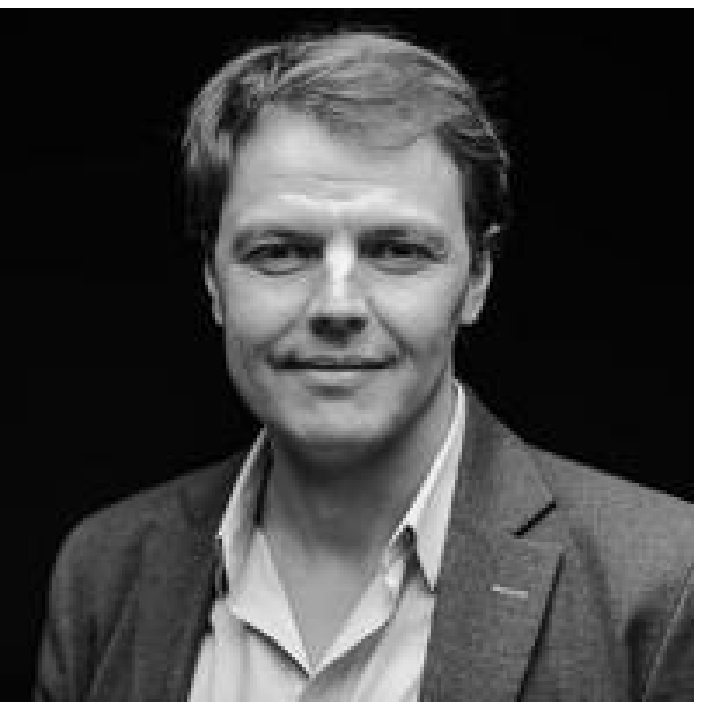}}]
	{James Gross} received his Ph.D. degree from TU Berlin in 2006. From 2008-2012 he was Assistant Professor and head of the Mobile Network Performance Group at RWTH Aachen University, as well as a member of the DFG-funded UMIC Research Centre of RWTH. Since November 2012, he has been with the Electrical Engineering and Computer Science School, KTH Royal Institute of Technology, Stockholm, where he is professor for machine-to-machine communications. He served as Director for the ACCESS Linnaeus Centre from 2016 to 2019, while he is currently a member of the board of KTHs Innovative Centre for Embedded Systems. His research interests are in the area of mobile systems and networks, with a focus on critical machine-to-machine communications, cellular networks, resource allocation, as well as performance evaluation methods. He has authored about 150 (peer-reviewed) papers in international journals and conferences. His work has been awarded multiple times, including the Best Paper Award at ACM MSWiM 2015, the Best Demo Paper Award at IEEE WoWMoM 2015, the Best Paper Award at IEEE WoWMoM 2009, and the Best Paper Award at European Wireless 2009. In 2007, he was the recipient of the ITG/KuVS dissertation award for his Ph.D. thesis. He is also co-founder of R3 Communications GmbH, a Berlin-based start-up in the area of ultrareliable low-latency wireless networking for industrial automation.
\end{IEEEbiography}

\end{document}